\newtheorem{theorem}{Theorem}
\newtheorem{define}{Definition}
\def\BibTeX{{\rm B\kern-.05em{\sc i\kern-.025em b}\kern-.08em
    T\kern-.1667em\lower.7ex\hbox{E}\kern-.125emX}}
\begin{document}

\title{Verifiable, Efficient and Confidentiality-Preserving Graph Search with Transparency}

\author{\IEEEauthorblockN{Qiuhao Wang, Xu Yang, Yiwei Liu, Saiyu Qi*, Hongguang Zhao, Ke Li, Yong Qi}

\IEEEauthorblockA{School of Computer Science and Technology, Xi'an Jiaotong University, China}
}

\maketitle

\begin{abstract}
Graph databases have garnered extensive attention and research due to their ability to manage relationships between entities efficiently. Today, many graph search services have been outsourced to a third-party server to facilitate storage and computational support. Nevertheless, the outsourcing paradigm may invade the privacy of graphs. PeGraph is the latest scheme achieving encrypted search over social graphs to address the privacy leakage, which maintains two data structures \emph{XSet} and \emph{TSet} motivated by the OXT technology to support encrypted conjunctive search. However, PeGraph still exhibits limitations inherent to the underlying OXT. It does not provide transparent search capabilities, suffers from expensive computation and result pattern leakages, and it fails to support search over dynamic encrypted graph database and results verification. In this paper, we propose \textit{SecGraph} to address the first two limitations, which adopts a novel system architecture that leverages an SGX-enabled cloud server to provide users with secure and transparent search services since the secret key protection and computational overhead have been offloaded to the cloud server. Besides, we design an LDCF-encoded \emph{XSet} based on the Logarithmic Dynamic Cuckoo Filter to facilitate efficient plaintext computation in trusted memory, effectively mitigating the risks of result pattern leakage and performance degradation due to exceeding the limited trusted memory capacity. Finally, we design a new dynamic version of \emph{TSet} named \emph{Twin-TSet} to enable conjunctive search over dynamic encrypted graph database. In order to support verifiable search, we further propose \textit{VSecGraph}, which utilizes a procedure-oriented verification method to verify all data structures loaded into the trusted memory, thus bypassing the computational overhead associated with the client's local verification. We have demonstrated the confidentiality preservation property of both schemes through rigorous security analysis. Experiment results show that \mbox{\emph{SecGraph}} yields up to 208$\times$ improvement in search time compared with \mbox{PeGraph}, and the communication cost in PeGraph is up to 3121$\times$ larger than that in \emph{SecGraph}. 
\end{abstract}

\section{Introduction}
Graphs are widely used to model structural data in various domains (e.g., social networks\cite{DBLP:conf/dasfaa/WangZYSX21} and recommender systems \cite{DBLP:conf/wsdm/GaoW0022}) due to their excellent capability in characterizing complex interconnections among entities. With the broad use and growing scale of graph data, many institutes tend to outsource their graph database to cloud service providers to free up storage resources and pursue stronger computational capabilities\cite{DBLP:journals/tifs/WangZJY22}. However, 
since graph database often contains sensitive information (e.g., people’s social connections and interests), outsourcing graph database to an untrusted cloud server raises serious concerns about privacy breaches\cite{icloud, DBLP:journals/tnse/YangWQLWZQ24, DBLP:conf/ccs/Bost16, DBLP:conf/ccs/ChamaniPPJ18, DBLP:conf/ndss/DemertzisCPP20}, and malicious attacks\cite{DBLP:journals/tc/WangCHYX15, DBLP:conf/esorics/ZhangWWS019, DBLP:journals/tdsc/GuoLTCL24, TC/Li25}. 

Encryption is a typical method to protect the privacy of outsourced data. By encrypting the data before uploading, the cloud can only see ciphertexts. However, the encryption technique also conceals the inherent structure of graph database, making search over encrypted graph database unavailable. In response, several works\cite{DBLP:journals/tifs/WangZJY22, DBLP:conf/asiacrypt/ChaseK10, DBLP:conf/trustcom/TengCSB16, DBLP:journals/tdsc/DuJWCZ23} explore a type of advanced encryption technique named Searchable symmetric encryption (SSE)\cite{DBLP:conf/sp/SongWP00, DBLP:conf/ccs/CurtmolaGKO06} to enable privacy-preserving search over encrypted graph database. Unfortunately, these works only support graph single-keyword searches (e.g., retrieving all neighboring vertices of a specific vertex), which constrains their application scenarios. Recently, Wang et al.\cite{DBLP:journals/tifs/WangZJY22} proposed PeGraph to support sophisticated conjunctive search and fuzzy search, which is useful in fraud detection and recommendation systems. To the best of our knowledge, PeGraph is the state-of-the-art scheme to support sophisticated graph search over the encrypted graph database held on the cloud. 

Unfortunately, although PeGraph largely extends search functionality over encrypted graph database, its usage of complex SSE techniques raises several issues. Firstly, PeGraph has a tension with the transparency of plaintext graph search. Specifically, PeGraph requires users to deploy an additional client to execute cryptographic operations and have professional cryptography knowledge to manage their secret keys properly. These requirements largely increase operational complexity at the user side, degrading the advantage of outsourcing a graph database to the cloud server. As a result, the first challenge is how to support sophisticated graph search over the encrypted graph database with transparency. Secondly, PeGraph raises efficiency and privacy leakage issues from the underlying SSE technique, and it cannot support search over dynamic encrypted graphs since the underlying SSE technique is a static design. As a result, the second challenge is how to enable secure yet efficient sophisticated graph search on dynamic encrypted graph database.

% : (1) PeGraph incurs high computation overhead to support conjunctive search, which involves a large number of expensive exponentiation modulo (\textit{Exp}) operations; (2) PeGraph inherits result pattern leakages from the underlying SSE technique\cite{DBLP:conf/ccs/LaiPSLMSSLZ18}; (3) PeGraph cannot support search over dynamic encrypted graphs since the underlying SSE technique is a static design. As a result, the second challenge is how to enable secure yet efficient sophisticated graph search on dynamic encrypted graph database. 

In addition, PeGraph assumes that the search results returned by the cloud server are correct. However, as identified in verifiable SSE (VSSE)\cite{DBLP:conf/ccs/WangSLQ022, DBLP:conf/icc/ChaiG12, DBLP:journals/tnse/YangWQLWZQ24, DBLP:journals/tc/WangCHYX15, DBLP:conf/esorics/ZhangWWS019, TC/Li25, DBLP:conf/infocom/SunLLH015}, a cloud server may return incorrect or incomplete results due to various factors including software/hardware failures, server compromises, and economic benefits. Unfortunately, verifying the results of the conjunctive search is not easy. To enable verifiable conjunctive search, current VSSE techniques either sacrifice data update ability \cite{DBLP:journals/fcsc/GanLWYSHZW22} or incur expensive overhead at the user side and contradict transparency\cite{DBLP:journals/tdsc/GuoLTCL24}. As a result, the third challenge is how to enable verifiable yet efficient sophisticated search on encrypted graph database without compromising transparency.

\begin{table*}[h] %开始一个表格environment，表格的位置是h,here。
\centering
\scriptsize
\caption{Comparison with the related work} %显示表格的标题
\label{Comparison with the related work}
\renewcommand{\arraystretch}{1.2}

\begin{tabular}{c|c|c|c|c|c|c|c|c}
\toprule[1pt] 

\textbf{Schemes} & \textbf{Search Type} & \multicolumn{2}{c|}{\textbf{Search cost}} & \textbf{Round} & \textbf{Trans-}  & \textbf{U} & \textbf{V} & \textbf{Privacy}
\\
\cline{3-4}
& & \textbf{Client} & \textbf{Server} & & \textbf{parency} & & \\ %用&来分隔单元格的
\hline 
\multicolumn{9}{c}{Cryptographic Scheme}\\
\hline
PAGB\cite{DBLP:journals/tdsc/DuJWCZ23} & single-keyword & $\mathcal{O}(1)\textit{Enc}$ & $\mathcal{O}(\textit{BFS}_K)$ & - & \ding{55} & \ding{51} & \ding{51} & - 
\\
GShield\cite{DBLP:journals/tkde/DuWWCJM22} & shortest path & $\mathcal{O}(1)\textit{Prf}$ & $\mathcal{O}(\textit{Dij}_e)$ & - & \ding{55} & \ding{51} & \ding{55} & $\mathbb{F}_4$
\\
PeGraph\cite{DBLP:journals/tifs/WangZJY22} & conjunctive & $\mathcal{O}(c\cdot(n\textit{-}1))\textit{Exp}+ \mathcal{O}(1)\textit{Prf}$ & $\mathcal{O}(c\cdot(n\textit{-}1))\textit{Exp}+ \mathcal{O}(c\cdot(n\text{-}1))\textit{BF}$ & 2 & \ding{55} & \ding{55} & \ding{55} & -
\\
$\text{GraphSE}^2$\cite{DBLP:conf/ccs/LaiYSLLL19} & conjunctive & $\mathcal{O}(c\cdot(n\text{-}1))\textit{Exp} + \mathcal{O}(1)\textit{Prf}$ & $\mathcal{O}(c\cdot(n\text{-}1))\textit{Exp}+\mathcal{O}(c\cdot(n\text{-}1))\textit{BF}$ & 2 & \ding{55} & \ding{55} & \ding{55} & - 
\\
Doris\cite{DBLP:conf/ccs/Wang0W0LG24} & conjunctive  & $\mathcal{O}(c\cdot (n\text{-}1))\textit{Prf} + \mathcal{O}(c\cdot (n\text{-}1))\textit{Xor}$ & $\mathcal{O}(c)\textit{S}^2\textit{PE.Dec}$ & 2 & \ding{55} & \ding{55} & \ding{55} & $\mathbb{F}_1$,$\mathbb{F}_2$
\\
Guo\cite{DBLP:journals/tdsc/GuoLTCL24} & conjunctive & $\mathcal{O}(c\cdot (n\text{-}1))\textit{PPrf.Eval}$ & $\mathcal{O}(c)\textit{Hash}$ & 1 & \ding{55} & \ding{51} & \ding{51} & $\mathbb{F}_1$,$\mathbb{F}_2$,$\mathbb{F}_3$,$\mathbb{F}_4$,$\mathbb{F}_5$ \\
\hline %画一个横线，下面的就都是一样了，这里一共有4行内容
\multicolumn{9}{c}{SGX-based Scheme}\\
\hline 
$\text{SGX}^2$\cite{DBLP:journals/tdsc/DuJWCZ23} & shortest path & $\mathcal{O}(1)\textit{Prf}$ & $\mathcal{O}(\textit{Dij})$ & - & \ding{51} & \ding{51} & \ding{55} & FP \\
\hline 
\textit{SecGraph} & conjunctive & $\mathcal{O}(1)\textit{Enc}$ & $\mathcal{O}(c)\textit{Prf}+\mathcal{O}(c\cdot(n\text{-}1))\textit{CF}$ & 1 & \ding{51} & \ding{51} & \ding{55} & $\mathbb{F}_1$,$\mathbb{F}_2$,$\mathbb{F}_3$,$\mathbb{F}_4$,$\mathbb{F}_5$ \\
\hline
\textit{VSecGraph} & conjunctive & $\mathcal{O}(1)\textit{Enc}$ & $\mathcal{O}(c)\textit{Prf}+\mathcal{O}(c\cdot(n\text{-}1))\textit{CF}$ & 1 & \ding{51} & \ding{51} & \ding{51} & $\mathbb{F}_1$,$\mathbb{F}_2$,$\mathbb{F}_3$,$\mathbb{F}_4$,$\mathbb{F}_5$ \\
\bottomrule[1pt]
\end{tabular}

\begin{center}
    \textbf{Notes:} \textbf{U} indicates whether the scheme supports updating or not. \textbf{V} indicates whether the scheme supports verification or not. \textit{Exp} is the modular exponentiation operation. \textit{Enc} is the symmetric encryption operation. \textit{Prf} is the pseudo-random function operation. \textit{Hash} is the hash function operation. $\textit{BFS}_K$ is breadth-first search with depth $K$. $\textit{Dij}_e$ is dijkstra on ciphertext. $\textit{Dij}$ is dijkstra on plaintext. \textit{BF} is the bloom filter membership check. \textit{CF} is the cuckoo filter membership check. $\textit{S}^2\textit{PE.Dec}$ is the decryption operation of $\textit{S}^2\textit{PE}$. \textit{PPrf.Eval} is the evaluation operation of puncturable PRF. $\mathbb{F}_1$ is KPRP, $\mathbb{F}_2$ is IP, $\mathbb{F}_3$ is WRP, $\mathbb{F}_4$ is forward security and $\mathbb{F}_5$ is backward privacy.
\end{center}

\vspace{-1em}
\end{table*}

\subsection{Technical Overview}
PeGraph leverages a type of SSE technique named the Oblivious Cross-Tag (OXT) protocol\cite{DBLP:conf/crypto/CashJJKRS13} to support secure graph conjunctive search. A conjunctive search $q=(w_1\wedge...\wedge w_n)$, where each keyword $w =id_{out}\text{:}\textit{type}$ means searching all of the vertices $\{id_{in}\}$ that have a certain \textit{type} of the relationship with the vertex $id_{out}$. To do so, PeGraph maintains two data structures named \textit{TSet} and \textit{XSet} at the cloud server to store the encrypted graph database: (1) \textit{TSet} is an encrypted multimap (EMM) data structure to keep relationships from keywords to vertices in encrypted form. \textit{TSet} allows to retrieve all the neighboring vertices that has a specific relationship with a certain keyword; (2) \textit{XSet} is an encrypted set to keep relationships between adjacent vertices in encrypted form. \textit{XSet} allows for determining whether a specific relationship exists between two vertices. The OXT involves two rounds to complete the $q$. The first round is to prepare the candidate results, the user sends a search token of $w_1$\footnote{In this paper, we assume that the search keywords count issued by a client is $n$ in the form of $w_1 \wedge ... \wedge w_n$ and $w_1$ has the fewest neighboring vertices.} to the cloud server. With the search token, the cloud server traverses all the neighboring vertices of $w_1$ from \textit{TSet} and returns the size $c$ of it as the initial search result to the user. The second round is to filter the final results, the user sends intersection tokens to the cloud server. With these tokens, the cloud server first includes all the vertices traversed in the first round as the search result, then examines each of them to check if it has a relationship with each of the remaining keywords $\{w_2,...,w_n\}$ via \textit{XSet}. 

We find that the first two challenges are inherent limitations of the OXT protocol. For the first challenge, the user has to generate a cryptographic search token and interaction tokens in each $q$ and properly manage a list of secret keys. For the second challenge, both the user and the cloud server need to perform $\mathcal{O}(c\cdot(n\text{-}1))$ exponentiation modulo \textit{Exp} operations, which will be an expensive computational cost. Moreover, the filtering process conducted by the cloud server in the second round lends to the result pattern leakages like \textit{keyword pair
result pattern} (KPRP), \textit{intersection result pattern} (IP) and \textit{whole result pattern} (WRP), which expose common $id_{in}$ corresponding to multiple keywords\cite{DBLP:conf/ccs/LaiPSLMSSLZ18}. Finally, the pre-computed static EMM \textit{TSet} cannot be applied to a dynamic setting since it does not take into account the forward and backward privacy during the update process, where the forward privacy refers to newly inserted data is no longer linkable to searches issued before and backward privacy refers to deleted data is not searchable in searches issued later \cite{DBLP:conf/ndss/PatranabisM21}. 

To address the above two challenges, we propose \textit{SecGraph}, an SGX-based graph search scheme on dynamic encrypted graph database with transparency. \textit{SecGraph} adopts a hybrid approach to carefully combine trusted hardware (i.e., Intel SGX\cite{cryptoeprint:2016/086}) and lightweight symmetric cryptographic primitives. To address the first challenge, we design a new system architecture to provide secure yet transparent graph search services for the user. Specifically, we deploy an SGX-based trusted proxy at the cloud server which leverages the security guarantees of SGX to manage secret keys and generate cryptographic tokens on behalf of the user in a secure way. As a result, the search process over the encrypted graph database is transparent since the trusted proxy only provides a plaintext graph search interface for the user to invoke. For the second challenge, our idea is to load \textit{XSet} in the enclave to process the filtering. Specifically, our observation is that since the existence check procedure is conducted in the trusted memory of the enclave, we can transform the cryptographic operation into a plaintext membership check operation without violating the security of OXT. This observation motivates us to design a new efficient membership check data structure to encode \textit{XSet}. Such a design immediately raises two advantages: (1) The plaintext membership check only triggers lightweight computations and removes the expensive cryptographic computations; (2) Since the cloud server is unaware of the membership check procedure, the result pattern leakages are mitigated. To this end, we design a new SGX-aware data structure called \textit{LDCF-encoded XSet} to work in the enclave, which minimizes the trusted memory base of the enclave by partial loading into the enclave with sub-linear complexity. Furthermore, we redesign the dynamic \textit{Twin-TSet} to replace the traditional \textit{TSet}, which maintains two maps to expedite the localization of updated positions. One of the maps, termed \textit{TSet}, distinguishes itself from its traditional form by maintaining an encrypted mapping from $(w, i)$ to $id_{in}$ rather than an EMM, where $i$ represents the position of the vertex $id_{in}$ in the neighboring vertices corresponding to the keyword $w$. Another map is called \textit{ITSet}, which is the inverse version of \textit{TSet} and is used to locate and delete a certain $id_{in}$. 

We further propose a verifiable version of \textit{SecGraph} named \textit{VSecGraph} to address the third challenge. To support conjunctive search verification, the user needs to verify that both the initial search result and the filter process are correct. A direct solution\cite{DBLP:journals/tdsc/GuoLTCL24} is to return the results of $w_1$ and the necessary proofs to the user for local verification and filtering, which raises extra computational overhead. Instead, \textit{VSecGraph} offloads the verification to the server's runtime workflow by utilizing a \textit{procedure-oriented verification} method. Specifically, it securely maintains authenticated data structures (ADS) within the enclave to verify the correctness of the loaded data structures. In this way, \textit{VSecGraph} eliminates the computational burden on the user and maintains the transparency.

\subsection{Our Contributions}
We propose \textit{SecGraph} and \textit{VSecGraph} in this paper. To the best of our knowledge, \textit{VSecGraph} is the first encrypted graph search scheme to support efficient search, dynamic update, and search verification simultaneously. A brief comparison with previous works is shown in Table.\ref{Comparison with the related work}.

In summary, our contributions are as follows:
\begin{itemize}
\item We present \textit{SecGraph}, an SGX-based graph search scheme on the dynamic encrypted social graph that enables efficient search with leakage suppression.
\item We enhance \textit{SecGraph} to a verifiable version \textit{VSecGraph}, which achieves trusted efficient conjunctive search by offloading the verification to the process of the server's protocol execution. We further optimize it to reduce the enclave storage overhead. 
\item We implement \textit{SecGraph} and \textit{VSecGraph}, then evaluate their performance and analyze their security. Experiment results show that \textit{SecGraph} yields up to 208$\times$ improvements in search time and 3121$\times$ improvements in communication time compared with the SOTA scheme PeGraph. Besides, \textit{VSecGraph} improves 3500$\times$ in search time and 103$\times$ in verification time compared with the verifiable conjunctive search scheme\cite{DBLP:journals/tdsc/GuoLTCL24}.
\end{itemize}

\newtheorem{remark}{Remark}
\begin{remark}
The previous version of this paper was published at DASFAA 2024\cite{DBLP:conf/dasfaa/WangYQQ24}. We outline the differences between this paper and the original version.    
\end{remark}
\begin{itemize}
\item We highlight that \textit{SecGraph} does not leak any result patterns from conjunctive search, boasting the strongest security currently available for such search schemes.
\item We introduce an enhanced scheme called \textit{VSecGraph}, which enables the verification of graph search results. Additionally, we optimized \textit{VSecGraph}'s storage consumption of \textit{ADS} in trusted memory. Besides, we enhanced the functionality by proposing an efficient access control method tailored for multi-client scenarios.
\item We refactored our code and refined the experiments to provide a more comprehensive evaluation of the performance of both schemes.
\end{itemize}

\section{Preliminaries}
\subsection{Intel SGX} Intel SGX\cite{cryptoeprint:2016/086} is a set of extensions of x86 instructions that provides trusted execution environments (i.e., \emph{enclave}) to protect the integrity and confidentiality of the application data and the code. The enclave has limited trusted memory, which automatically applies the page-swapping mechanism, causing severe performance degradation if the limit is exceeded. The enclave has three main security properties: (1) \textit{Isolation}: The enclave resides in a hardware-guarded memory region called the \emph{enclave page cache (EPC)} to protect the code. SGX provides an interface named \textit{ECall} allowing untrusted part code to access the enclave, and another interface \textit{OCall} to allow code in the enclave to access untrusted part; (2) \textit{Sealing}: SGX will encrypt the data using a \emph{sealing key} and storing it persistently, only the enclave holding the corresponding sealing key can decrypt the evicted data; (3) \textit{Remote Attestation}: The SGX allows an external party to verify the identity and state of the enclave. After the remote attestation, a secure channel will be established between the external party and the enclave.

\begin{figure*}[t]
    	\centering
    	\includegraphics[width=1\linewidth]{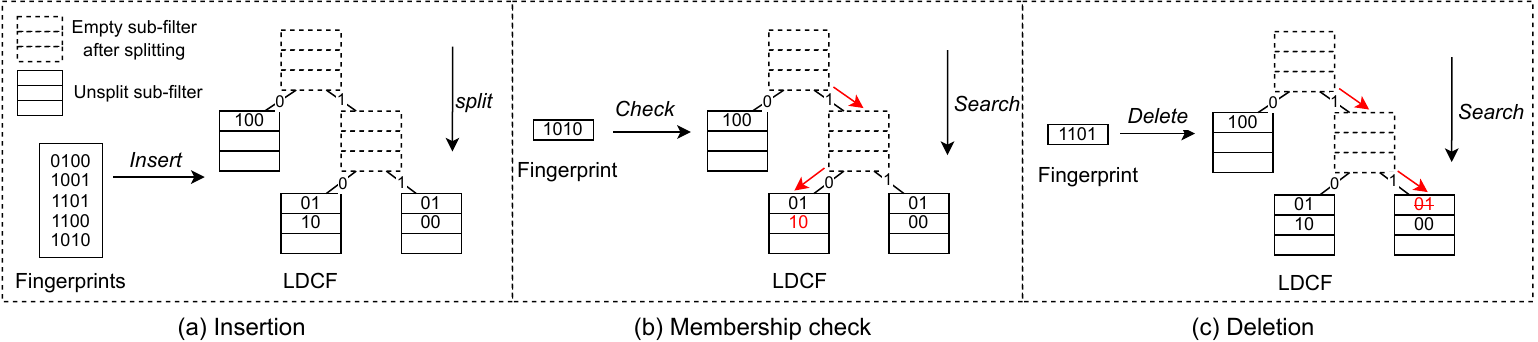}
    	\caption{Illustration of LDCF and three associated operations.}
    	\label{fig:LDCF}
\vspace{-1.8em}
\end{figure*}

\subsection{Cuckoo Filter and Dynamic Cuckoo Filter}
Cuckoo Filter (CF) \cite{DBLP:conf/conext/FanAKM14} is a compact data structure to support approximate set membership checks in static settings. CF hashes each element into several buckets, which store fingerprint information derived from the hashed values. A fingerprint is a small portion of the original hash value, typically a few bits. To support efficient inserting, deleting, and checking an element in the CF, Zhang et al. \cite{DBLP:conf/icde/ZhangC0R21} extend the traditional CF to a novel data structure named Logarithmic Dynamic Cuckoo Filter (LDCF) where a fully inserted CF is divided into two CFs in a binary tree shape recursively based on the prefix of the fingerprints. Due to the usage of the tree structure, LDCF achieves a sub-linear search time. For convenience, we refer to each CF in LDCF as a \emph{sub-filter} in this paper. LDCF consists of four algorithms:  

\begin{itemize}
\item $\{\bot, \delta'\} \leftarrow {\small\textsf{Insert}}^{\small\textsf{LDCF}}$$(\textit{sub-filter}, \delta, \mu)$: Upon receiving an element $x$, the algorithm first computes its fingerprint $\delta$ and uses it to locate the matching \emph{sub-filter}. 
Then, it calculates one of its candidate positions $\mu = H_1(x)$, where $H_1$ is a hash function. Then it takes \textit{sub-filter}, $\delta$ and $\mu$ as inputs, calculates another candidate position $\nu = H_1(\delta) \oplus \mu$, and finally puts $\delta$ in one of the two candidate positions in the matching \emph{sub-filter}. If there is a fingerprint $\delta'$ that cannot find the corresponding position, the \emph{sub-filter} needs to be split, and it outputs $\delta'$, otherwise, it outputs $\bot$. Fig. \ref{fig:LDCF} (a) shows the procedure of inserting five fingerprints into an LDCF. 

\item $(\textit{sub-filter}_0,\textit{sub-filter}_1) \leftarrow {\small\textsf{Split}}^{\small\textsf{LDCF}}$$(\textit{sub-filter}, \delta')$: If \textit{sub-filter} needs to be split., the algorithm divides the \textit{sub-filter} into $\textit{sub-filter}_0$ and $\textit{sub-filter}_1$, and distributes the fingerprints stored in it as well as $\delta'$ into the two newly-created \emph{sub-filters} according to their first bit. The first bit will be removed after distribution. Finally, it outputs the two newly-created \emph{sub-filters}. 

\item $\{0,1\} \leftarrow {\small\textsf{Membershipcheck}}^{\small\textsf{LDCF}}$ $(\textit{sub-filter},\delta, \mu)$: Upon receiving an element $x$, the algorithm first computes its fingerprint $\delta$ and uses it to locate the matching \emph{sub-filter}. Then it calculates one of its candidate positions $\mu = H_1(x)$. Then the algorithm takes \textit{sub-filter}, $\delta$ and $\mu$ as inputs, calculates another candidate position $\nu = H_1(\delta) \oplus \mu$ to check whether $\delta$ exists in one of them, and outputs a boolean value $1$ if yes, otherwise $0$. Fig. \ref{fig:LDCF} (b) shows the procedure of checking whether a fingerprint exists in the LDCF. 

\item $\bot \leftarrow {\small\textsf{Delete}}^{\small\textsf{LDCF}}$$(\textit{sub-filter},\delta, x)$: Upon receiving an element $x$, the algorithm first computes its fingerprint $\delta$ and uses it to locate the matching \emph{sub-filter}. After that, the algorithm calculates one of its candidate positions $\mu = H_1(x)$. After that, the algorithm takes \textit{sub-filter}, $\delta$ and $\mu$ as inputs, calculates another candidate position $\nu = H_1(\delta) \oplus \mu$
to locate $\delta$ and finally deletes it. Fig. \ref{fig:LDCF} (c) shows the procedure of deleting an existing fingerprint in the LDCF. 
\end{itemize}

\subsection{Cryptographic Preliminaries}
% \subsubsection{Pseudo-random Function}
% $F: \mathcal{K} \times \mathcal{X} \rightarrow \mathcal{Y}$ is called a pseudo-random function (PRF) if for every probabilistic polynomial time (PPT) adversary $\mathcal{A}$, its advantage is defined as 
% $$Adv^{PRF}_{\mathcal{A},F} = \left| Pr[\mathcal{A}^{F(k,\cdot)}(\lambda)] - Pr[\mathcal{A}^{f(\cdot)}(\lambda) = 1] \right|$$
% is negligible in $\lambda$, where $k \overset{\text{\$}}{\leftarrow} \mathcal{K}$ and $f$ is a random function from $\mathcal{X}$ to $\mathcal{Y}$.

\subsubsection{Multiset Hash Function}
The multiset hash function\cite{DBLP:conf/asiacrypt/ClarkeDDGS03} can map a multiset of arbitrary finite size to a fixed-length string and avoid re-hashing the complete set after every update. A multiset hash function consists of four probabilistic polynomial time algorithms $(\mathcal{H}, \equiv_{\mathcal{H}}, +_{\mathcal{H}}, -_{\mathcal{H}}$). To support efficient hashing on dynamic multiset, given a multiset $\mathcal{X}$, a multiset hash function has three properties: (1) $\mathcal{H} \equiv_{\mathcal{H}} \mathcal{H(X)}$, this property is to compute the hash value of the multiset $\mathcal{X}$; (2) $\mathcal{H}(\mathcal{X} \cup \{x\}]) \equiv_{\mathcal{H}} \mathcal{H(X)} +_{\mathcal{H}} \mathcal{H}(\{x\})$, this property is to insert a new element $x$ into the multiset $\mathcal{X}$ and update the multiset hash value; (3) $\mathcal{H}(\mathcal{X} \setminus  \{x\}]) \equiv_{\mathcal{H}} \mathcal{H(X)} -_{\mathcal{H}} \mathcal{H}(\{x\})$, this property is to delete the element $x$ from the multiset $\mathcal{X}$ and update the multiset hash value. 

\subsubsection{RSA Accumulator}
RSA accumulator\cite{DBLP:conf/eurocrypt/BariP97} is a collision-resistant authenticated data structure (\textit{ADS}) structure that can map a dynamic set to a constant-size digest. It can process membership checks by employing modular exponentiation. Specifically, the untrusted server can generate the witness for the verifier to check a certain element's existence. The RSA accumulator consists of the following algorithms: \begin{itemize}
\item $(n,g) \leftarrow {\small\textsf{Setup}}^{\small\textsf{Acc}}$$(1^{\lambda})$: The setup algorithm takes a security parameter $\lambda$ as input and outputs the RSA modulus $n$ and the generator $g$.

\item $Ac \leftarrow {\small\textsf{Accumulation}}^{\small\textsf{Acc}}$$(X) $: The accumulation algorithm takes a prime multiset $X$ as input and outputs the accumulation value $Ac = g^{x_p}$ mod $n$, where $x_p = \prod_{x \in X} x$. 

\item $\pi \leftarrow {\small\textsf{Membershipwitness}}^{\small\textsf{Acc}}$$(x, Ac)$: The witness algorithm takes an element $x$ and the accumulation value $Ac$ as the input, and outputs the witness proof $\pi = g^{x_p / x}$ mod $n$.

\item $\{0,1\} \leftarrow {\small\textsf{Verify}}^{\small\textsf{Acc}}$$(x, \pi, Ac)$: The verification algorithm takes an element $x$ and the corresponding witness proof $\pi$ as input and outputs boolean value $1$ if $\pi^x$ mod $n $ equals $Ac$, otherwise, output boolean value $0$.

\item $Ac' \leftarrow {\small\textsf{Update}}^{\small\textsf{Acc}}$$(op, x, Ac)$: The update algorithm takes an element $x$, the corresponding operation $op$ and the accumulation value $Ac$ as the input. $op$ = $1$ means inserting $x$ into $X$, $Ac' = Ac ^ x$ mod $n$ while $op$ = $0$ means deleting $x$ from $X$, $Ac' = Ac ^ {x^{\text{-}1}}$ mod $n$.

\end{itemize}

\section{Problem Definition and Backgroud}\label{Problem Definition and Backgroud}
% In this section, we first formally define the representation of the graph database and the search types. After that, we outline the system model and the corresponding threat model. Finally, we present the design goals. 

\subsection{Problem Definition}\label{Problem Definition}
A graph can be formally represented as $G = (V, E)$, where $ V $ is a set of vertices and $ E $ is a set of directed edges to describe relationships between vertices. A social graph is illustrated in Fig.\ref{fig:fig1} (a), each vertex $ v \in V $ is uniquely identified by an $id$ with an attribute \emph{name}, such as vertex '001' being named 'Harry'. Each edge $ e \in E $ signifies a relationship between two vertices, which is uniquely labeled by two attributes \emph{type} and a \emph{weight}. For instance, an edge may represent a \textit{type}: friendship between vertices '001' and '002', with \textit{weight}: 5. 

% A relationship is formally described as a tuple $(id_{out}, id_{in}, \textit{type}, \textit{weight})$, where the edge labeled by \emph{type} originates from $id_{out}$ and terminates at $id_{in}$, with the associated \textit{weight} indicating the relationship’s strength. 

As shown in Fig.\ref{fig:fig1} (b), the social graph can be transformed into a set of key-value pairs and stored in a map. For key-value pair, the key is $ id_{out}\text{:} \textit{type}$, and the value consists of a list of pairs $\{(id_{in}, \textit{weight})\}$ named posting list \textit{PL}. In this way, we can use the key $id_{out} \text{:} \textit{type}$ as a keyword $w$ to search all the neighboring vertices of $id_{out}$ connected via the relationship of \emph{type}. The map structure also supports inserting and deleting a relationship through an update token $(w, id_{in}, op)$, where $op=0$ represents deletion, and $op=1$ represents insertion. It also supports social search by sending keywords, and the search types on graphs are mainly divided into three types: 

\begin{itemize} 
    \item \textbf{Single-keyword Search:} Search for the destination vertices that the source vertex can reach within $K$ steps while the edges on the path belong to a certain type. For example, the friends 2 steps away from the vertex '003' include '005'. The search token of it is $(id_{out}, \emph{type}, K) = (003:\emph{friend}, K)$. 
    \item \textbf{Conjunctive Search:} Search for neighbor vertices with a specific relationship to multiple vertices. For example, the common friend of vertices '003' and '005' is vertex '002'. The search token of it is $(w_1 \wedge w_2) = (003:\emph{friend} \wedge 005:\emph{friend})$.
    \item \textbf{Fuzzy Search:} Search for vertices whose \emph{name} contain the same sub-string. For example, the vertices with 'ha' in their \emph{name} include vertices '001' and '005', and we explain the search token of it in Sec.\ref{Construction of SecGraph}.
\end{itemize}

\subsection{System Model}
We consider a scenario where a service provider or an organization encrypts its graph database and outsources it to a third-party cloud server, offering a unified search and update service interface to multiple clients. Under the traditional system model in PeGraph, each graph database user is required to install a client to store the unified graph database secret keys and to perform the necessary cryptographic computations to generate search tokens. However, this distributed client mechanism undoubtedly increases the attack surface, demands higher standards for key protection measures, and also necessitates that clients consume more computational resources. Therefore, we have designed a new system model to provide transparent client services. Currently, there are two entities in the system: the client and the SGX-enabled cloud server. 
\subsubsection{Client} The client is the software on the data user's device, which launches a remote attestation and establishes a secure channel with the trusted memory, i.e., enclave, then sends the secret keys to it. The client can securely update and search the graph database via the secure channel. 
\subsubsection{SGX-enabled Cloud Server} The server maintains two storage components to facilitate the search and update of the graph database. The enclave maintains the necessary plaintext data structure for graph search and update, while the untrusted part stores an encrypted graph database and provides graph search services for the client. To simplify the exposition, we refer to the untrusted part as the 'server'. Note that, the capacity of the available trusted memory is constrained. Although SGX 2.0 provides more abundant memory capacity, it sacrifices integrity guarantees and incurs expensive deployment costs\cite {DBLP:journals/pvldb/ZhengXWWHQLR24}. Therefore, our system design still considers SGX with constrained trusted memory size.

\subsection{Threat Model}
The server is equipped with Intel SGX to protect the data and codes inside the enclave. We assume the server will execute the established program correctly, but powerful enough to compromise privacy and launch active attacks to tamper with search results. Specifically, the server can get full access over the software stack (such as OS and hypervisor) outside of the enclave and can infer sensitive information from encrypted data by observing search tokens and search results. The enclave is securely initialized, with both code and data safely loaded. It will prove its security to the client through remote attestation before establishing a secure channel with the client. We state that a series of attacks \cite{DBLP:conf/eurosec/GotzfriedESM17,DBLP:conf/ndss/SasyGF18, DBLP:conf/sp/XuCP15} against SGX are orthogonal to our research, cause there are some defense methods\cite{DBLP:conf/ndss/Shih0KP17, DBLP:conf/usenix/OleksenkoTKSF18} have been proposed. We assume the client is honest, meaning it will faithfully follow the agreed protocol and will not initiate active attacks. 

\subsection{Design Goal}
In this paper, our scheme aims to achieve the searches mentioned in Sec.\ref{Problem Definition}, which satisfy the following conditions:

\begin{itemize}
\item \textbf{Transparency and Efficiency:} Our scheme should provide transparent and efficient search capabilities that liberate the client from the need to store secret keys and perform expensive cryptographic computations. 
\item \textbf{Provision Dynamic Update:} Our scheme should support search on the dynamic encrypted social graph.
\item \textbf{Confidentiality Preservation:} Our scheme should protect the confidentiality of the outsourced graph database, search keywords, and results to prevent the server from compromising privacy information leaked from result patterns and updating process. 
\item \textbf{Verification:} Our scheme should support verifying the soundness and completeness of the searched results: (1) Soundness: All of the data indeed exist on the server and satisfy the search conditions; (2) Completeness: No valid data are ignored.
\end{itemize}

\begin{figure}[t]
    	\centering
    	\includegraphics[width=1.0\linewidth]{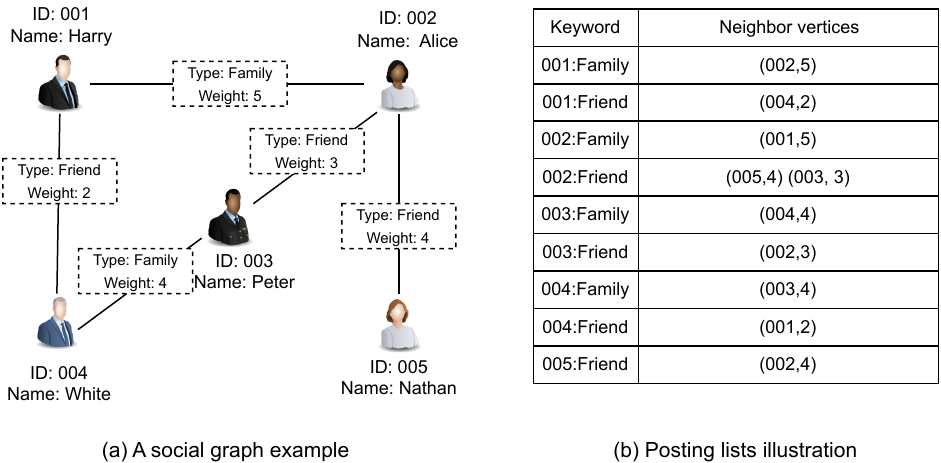}
    	\caption{A toy example for illustrating a social graph.}
    	\label{fig:fig1}
\end{figure}

\section{Design of \textit{SecGraph}}

% In this section, we first review the PeGraph, then present the design overview of \textit{SecGraph}, followed by a detailed explanation of its construction. At last, we present a formal security analysis. 

\subsection{Review of the PeGraph}
Oblivious cross-tags protocol (OXT) \cite{DBLP:conf/crypto/CashJJKRS13} is a type of SSE technique to support secure conjunctive search. PeGraph leverages OXT to enable secure conjunctive search on encrypted graph databases. Specifically, OXT maintains an encrypted multimap data structure \emph{TSet} and a set data structure \emph{XSet}. Given a pseudo-random function (PRF) $F$ and the keys ($K_I, K_X, K_T, K_Z$), we assume there are $c$ vertices $\{{id_{in}}_j \mid j \in [1,c]\}$ for a keyword $w$. For each $w$, \emph{TSet} stores mapping for $\textit{stag}=F(K_T,w)$ to a list of encrypted pairs $\{{(id_{in}}_j, y_j) \mid j \in [1,c]\}$ where $y_j = F(K_I,{id_{in}}_j) \cdot F(K_Z,w||j)^{-1}$, and \textit{XSet} stores a set of $\{\textit{xtag}_j = g^{F(K_X,w) \cdot F(K_I,{id_{in}}_j)} \mid j \in [1,c] \}$, where $g$ is the generator of a cyclic group and $\text{'||'}$ represents the concatenation operation. When issuing a secure conjunctive search $q = (w_1 \wedge ... \wedge w_n)$, the client will conduct a two-round search procedure. During the first round, the client first sends the search token $\textit{stag}=F(K_T,w)$ to retrieve the size $c=\left |\textit{TSet}[\textit{stag}]\right |$ from the cloud server as the initial search result. The client then generates and sends the intersection tokens $\{\textit{xtoken}_j = g^{F(K_Z,w_1||j) \cdot F(K_X,w_i)} \mid i\in [2,n],j \in [1,c]\}$ to the server in the second round. Upon receiving them, only the encrypted ${id_{in}}_j$ which satifies the condition $\forall i\in [2,n], \textit{xtag}_i=xtoken_j^{y_j} = g^{F(K_X,w_i) \cdot F(K_I,{id_{in}}_j)} \in \textit{XSet}$ will be inserted into the final results. 

The OXT leaks the result pattern leakages, and we explain it with an example shown in Fig.\ref{RPL}. Assuming there are three keywords $w_1$, $w_2$ and $w_3$, the client conducts searches $q_1=(w_1\wedge w_2\wedge w_3)$, $q_2=(w_1\wedge w_2)$ and $q_3=(w_3 \wedge w_2)$. The example shows the postling list of each $w$ on the left and the leakages on the right, where the IP leaks from $q_2$ and $q_3$ since the generation of \textit{xtag} checked for existence is exposed to the cloud server and the KPRP and the WRP leak from $q_1$ since the existence check result is exposed to the cloud server.

\begin{figure}[H]
    	\centering
    	\includegraphics[width=1.0\linewidth]{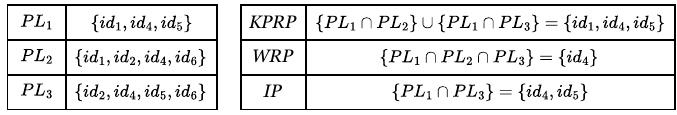}
    	\caption{KPRP and WRP leakages of $q_1=(w_1\wedge w_2\wedge w_3)$, IP leakages of $q_2=(w_1\wedge w_2)$ and $q=(w_3 \wedge w_2)$.}
    	\label{RPL}
\end{figure}

The limitations of OXT only allow PeGraph to provide non-transparent graph search services to the client, who computes $c\cdot(n\text{-}1)$ \textit{xtokens} via $c\cdot(n\text{-}1)$ \textit{Exp} operations and stores secret keys carefully. The cloud server also costs the equivalent number of \textit{Exp} operations to compute $c\cdot(n\text{-}1)$ \textit{xtags}. This leads to impractical search performance in large-scale encrypted graph database. Besides, PeGraph also faces result pattern leakage issues and lacks update support. 

\subsection{Design Overview of \textit{SecGraph}}

To provide the client with a transparent graph database experience, our scheme deploys the Intel SGX on the server side to provide trusted computation during the update and search process. We design the storage mechanisms and execute programs within the enclave to maintain the secret keys and modules essential for updating and searching the encrypted graph database. This liberates the client from the need to retain secret keys or perform costly cryptographic computations. 

Based on this foundational architecture, we design a secure yet efficient graph search scheme on dynamic encrypted graph database. To alleviate the computational overhead on both parties, we offload the client's computation to the enclave and simplify the \textit{Exp} operations of computing the \textit{xtag} to plaintext operations. Specifically, the client only needs to send $(w_1\wedge...\wedge w_n)$ to the enclave via a secure channel, the enclave will compute \textit{stag} to retrieve $c$ number of $id_{in}$ from the cloud server, and calculates the $\{\textit{xtag}_i=(w_i||id_{in})\mid i\in[2,n]\}$ to check whether it exists in the \textit{XSet}. In this case, the $\mathcal{O}(c\cdot(n\text{-}1))\textit{Exp}$ of client's \textit{xtoken} generation and server's \textit{xtag} generation is eliminated. 

To suppress the result pattern leakages, our idea is to load the \textit{XSet} into the enclave and process the existence check for each \textit{xtag} without exposing the check results to the server. On the one hand, the \textit{xtag} is not disclosed to the server, thereby mitigating IP leakage, on the other hand, the server remains unaware of the check results and cannot determine which $id_{in}$ pass through the filter, thus curbing the leakage of KPRP and WRP. However, all of the \textit{xtag} are encoded into \emph{XSet}, resulting in excessive loading time and memory usage, significantly degrading performance. Although the Bloom filter and XOR filter utilized in HXT\cite{DBLP:conf/ccs/LaiPSLMSSLZ18} and Doris\cite{DBLP:conf/ccs/Wang0W0LG24} can reduce the storage overhead of \textit{XSet}, they still occupy several hundred MB of memory in large-scale graph, making it impractical to load them entirely into the enclave. Moreover, both filters are static data structures that do not support expansion or deletion, which hinders our ability to accommodate graph updates. Fortunately, \textit{SecGraph} utilizes a compact partitionable data structure, the Logarithmic Dynamic Cuckoo Filter (LDCF) \cite{DBLP:conf/icde/ZhangC0R21} to encode \emph{XSet}, which stores the fingerprints of the \textit{xtags}. The \textit{LDCF-encoded XSet} not only allows the enclave to load \emph{sub-filters} to complete the existence check process but also provides a sub-linear search to locate the matching \emph{sub-filters}. It also supports the expansion and deletion of elements. 

Supporting dynamic updates of the graph also necessitates a redesign of the \textit{TSet} data structure, which takes into account the issues of both forward and backward privacy in dynamic SSE. We design a new dynamic version of \emph{TSet} named \emph{Twin-TSet} that contains a pair of map data structures that enable the insertion and deletion of edges. The first map, called \textit{TSet}, stores the encrypted mapping from $(w, i)$ to $id_{in}$. The other map, \textit{ITSet} (i.e., inverse \textit{TSet}), stores the encrypted mapping from $(w, id_{in})$ to $i$. The former represents the posting list of $w$, where $i$ is the update counter that indicates the position of $id_{in}$ in the posting list. The key and value of \textit{TSet} are called \textit{stag} and $id_e$, respectively. The latter supports finding the position of a specific $id_{in}$ in the posting list, thereby facilitating its deletion. The key and value of \textit{ITSet} are denoted as \textit{ind} and $\textit{stag}_e$, respectively. The update counter $i$ endows the capability to ensure forward security, while the \textit{ITSet} facilitates the localization of the position of a specific $id_{in}$ and the execution of deletion operations which guarantees the backward privacy. At present, it seems that the \emph{Twin-TSet} can support search on dynamic encrypted graph database, but how to store the latest update counter for each $w$ securely is still an issue. Fortunately, we can protect it by storing a map \emph{UpdateCnt} within the enclave. 

We illustrate the architecture of the \textit{SecGraph} in Fig.\ref{System architecture}, which shows the communication between clients and the enclave through a secure channel, the execution modules within the enclave, and the data structures maintained on the cloud server. Notably, leveraging the trusted computing capabilities of SGX, \textit{SecGraph} can efficiently support the retrieval of top-$k$ ranked search results. This is achieved by directly decrypting and sorting the neighboring vertex identifiers based on their respective \textit{weight} values, thereby eliminating the need for secret sharing operations like researches\cite{DBLP:journals/tifs/WangZJY22, DBLP:conf/ccs/LaiYSLLL19}. 

\begin{figure}[t]
    	\centering
    	\includegraphics[width=1.0\linewidth]{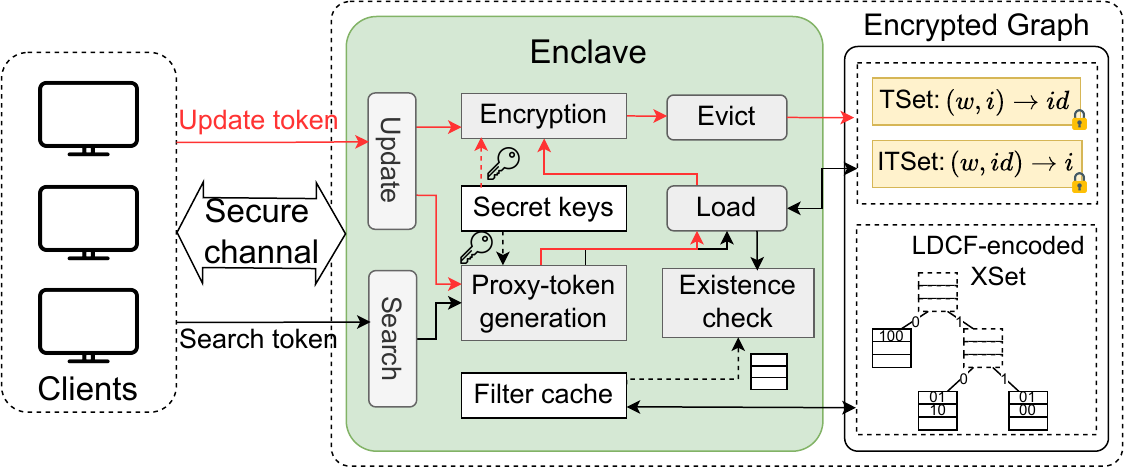}
    	\caption{System architecture.}
    	\label{System architecture}
% \vspace{-1em}
\end{figure}

\subsection{Construction of \textit{SecGraph}}\label{Construction of SecGraph}
\emph{SecGraph} utilizes a set of PRFs ($F_1,F_2,F_3:\{0,1\}^\lambda \times \{0,1\}^* \rightarrow \{0,1\}^\lambda $), three hash functions $H_1: \{0,1\}^* \rightarrow \{0,1\}^\alpha$, $H_2: \{0,1\}^* \rightarrow \{0,1\}^\xi$ and $H_3: \{0,1\}^* \rightarrow \{0,1\}^\alpha$. 

\setlength{\textfloatsep}{0.1em}
\begin{algorithm}[t]
\small
\caption{{\small\textsf{Load}}}
\label{Enclave Protocols}
\SetNlSty{text}{}{:}
\SetAlgoNoLine
\SetAlgoNoEnd
% \SetAlgoVlined
% \LinesNumbered
% \begin{multicols}{2}
\KwIn{Load token $\tau_l$ and encrypted graph $G_e = \textit{TSet,ITSet,XSet}$}
\KwOut{Loaded item $I_l$.}

\nl \If{$\tau_l$ is $SL$}{
\nl \For{each \textit{stag} in $SL$}{
\nl $id_e = \textit{TSet}[\textit{stag}]$;\\
\nl $(id||\textit{weight})=id_e\oplus F_2(K_Z,w)$;\\
\nl Insert $(id_{in}||\textit{weight})$ into the $PL$;\\
}
\nl $I_l = PL$;\\
}
\nl \If{$\tau_l$ is $id_f$}{
\nl $\textit{sub-filter}=\textit{XSet}[id_f], I_l=\textit{sub-filter}$;\\
}
\nl \If{$\tau_l$ is \textit{stag}}{
\nl $id_e = \textit{TSet}[\textit{stag}], (id_{in}||\textit{weight})=id_e\oplus F_2(K_Z,w)$;\\
\nl $I_l = (id_{in}||\textit{weight})$;\\
}
\nl \If{$\tau_l$ is \textit{ind}}{
\nl $\textit{stag}_e = \textit{ITSet}[\textit{ind}], i = \textit{stag}_e\oplus F_2(K_Z,w)$;\\ 
\nl $I_l = i$;\\
}
\nl \textbf{return} $I_l$;\\
\end{algorithm}

\subsubsection{Protocol Executed in Enclave}
First, we outline the protocols executed within the enclave in Alg.\ref{Enclave Protocols}, the specific implementation of the protocols is achieved through the \textit{OCall} interface provided by SGX. The protocol ${\small\textsf{Load}}(\cdot)$ enables the enclave to retrieve necessary data structures from the server at runtime, facilitating the execution of subsequent operations. This research focuses on four primary cases. The first is obtaining the encrypted posting list $PL_e$ corresponding to the \textit{stag} list $SL$ computed by a keyword $w$. Enclave sends the $SL$ to the server to fetch the corresponding encrypted $id_{in}$ from \textit{TSet} (lines 1-6). If the load token is an identifier $id_f$ of \textit{sub-filter}, it will return the corresponding \textit{sub-filter} to the enclave (lines 7-8). To get a certain $id_e$ from \textit{TSet}, the enclave sends the corresponding \textit{stag} to the server (lines 9-11). Similarly, to load a certain $\textit{stag}_e$ from \textit{ITSet}, the enclave sends the corresponding \textit{ind} to the server (lines 12-15). 

\subsubsection{Social Graph Setup}
The ${\small\textsf{Setup}}(\cdot)$ protocol is responsible for initializing some secret keys and data structures. Specifically, the client launches a remote attestation and establishes a secure channel with the enclave first. Then it generates secret keys $K_s = (K_T, K_Z, K_X)$ for PRFs $(F_1, F_2, F_3)$ and sends them to the enclave. The enclave initializes three empty data structures: (1) \textit{UpdateCnt} that stores the number of $id_{in}$ corresponding to each $w$; (2) $\mathcal{F}$ is a filter cache that stores cached \emph{sub-filters}; (3) \emph{IndexTree} that stores the split state of LDCF-encoded \emph{XSet} to locate the matching \emph{sub-filters}. The server initializes three empty data structures: (1) \emph{TSet} that stores the encrypted posting list; (2) \emph{ITSet} that stores the inverse map of \textit{TSet} to locate vertices; (3) \emph{XSet} that stores the fingerprints of all \textit{xtags}. 

\subsubsection{Social Graph Update}

\setlength{\textfloatsep}{0.1em}
\begin{algorithm}[t]
\small
\caption{{\small\textsf{Update}} \textit{of SecGraph}}
\label{update}
\SetNlSty{text}{}{:}
\SetAlgoNoLine
\SetAlgoNoEnd
% \SetAlgoVlined
% \LinesNumbered
% \begin{multicols}{2}
\KwIn{Secrity keys $K_s$, \emph{UpdateCnt}, updated token $\tau_u=(w, id_{in}, \textit{weight},op)$, and \textit{IndexTree}.}
\KwOut{Encrypted graph $G_e=(\textit{TSet, ITSet, XSet})$.}
\underline{Client:}\\
\nl Send update token $\tau_u=(w, id_{in}, \textit{weight},op)$ to the enclave;\\
\underline{Enclave:}\\
\nl $\textit{xtag} = (w||id_{in}),\mu = H_1(xtag),\delta = H_2(xtag)$;\\
\nl $\mathcal{M} = \bot$;\\
\nl \textbf{if} $op=\textit{insert}$ \textbf{then}\\
\nl ~~~\textbf{if} $\emph{UpdateCnt}[w] = \perp$ \textbf{then}\\
\nl ~~~~~~~~$\textit{UpdateCnt}[w] = 0$;\\
\nl ~~~$\textit{UpdateCnt}[w] \text{++} $;\\
\nl ~~~$\textit{stag} = F_1(K_T,w|| \textit{UpdateCnt}[w])$;\\
\nl ~~~$id_e = (id_{in}||\textit{weight}) \oplus F_2(K_Z,w)$;\\
\nl ~~~$ind = F_3(K_X,w||id_{in})$;\\
\nl ~~~$\textit{stag}_e = \textit{UpdateCnt}[w] \oplus F_2(K_Z,w)$;\\
\nl ~~~$\mathcal{M} = (\textit{stag}, id_e, ind, \textit{stag}_e, \mu, \delta,op)$;\\
\nl \textbf{else}\\
\nl ~~~$\textit{stag}' = F_1(K_T,w||\textit{UpdateCnt}[w])$;\\
\nl ~~~$\emph{UpdateCnt}[w] \text{-}\text{-}$;\\ 
\nl ~~~$(id_{in}'||\textit{weight}') = {\footnotesize\textsf{Load}}(\textit{stag}', G_e)$;\\
\nl ~~~$\textit{ind}' = F_3(K_X,w||id_{in}') $;\\
\nl ~~~$\textit{ind} = F_3(K_X,w||id_{in})$;\\
\nl ~~~$i = {\footnotesize\textsf{Load}}(\textit{ind}, G_e), \textit{stag} = F_1(K_T,w||i)$;\\
\nl ~~~$\mathcal{M} = (\textit{stag}, id_e', \textit{ind}, \textit{ind}', \textit{stag}_e, \mu, \delta,op)$;\\
\nl Send $\mathcal{M}$ to server;\\
\underline{Server:}\\
\nl Find the corresponding \textit{sub-filter} according to $\delta$ ;\\
\nl \textbf{if} $op=\textit{insert}$ \textbf{then}\\
\nl ~~~$\textit{TSet}[\textit{stag}] = id_e$, $\textit{ITSet}[\textit{ind}] = \textit{stag}_e$;\\
\nl ~~~${\footnotesize\textsf{Insert}}^{\footnotesize\textsf{LDCF}}(\textit{sub-filter}, \delta, \mu)$;\\
\nl \textbf{else}\\
\nl ~~~$\textit{TSet}[\textit{stag}] = id_e'$, $\textit{ITSet}[\textit{ind}'] = \textit{stag}_e$;\\
\nl ~~~Delete $\textit{ITSet}[\textit{ind}]$, ${\footnotesize\textsf{Delete}}^{\footnotesize\textsf{LDCF}}(\textit{sub-filter}, \delta, \mu)$;\\
\nl Update \textit{IndexTree} if necessary \tcc*{ECall}
% \end{multicols}
\end{algorithm}

% Then, the enclave searches the path of the matching \emph{sub-filter} to get the identifier $id_f$ of it in \emph{IndexTree} based on $\delta$ (line 3). Specifically, the enclave traverses each bit of $\delta$ to deeply search the \emph{IndexTree} by selecting the left subtree when the bit is 0, otherwise selecting the right subtree, until finding a leaf node, and the traversed $\delta$ sub-string is the path of the \emph{sub-filter}. Then, the enclave loads the corresponding \textit{sub-filter} (line 4).

The ${\small\textsf{Update}}(\cdot)$ protocol (Alg.\ref{update}) allows the client to issue graph updates. At the beginning, the client generates an update request $(w, id_{in}, \textit{weight}, op)$, where $op$ is the operation type (insertion or deletion) (line 1). Upon receiving it, the enclave first calculates $xtag$ using $w$ and $id_{in}$ and generates its fingerprint $\delta$ and candidate index $\mu$ (line 2). Then, the enclave initializes an empty message $\mathcal{M}$ (line 3). If $op=\textit{insert}$, the enclave increments the $\textit{UpdateCnt}[w]$ by 1, then encrypts the $(w||\textit{UpdateCnt}[w], id_{in})$, $(w||id_{in},\textit{UpdateCnt}[w])$ pairs in forms of $(\textit{stag}, id_e)$ and $(\textit{ind}, \textit{stag}_e)$ for \textit{TSet} and \textit{ITSet} respectively (lines 4-1). Next, the enclave sets the message $\mathcal{M} = (\textit{stag}, id_e, ind, \textit{stag}_e, \mu, \delta, op)$ (line 12). If $op=delete$, the enclave first calculates the index of $w$'s latest $id_{in}$ by $stag' = F_1(K_T,w||\textit{UpdateCnt}[w])$ and decreases \emph{UpdateCnt}$[w]$ by 1, then loads the latest encrypted $id_{in}$ (i.e., $id_e'$) and decrypts it (lines 14-16). Next, the enclave computes the indexes of the latest $id_{in}$ (i.e., $\textit{ind}'$) and the deleting $id_{in}$ (i.e., $\textit{ind}$) in \emph{ITSet} (lines 17-18). Then, the enclave loads the deleting $id_{in}$'s index in \emph{TSet} (i.e., $i$) from $\textit{ITSet}[ind]$ and computes it to $stag$ (line 19). After that, the enclave sets the message $\mathcal{M} = (\textit{stag}, id'_e, \textit{ind}, \textit{ind}', \textit{stag}_e, \mu, \delta,op)$ (line 20). Finally, the enclave sends the message $\mathcal{M}$ to the server. Upon receiving it from the enclave, the server first finds the corresponding \textit{sub-filter} according to $\delta$ (line 22). If $op=\textit{insert}$, the server will insert the $(\textit{stag}, id_e)$ and $(\textit{ind}, \textit{stag}_e)$ pairs into \emph{TSet} and \emph{ITSet} respectively and insert the fingerprint $\delta$ into the \emph{XSet} (lines 23-25). If $op=\textit{delete}$, the server will overwrite the deleting $id_{in}$ with the latest $id'_{in}$ in \emph{TSet} and \emph{ITSet} then delete the fingerprint $\delta$ and deleting $\textit{ind}$ in \emph{XSet} and \emph{ITSet} respectively to remove the corresponding $id_{in}$ (lines 26-28). Finally, if the LDCF-encoded \emph{XSet} is split, the server will execute an \emph{ECall} operation to update the \emph{IndexTree} (line 29).

\subsubsection{Single-keyword Search} To support a secure single-keyword search, the idea is to execute the depth-first search (DFS) by loading the necessary encrypted posting lists into the enclave iteratively. Specifically, the client sends the $\tau_s = (w_1, K)$ to the enclave through the secure channel, where the $w_1$ denotes the source vertex with the searched relationship, and $K$ denotes the search step. Upon receiving the search request, the enclave will compute the $\textit{stag}_i$ by $F_1(K_T, w_1 || i), i \in [0, $ \emph{UpdateCnt}$[w_1]]$, and sends the $\textit{stag}_i, i \in [0,  \textit{UpdateCnt}[w_1]]$ to the server for loading the encrypted posting list of $w_1$, and decrypt them in the enclave to get the neighbor vertex $id$. After that, for each neighbor vertex, perform a depth-first search of depth $k\text{-}1$, i.e., using the vertex $id$ to generate a new keyword to execute the above search process. 

\begin{algorithm}[t]
\small
\caption{{\small\textsf{Conjunctive Search}} \textit{of SecGraph}}
\label{Secure Conjunctive Social Search}
\SetNlSty{text}{}{:}
\SetAlgoNoLine
\SetAlgoNoEnd
% \SetAlgoVlined
% \LinesNumbered
% \begin{multicols}{2}
\KwIn{Secrity keys $K_s$, update counter \textit{UpdateCnt}, search token $\tau_s =(w_1 \wedge...\wedge w_n)$, encrypted graph $G_e=(\textit{TSet, ITSet, XSet})$ and the \textit{IndexTree}.}
\KwOut{Search result list $RL$.}
\underline{Client:}\\
\nl Send $\tau_s = (w_1 \wedge...\wedge w_n)$ to the enclave;\\
\underline{Enclave:} \\
\nl Initialize two empty lists $SL$, $RL$;\\
\nl \For{$j = 1$ to $\textit{UpdateCnt}[w_1]$}{
\nl $\textit{stag} = F_1(K_T,w_1||j)$;\\
\nl Insert $\textit{stag}$ into $SL$;\\
}
\nl $PL= {\footnotesize\textsf{Load}}(SL,G_e)$;\\
\nl \For{each $(id_{in}||\textit{weight})$ in $PL$}{
\nl $\textit{flag} = 0$;\\
\nl \For{$i = 2$ to $n$}{
\nl $\textit{xtag}_i = (w_i||id_{in}),\delta =H_2(\textit{xtag}_i)$;\\
\nl Get $id_f$ according to $\delta$ and \textit{IndexTree};\\
\nl \If{corresponding \textit{sub-filter} not in $\mathcal{F}$}{
\nl $\mathcal{F}[id_f] = {\footnotesize\textsf{Load}}(id_f, G_e)$;\\
}
\nl $\mu = H_1(\textit{xtag}_i)$;\\
\nl \If{${\footnotesize\textsf{Membershipcheck}}^{\footnotesize\textsf{LDCF}}(\textit{sub-filter},\delta,\mu) = 0$}{
\nl $\textit{flag}=1$, break;\\
}
}
\nl \If{$\textit{flag}=0$}{
\nl Insert $id_{in}$ into $RL$;\\
}
}
\nl Send $RL$ to client;
% \end{multicols}
\end{algorithm}

\subsubsection{Conjunctive Search} To support conjunctive search, we design the ${\small\textsf{Conjunctive Search}}(\cdot)$ protocol (Alg.\ref{Secure Conjunctive Social Search}). Specifically, the client selects search keywords $(w_1 \wedge...\wedge w_n)$ and sends them to the enclave (line 1). Upon receiving them, the enclave first initializes two empty lists $SL$ and $RL$ to store the $\textit{stag}$ list of $w_1$ and the final result list respectively (line 2). Then, the enclave traverses $\textit{UpdateCnt}[w_1]$ to compute all of \textit{stag} for $w_1$ and generate a load token $\tau_l=SL$ (lines 3-5). Then, the enclave loads the posting list of $w_1$ (i.e., $PL$) from the server by invoking the ${\small\textsf{Load}}(\cdot)$ (line 6). For each $id_{in}$ in $PL$, the enclave calculates the $xtag_i$ using $\{w_i||id_{in}\}, i \in [2,n]$, and generates the fingerprint of each of them (lines 7-10). To hide the filtering process and results, the enclave finds the $id_f$ from \textit{IndexTree} for loading the corresponding \textit{sub-filter}. If the matching \emph{sub-filter} doesn't exist in the cache table $\mathcal{F}$, the enclave will load it and store it (lines 11-13). After that, the enclave computes candidate positions $\mu$ and invokes ${\footnotesize\textsf{Membershipcheck}}^{\footnotesize\textsf{LDCF}}(\textit{sub-filter},\delta,\mu)$ to check whether the $\delta$ exists in the \emph{sub-filter}. Only all $\textit{xtag}_i,i \in [2,n]$ pass the membership check $id_{in}$ can be inserted into the final result list $RL$ (lines 14-18). Finally, the enclave returns $RL$ to the client (line 19). 

\subsubsection{Fuzzy Search} \emph{SecGraph} can also support the fuzzy search (such as sub-string search), e.g., find users whose \emph{name} contains 'Ha'. To enable this, instead of directly creating a \textit{stag} for each user's \emph{name}, our idea is to split the \emph{name} into a set of pairs composed of a sub-string with fixed-length $s$ and an integer. For example, assume the length of a sub-string is 2, 'Harry' can be split into
$\{(\text{'\#H'},1),(\text{'Ha'},2),(\text{'ar'},3),$ $(\text{'rr'},4),(\text{'ry'},5),(\text{'y\$'},6)\}$, where each integer part refers to the absolute position $pos$ of the sub-string in the \textit{name}, $\text{'\#'}$ is the start character and $\text{'\$'}s$ is the terminator. We only make a few changes to \textit{SecGraph} to enable it to support fuzzy search. Specifically, in ${\footnotesize\textsf{Update}}(\cdot)$ protocol (Alg.\ref{update}), the client sends $(w,id, pos,op)$ to the enclave, where $w$ is a sub-string (line 1). Then the enclave calculates $\textit{xtag} = (w||id||pos)$ (line 2) and $id_e = (id||pos) \oplus F_2(K_Z,w)$ (line 9). The posting list for each keyword $w$ is a set of $(id, pos)$ pairs, where $id$ is the vertex whose \emph{name} contains $w$ and $pos$ is the absolute position of $w$ in the \textit{name}. In ${\footnotesize\textsf{Conjunctive Search}}(\cdot)$ protocol (Alg.\ref{Secure Conjunctive Social Search}), the client sends the search keywords in the form of $w_1\wedge (w_2,\Delta_2)\wedge...\wedge (w_n,\Delta_i), i\in[2,n]$, where $\Delta_i$ is the relative position between $w_i$ and $w_1$ (line 1). Upon receiving the $PL$ from the server, the enclave obtains $(id || pos) = id_e \oplus F_2(K_Z,w_1)$ (line 6). After that, for each $(w_i,\Delta_i)$, the enclave calculates the $xtag_I=(w_i||id||pos+\Delta_i)$ (line 10). Then, the enclave performs the subsequent membership check in the same way as in Alg.\ref{Secure Conjunctive Social Search}.

% \noindent\textbf{Expand to secure boolean search:} According to the research\cite{DBLP:conf/crypto/CashJJKRS13, DBLP:journals/tifs/WangZJY22}, the secure boolean social search on encrypted graph including (1) boolean search in searchable normal form (SNF) and (2) disjunctive search. Our solution for the boolean search in SNF is similar to previous research. A secure boolean social search is in SNF if it is of the form $w_1 \wedge \phi(w_2,...,w_n)$ where $\phi(\cdot)$ is an arbitrary boolean formula. To send a secure boolean social search request, the client sends the $w_1 \wedge \phi(w_2,...,w_n)$ to the enclave, and the enclave replaces each $w_i,i \in [2,n]$ to boolean variables $k_i,i \in [2,n]$. For each $id$ corresponding to $w_1$, the value of $k_i,i \in [2,n]$ is set to \emph{true} if $(w_i||id),i \in [2,n]$ exists. Then, the enclave will evaluate the expression $\phi(k_2,...,k_n)$, if it is \emph{true}, the $id$ will be inserted into the result. On the other hand, the disjunctive search is in the form of $w_1 \vee ...  \vee w_n $, and \emph{SecGraph} is more efficient and direct than the PeGraph. Our solution is searching each $w_i,i \in [1,n]$ separately and executing the union operation, then the enclave deduplicates the result $id$ because we can parse the complete plaintext of the results directly in the enclave while the PeGraph stores secret sharing of them in two servers using different random values.
\subsection{Access Control for Multi-clients}
We further propose an effective approach to enable access control for multi-client data uploads in \textit{SecGraph}. Through analysis, it can be determined that there may be two phases during the search process requiring identity authentication. (1) Both ${\small\textsf{Single-keyword Search}}(\cdot)$ protocol and ${\small\textsf{Conjunctive Search}}(\cdot)$ protocol require loading the $PL$ for $w$ via a series $\textit{stag} = F_1(K_T,w||i),c\in[1,\textit{UpdateCnt}[w]]$, to achieve the access control for this phase, each client maintains a unique identifier $id_u$, and sends it to the enclave for generate the unique triplet secret key $K'_T, K'_Z, K'_X$. The enclave utilizes these keys to generate the \textit{stag}, \textit{ind} and encrypts the index respectively during the ${\small\textsf{Update}}(\cdot)$ protocol. In this case, the client can only retrieve and decrypt the posting lists $PL$ they have uploaded; (2) ${\small\textsf{Conjunctive Search}}(\cdot)$ protocol also requires the authentication of the $PL$ for other $w_i,i\in[2,n]$. Our main idea is to prevent a certain client from generating the \textit{xtag} uploaded by other clients. Specifically, the enclave needs to use the identifier of the client $id_u$ when generating the \textit{xtag}, i.e., $\textit{xtag}=(w||id_{in}||id_u)$. In this case, when the client filters the $id$ in the $PL$ retrieved in the first stage, only the $\textit{xtag}=(w||id_{in}||id_u)$ pairs uploaded by themselves can pass the filtering.

\subsection{Security Analysis}\label{Security Analysis-1}
\subsubsection{Confidentiality}
Before presenting a formal security analysis to show the security guarantee of \textit{SecGraph}, 

% we first elucidate the specific implications of WRP, KRPR, and IP. For ease of explanation, we simplify the \textit{TSet} to a multi-map structure, which searches for the value list (i.e., identifiers) corresponding to a keyword $w$ through a label called \textit{sterm}. 
% (1) WRP is the \textit{Whole result pattern} which leaks the search results for the conjunctive search. (2) KPRP is \textit{Keyword pair result pattern} which leaks the intersection of the $id_{in}$ of $w_1$ with the $id_{in}$ of every other $w_i,i\in[2,n]$. (3) IP is \textit{Intersection result pattern} which leaks intersection of the $id_{in}$ of each $w_1$ in a sequence of searches. 

We further define the leakage functions and then use them to prove the security. In ${\small\textsf{Setup}}(\cdot)$ protocol, \emph{SecGraph} leaks nothing to the server except for the empty encrypted database $G_e$. Thus we have $\mathcal{L}^{Stp} = ( |$\emph{TSet}$|,|$\emph{ITSet}$|,|$\emph{XSet}$|)$, where $|$\emph{TSet}$|,|$\emph{ITSet}$|$ and $|$\emph{XSet}$|$ are ciphertext lengths of data structures of \emph{TSet}, \emph{ITSet} and \emph{XSet} respectively. In ${\small\textsf{Update}}(\cdot)$ protocol, \textit{SecGraph} leaks access on \emph{TSet}, \emph{ITSet}, and \emph{XSet}. Thus, we have $\mathcal{L}^{Updt}= (op,|$\emph{TSet}$[\textit{stag}]|,|$\emph{ITSet}$[\textit{ind}]|,$ $|$\emph{XSet}$[id_f])$, where $op=\textit{insert}/\textit{delete}$ denotes the update operation, \emph{TSet}$[\textit{stag}]$ indicates the encrypted identifier to be inserted in \emph{TSet} with its $\textit{stag}$, \emph{ITSet}$[\textit{ind}]$ indicates the encrypted $\textit{stag}$ to be inserted in \emph{ITSet} with its $\textit{ind}$ and \emph{XSet}$[id_f]$ indicates the fingerprint to be inserted in \emph{XSet} with its $id_f$. In ${\small\textsf{Search}}(\cdot)$ protocol, \emph{SecGraph} leaks the access pattern on \emph{TSet} when the server finds the matching entries in \emph{TSet} associated with $w_1$, defined as $\mathrm{ap}_{\text{\emph{TSet}}}$ and on \emph{XSet} when the enclave locates the desired \emph{sub-filters}, defined as $\mathrm{ap}_{\text{\emph{XSet}}}$. Thus, we have $\mathcal{L}^{Srch}=(\mathrm{ap}_{\text{\emph{TSet}}},\mathrm{ap}_{\text{\emph{XSet}}}).$ 

Following the security definition in \cite{DBLP:conf/esorics/ZuoSLSP19}, we give the formal security definitions.

\begin{define}
	Let $\Pi = ({\small\textsf{Setup}}, {\small\textsf{Update}}, {\small\textsf{Search}})$ be our \textit{SecGraph} scheme. Consider the probabilistic experiments $\small\mathbf{Real}_{\mathcal{A}}(\lambda)$ and $\small\mathbf{Ideal}_{\mathcal{A},\mathcal{S}}(\lambda)$ with a probabilistic polynomial-time($\small\mathrm{\textsf{PPT}}$) adversary and a stateful simulator that gets the leakage function $\small\mathcal{L}$, where $\lambda$ is a security parameter. The leakage is parameterized by $\mathcal{L}^{Stp},\mathcal{L}^{Updt}$ and $\mathcal{L}^{Srch}$ depicting the information leaked to $\small\mathcal{A}$ in \mbox{each procedure.}
	
	$\mathbf{Real}_{\mathcal{A}}(\lambda)$: The challenger initialises necessary data structures by running {\small\textsf{Setup}}. When inputting graphs chosen by $\small\mathcal{A}$, it makes a polynomial number of updates (i.e., addition and deletion). Accordingly, the challenger outputs the encrypted database $G_e=( \textit{TSet}, \textit{ITSet}, \textit{XSet})$ with {\small\textsf{Update}} to $\small\mathcal{A}$. Then, $\small\mathcal{A}$ repeatedly performs graph search. In response, the challenger runs \textit{Search} to output the result to $\small\mathcal{A}$. Finally, $\small\mathcal{A}$ outputs a bit.
	
	$\mathbf{Ideal}_{\mathcal{A},\mathcal{S}}(\lambda)$: Upon inputting graphs chosen by $\small\mathcal{A}$, $\mathcal{S}$ initialises the data structures and creates encrypted database $G_e=( \textit{TSet}, \textit{ITSet}, \textit{XSet})$ based on $\mathcal{L}^{Stp}$, and passes them to $\small\mathcal{A}$. Then, $\small\mathcal{A}$ repeatedly performs range queries. $\mathcal{S}$ simulates the search results by using $\mathcal{L}^{Updt}$ and $\mathcal{L}^{Srch}$ and returns them to $\small\mathcal{A}$. Finally, $\small\mathcal{A}$ outputs a bit.
	
	We say $\small\Pi$ is $\mathcal{L}$-adaptively-secure if for any $\small\mathrm{\textsf{PPT}}$ adversary $\small\mathcal{A}$, there exists a simulator $\mathcal{S}$ such that $|\mathrm{Pr}[\mathbf{Real}_{\mathcal{A}}(\lambda)=1]-\mathrm{Pr}[\mathbf{Ideal}_{\mathcal{A},\mathcal{S}}(\lambda)=1]| \leq negl(\lambda)$, where $negl(\lambda)$ denotes a negligible function in $\lambda$.
\end{define}

\begin{theorem}\label{thm:thm1} 
(Confidentiality of SecGraph). Assuming $(F_1,F_2,F_3)$ are secure PRFs and $(H_1,H_2)$ are secure hash functions. SecGraph is $\mathcal{L}$-secure against an adaptive adversary.
\end{theorem}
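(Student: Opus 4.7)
The plan is to prove Theorem~\ref{thm:thm1} via the standard real/ideal simulation paradigm: I would exhibit a $\textsf{PPT}$ simulator $\mathcal{S}$ that, given only $\mathcal{L}^{Stp}$, $\mathcal{L}^{Updt}$, and $\mathcal{L}^{Srch}$, produces transcripts indistinguishable from those observed by the adversary in the real experiment. The argument proceeds as a sequence of game hops, replacing one primitive at a time and charging the gap to the PRF assumption on $(F_1,F_2,F_3)$ or the collision resistance of $(H_1,H_2)$.

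In the setup hop, since $\mathcal{L}^{Stp}$ contains only the ciphertext container sizes, $\mathcal{S}$ allocates empty \textit{TSet}, \textit{ITSet}, and \textit{XSet} of the advertised sizes. For each update request, I first replace $F_1,F_2,F_3$ with lazily-sampled random functions; by PRF security, the distinguishing advantage is negligible in $\lambda$. Under these random functions, $\textit{stag}=F_1(K_T,w\|\textit{UpdateCnt}[w])$, $\textit{ind}=F_3(K_X,w\|id_{in})$, $\textit{stag}_e = \textit{UpdateCnt}[w]\oplus F_2(K_Z,w)$, and $id_e=(id_{in}\|\textit{weight})\oplus F_2(K_Z,w)$ are uniform and independent (the counter $\textit{UpdateCnt}[w]$ guarantees that each insertion supplies a fresh PRF input, so $\mathcal{S}$ never programs the same point twice except when the real protocol also reuses it). A second hop replaces $H_1(\textit{xtag})$ and $H_2(\textit{xtag})$ with lazily-sampled outputs $\mu,\delta$, with the gap bounded by hash collision resistance. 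From this point $\mathcal{S}$ can fabricate every field of $\mathcal{M}$ from only $\mathcal{L}^{Updt}$.

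For searches, $\mathcal{S}$ consults $\mathrm{ap}_{\text{\textit{TSet}}}$ to replay the previously committed random \textit{stag} values for $w_1$ and the corresponding $id_e$ payloads, and consults $\mathrm{ap}_{\text{\textit{XSet}}}$ to replay which \textit{sub-filter} identifiers $id_f$ are loaded during the LDCF membership check. Because the check is performed inside the enclave and only the post-attestation encrypted $RL$ crosses the boundary, no $\textit{xtag}_i$ for $i\ge 2$ and no per-vertex match bit appears in the transcript; this is precisely where the formal argument suppresses the KPRP, IP, and WRP leakage that afflicts OXT, since the simulator is able to complete the search without ever being told which intermediate $id_{in}$ survived filtering.

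The main obstacle I expect is the adaptive consistency of the simulated random functions across arbitrary interleavings of insertions, deletions, and searches. In particular, the deletion branch of Alg.~\ref{update} (lines~14--20 and 26--28) overwrites $\textit{TSet}[\textit{stag}]$ with the latest $id_e'$ and rebinds $\textit{ITSet}[\textit{ind}']$ to $\textit{stag}_e$, so $\mathcal{S}$ must carefully reprogram its lazy tables to preserve the joint distribution of $(\textit{stag},id_e',\textit{ind}',\textit{stag}_e,\delta)$; getting this invariant right is what yields forward privacy (a fresh counter value is used for each new insertion, so past searches cannot be linked) and backward privacy (after deletion, the only residue is a pseudorandom $\textit{stag}$ and a deleted fingerprint $\delta$ that are statistically independent of the deleted identifier). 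Assembling the hybrid chain and invoking the triangle inequality then yields $|\Pr[\mathbf{Real}_{\mathcal{A}}(\lambda)=1]-\Pr[\mathbf{Ideal}_{\mathcal{A},\mathcal{S}}(\lambda)=1]|\le\mathrm{negl}(\lambda)$, completing the proof.
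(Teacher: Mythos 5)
Your proposal follows essentially the same route as the paper's proof: a leakage-parameterized real/ideal simulation in which the simulator fabricates the encrypted structures and tokens from $\mathcal{L}^{Stp}$, $\mathcal{L}^{Updt}$, $\mathcal{L}^{Srch}$, with indistinguishability reduced to the pseudorandomness of $(F_1,F_2,F_3)$ and the security of $(H_1,H_2)$, and with the same key observation that the \textit{xtag} generation and membership check stay inside the enclave, which is exactly how the paper argues suppression of KPRP/IP/WRP leakage. The only difference is presentational (you use PRF hybrids plus collision resistance where the paper models the primitives as random oracles, and you elaborate more on reprogramming consistency for the deletion branch), so the argument is correct and aligned with the paper.
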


\begin{proof}
We model the PRFs and the hash functions as random oracles $\{\mathcal{O}_{F_1},\mathcal{O}_{F_2},$ $\mathcal{O}_{F_3},\mathcal{O}_{H_1},\mathcal{O}_{H_2}\}$ and sketch the execution of the simulator $\mathcal{S}$. In ${\small\textsf{Setup}}(\cdot)$ protocol, $\mathcal{S}$ simulates the encrypted database based on $\mathcal{L}^{Stp}$, which has the same size as the real one. Specifically, it includes two dictionaries $\mathcal{D}_1$ and $\mathcal{D}_2$ and a set $\mathcal{T}$. $\mathcal{S}$ further simulates the keys in the enclave by generating random strings ($\overline{k}_{1},\overline{k}_{2},\overline{k}_{3}$), which are indistinguishable from the real ones. When the first graph search sample $(w_1,w_2)$ is sent, $\mathcal{S}$ generates simulated tokens $\tilde{t}_i=\mathcal{O}_{F_1}(\tilde{k}_1||w_1||i)$ from $c$ to 1 and $c$ is the number of matched entries from $\mathcal{L}^{Stp}$. For each matching value $\alpha$ in $\mathcal{D}_1$ with the address $\tilde{t}_i$, another random oracles $\mathcal{O}_{F_2}$ is operated as $\tilde{R}=\mathcal{O}_{F_2}(\tilde{k}_2||w_1)\oplus \alpha$ to obtain $\tilde{R}$ inside, where $\tilde{R}$ has the same length as the real one. With $\tilde{k}_3$, three random strings $\delta = \mathcal{O}_{H_2}(\tilde{k}_3||w_2||\tilde{R}),\beta=\mathcal{O}_{H_1}(\delta)$ and $\gamma=\beta \oplus \sigma$ are calculated to check whether $\delta$ is in any one of the locations of the simulated set $\mathcal{T}$. If yes, $\mathcal{S}$ adds $\tilde{R}$ into the results. When a new triplet $(w, id, \textit{weight})$ is added, the results can also be simulated based on $\mathcal{L}^{Updt}$. Due to the pseudorandomness of PRFs and the hash function, $\mathcal{A}$ cannot distinguish between the tokens and results of \mbox{$\mathbf{Real}_{\mathcal{A}}(\lambda)$ and $\mathbf{Ideal}_{\mathcal{A},\mathcal{S}}(\lambda)$.} Furthermore, as the generation and existence check of \textit{xtag} are entirely confined to the enclave, no result pattern leakages are exposed.
\end{proof}

\subsubsection{Forward and backward privacy} We first define the Type-III backward privacy according to the research\cite{DBLP:conf/esorics/ZuoSLSP19}. The Type-III backward privacy refers to deleted data that is no longer searchable in searches issued later but will leak when all updates (including deletion) related to the keyword $w$ occur and when a previous addition has been canceled by which deletion. 

\begin{theorem}\label{thm:thm2} 
(Forward and backward privacy of SecGraph). SecGraph ensures forward security and Type-III backward privacy.
\end{theorem}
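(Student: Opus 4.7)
The plan is to reduce forward security and Type-III backward privacy to the pseudorandomness of $(F_1,F_2,F_3)$ and the random-oracle behavior of $(H_1,H_2)$, together with the structural guarantees that (i) the counter table \emph{UpdateCnt} lives inside the enclave and is never exposed to the server, and (ii) deletion physically removes entries from \emph{TSet}, \emph{ITSet}, and the LDCF-encoded \emph{XSet}. First I would recall the standard simulator-based formulations from \cite{DBLP:conf/esorics/ZuoSLSP19}: forward security requires that an update leakage function for a fresh insertion be writable as $\mathcal{L}^{Updt}(op,|\mathrm{entry}|)$ with no dependence on the affected keyword, while Type-III backward privacy allows leakage of the timestamps of all updates to $w$ and the matching of each deletion to its prior insertion, but nothing about the deleted identifier itself.

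For forward security I would argue as follows. When a new triple $(w,id_{in},\textit{weight})$ is inserted, the enclave computes $\textit{stag}=F_1(K_T,w\|c{+}1)$, $\textit{ind}=F_3(K_X,w\|id_{in})$, $\textit{stag}_e=(c{+}1)\oplus F_2(K_Z,w)$, and $\delta=H_2(w\|id_{in})$, where $c+1$ is a counter value that has never been used in any prior computation visible to the server, because $K_T,K_X,K_Z$ and \emph{UpdateCnt} remain sealed in the enclave. By a standard hybrid argument, replacing each PRF evaluation at a fresh input by a uniformly random string is indistinguishable to any $\mathrm{\textsf{PPT}}$ adversary; thus the simulator $\mathcal{S}$ can produce the transcripts $(\textit{stag},id_e,\textit{ind},\textit{stag}_e,\mu,\delta)$ as uniformly random strings of the correct length, reusing only the leakage $\mathcal{L}^{Updt}=(op,|\emph{TSet}[\textit{stag}]|,|\emph{ITSet}[\textit{ind}]|,|\emph{XSet}[id_f]|)$. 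Since the simulation contains no trace of $w$ or $id_{in}$, subsequent search tokens generated for $w$ cannot be linked by the adversary to these past insertion tokens beyond what the search-time access pattern already reveals, which is exactly the forward-security condition.

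For Type-III backward privacy I would use the fact that the deletion branch of Alg.\,\ref{update} overwrites the removed slot in \emph{TSet} with the last entry and then shrinks the counter, removes the corresponding \textit{ind} from \emph{ITSet}, and invokes $\textsf{Delete}^{\textsf{LDCF}}$ on the fingerprint $\delta$ in \emph{XSet}. After the operation the server's state contains no ciphertext that decrypts to the deleted $id_{in}$ and no fingerprint of $(w\|id_{in})$ in the LDCF. Consequently, any later execution of ${\textsf{Conjunctive Search}}(w\wedge\cdots)$ generates \textit{stag} values only for $i\in[1,\emph{UpdateCnt}[w]]$ after the deletion, so the deleted entry can appear in no future $PL$, and the LDCF membership test on the corresponding $\delta$ returns $0$. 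The simulator reproduces this behavior by maintaining an ideal list of currently-live entries and replaying the access patterns $\mathrm{ap}_{\emph{TSet}}$ and $\mathrm{ap}_{\emph{XSet}}$; the extra information revealed during deletion, namely that a particular previously-written address in \emph{TSet} is being rewritten and that a particular LDCF slot is being cleared, is exactly the matching between the insertion and the cancelling deletion that Type-III permits.

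The main obstacle I anticipate is the overwrite step: when the enclave loads $\textit{stag}'=F_1(K_T,w\|\emph{UpdateCnt}[w])$ and then writes $id_e'$ at the address $\textit{stag}$ of the deleted entry, the server observes both addresses and could in principle correlate them across updates. I would handle this by showing that the two addresses are PRF outputs on fresh counter values relative to the current search transcript, and that the correlation exposed, namely ``the $j$-th insertion for some keyword was cancelled by this deletion,'' coincides with the update history leakage already permitted by the Type-III definition. Combining this with the confidentiality argument from Theorem \ref{thm:thm1} then yields the full statement.
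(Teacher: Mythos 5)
Your proposal is correct and follows essentially the same route as the paper's (much terser) proof: forward security comes from the enclave-held \emph{UpdateCnt} and sealed keys making each insertion a fresh PRF evaluation unlinkable to previously issued search tokens, and backward privacy holds because the deletion-time correlations (when updates to $w$ occurred and which insertion each deletion cancels) are exactly the leakage permitted by Type-III, while the XOR-encrypted identifiers keep the removed entries hidden. Your simulator-based elaboration and explicit handling of the overwrite step merely flesh out what the paper asserts in a few sentences, at the same level of rigor.
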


\begin{proof}
Forward security is straightforward since the data structure \emph{UpdataCnt} ensures that $\small\mathcal{A}$ cannot generate search tokens to retrieve newly added identifiers when adding a new triplet. As for backward privacy, remembering when the entries in \emph{ITSet} were added and deleted leaks when additions and deletions for $w$ took place. Extracting all update counts and correlating them with the update timestamps reveals the specific addition that each deletion canceled. Nevertheless, the identifiers are encrypted by XORing a PRF value, the server cannot learn which identifiers contained $w$ but have not been removed. Based on these leakages, \emph{SecGraph} guarantees Type-III backward privacy.
\end{proof}

\section{Design of VSecGraph}
In this section, we assume that the client is absolutely honest, but the server can be malicious and return incorrect results. To address this, we propose \textit{VSecGraph}.

\subsection{Design Overview of \textit{VSecGraph}}
Verification of the conjunctive search is a challenging problem, which consists of two main phases\cite{DBLP:journals/tdsc/GuoLTCL24, DBLP:journals/tcc/LiMMLCLD23}. In the first phase, the client uses the proof returned by the server and the authenticated data structure (ADS) stored in the trusted area to verify the identifiers list corresponding to $w_1$. In the next phase, the client filters the final results based on whether each identifier has relationships with all of the $w_i, i \in [2,n]$. Also, it is imperative to verify that the filtering process is correct.

Guo's scheme\cite{DBLP:journals/tdsc/GuoLTCL24} supports the verifiable conjunctive search on encrypted textual data. For the textual data it supports, deleting a specific file identifier $fid$ removes all keywords contained in that file. In contrast, the graph database supports more fine-grained deletions, allowing the removal of only a specific $(w,id)$ pair. PAGB\cite{DBLP:journals/tkde/WuLSX23} supports the verification of search results of encrypted graph database, but it does not support conjunctive search. Besides, PAGB relies on the \textit{ADS} stored on the blockchain for verification, which introduces additional communication overhead during the update and search phase and design complexity. Therefore, no existing scheme currently supports verifying conjunctive search on encrypted graphs. Thus, we propose a novel \textit{procedure-oriented verification} method that leverages the powerful confidential computing capabilities of trusted hardware to verify the data structures loaded into the enclave during protocol execution. 

This design offers the following benefits: (1) It minimizes the communication overhead required for transmitting proofs and \textit{ADS}. (2) It avoids local verification and filtering operations on the client side. 

\subsection{Construction of \textit{VSecGraph}}
We now present the detailed construction of \textit{VSecGraph} by introducing the modifications made to each protocol.

\subsubsection{Social Graph Setup} First, we introduce the additional \textit{ADS} that the enclave needs to initialize in the ${\small\textsf{Setup}}(\cdot)$ protocol. Apart from \textit{UpdateCnt} and $\mathcal{F}$, two additional modifications are required: (1) Initialize a map $M_h$ to store the multiset hash $h_w$ generated by $PL_e$ for each $w$. $M_h$ is used to verify the posting list $PL$ of a certain $w$. (2) Modify the \textit{IndexTree} so that each node stores the hash value $h_f$ of the \textit{sub-filter} at the corresponding position.

\setlength{\textfloatsep}{0.1em}
\begin{algorithm}[t]
\small
\caption{{\small\textsf{VLoad}}}
\label{VLoad}
\SetNlSty{text}{}{:}
\SetAlgoNoLine
\SetAlgoNoEnd
% \SetAlgoVlined
% \LinesNumbered
% \begin{multicols}{2}
\KwIn{Load token $\tau_l$, \textit{ADS} and encrypted graph $G_e = \textit{TSet,ITSet,XSet}$}
\KwOut{Loaded item $I_l$ or \textit{false}.}

\nl \If{$\tau_l$ is $SL$}{
\nl \For{each \textit{stag} in $SL$}{
\nl $id_e = \textit{TSet}[\textit{stag}]$;\\
\nl $(id||\textit{weight})=id_e\oplus F_2(K_Z,w)$;\\
\nl Insert $(id||\textit{weight})$ into the $PL$;\\
}
\textcolor{blue}{\nl $I_l = PL$, $h_w' \equiv_{\mathcal{H}} \mathcal{H}(PL)$, find $h_w$ from \textit{ADS};\\
\nl \If{$h_w' = h_w$}{
\nl \textbf{return} $I_l$;\\
}
}
}

\nl \If{$\tau_l$ is $id_f$}{
\nl $\textit{sub-filter}=\textit{XSet}[id_f], I_l=\textit{sub-filter}$;\\
\textcolor{blue}{\nl $h_s' = H_3(\textit{sub-filter})$, find $h_s$ from \textit{ADS};\\
\nl \If{$h_s' = h_s$}{
\nl \textbf{return} $I_l$;\\
}
}
}

\nl \If{$\tau_l$ is \textit{stag}}{
\textcolor{blue}{\nl $id_e = \textit{TSet}[\textit{stag}], (id_{in}||\textit{weight}||i')=id_e\oplus F_2(K_Z,w)$;}\\
\nl $I_l = (id_{in}||\textit{weight})$, find $i$ from $\tau_l$;\\
\textcolor{blue}{\nl \If{$i' = i$}{
\nl \textbf{return} $I_l$;\\
}
}
}

\nl \If{$\tau_l$ is \textit{ind}}{
\textcolor{blue}{\nl $\textit{stag}_e = \textit{ITSet}[\textit{ind}], (i||id_{in}') = \textit{stag}_e\oplus F_2(K_Z,w)$;}\\ 
\nl $I_l = i$, find $id_{in}$ from $\tau_l$;\\
\textcolor{blue}{\nl \If{$id_{in}' = id_{in}$}{
\nl \textbf{return} $I_l$;\\
}
}
}

\textcolor{blue}{\nl \textbf{return} \textit{false};}\\
\end{algorithm}

\begin{algorithm*}[t]
\small
\caption{{\small\textsf{Update}} \textit{of VSecGraph}}
\label{update of VSecGraph}
\SetNlSty{text}{}{:}
\SetAlgoNoLine
\SetAlgoNoEnd
% \SetAlgoVlined
% \LinesNumbered
\begin{multicols}{2}
\KwIn{Secrity keys $K_s$, \emph{UpdateCnt}, updated token $(w, id, \textit{weight},op)$, and the \emph{IndexTree}.}
\KwOut{Encrypted graph $G_e=(\textit{TSet, ITSet, XSet})$.}
\underline{Client:}\\
\nl Send update token $(w, id_{in}, \textit{weight},op)$ to the enclave;\\
\underline{Enclave:}\\
\nl $\textit{xtag} = (w||id_{in}),\mu = H_1(xtag),\delta = H_2(xtag)$;\\
\nl $\mathcal{M} = \bot$;\\
\textcolor{blue}{\nl Get corresponding $id_f$;\\
\nl $\textit{sub-filter} = {\footnotesize\textsf{VLoad}}(id_f, \textit{IndexTree}[id_f], G_e)$;}\\

\nl \textbf{if} $op=\textit{insert}$ \textbf{then}\\
\nl \textcolor{blue}{~~~$M_h[w] \equiv_{\mathcal{H}} M_h[w] +_{\mathcal{H}} \mathcal{H}(id_{in})$;}\\
\nl ~~~\textbf{if} $\emph{UpdateCnt}[w] = \perp$ \textbf{then}\\
\nl ~~~~~~~~$\textit{UpdateCnt}[w] = 0$;\\
\nl ~~~$\textit{UpdateCnt}[w] \text{++}$;\\
\nl ~~~$\textit{stag} = F_1(K_T,w|| \textit{UpdateCnt}[w])$;\\
\textcolor{blue}{\nl ~~~$id_e = (id_{in}||\textit{weight}||\textit{UpdateCnt}[w]) \oplus F_2(K_Z,w)$;\\}
\nl ~~~$\textit{ind} = F_3(K_X,w||id_{in})$;\\
\textcolor{blue}{\nl ~~~$\textit{stag}_e = (\textit{UpdateCnt}[w]||id_{in}) \oplus F_2(K_Z,w)$;}\\
\textcolor{blue}{\nl ~~~\textbf{if} ${\footnotesize\textsf{Insert}}^{\footnotesize\textsf{LDCF}}(\textit{sub-filter}, \delta, \mu) = \delta'$  \textbf{then}}\\
\textcolor{blue}{\nl ~~~~~~$(\textit{sub-filter}_0, \textit{sub-filter}_1)= {\footnotesize\textsf{Split}}^{\footnotesize\textsf{LDCF}}(\textit{sub-filter}, \delta')$;} \\
\textcolor{blue}{\nl ~~~~~~$\mathcal{M}=(\textit{sub-filter}_{(0,1)},\textit{stag}, id_e, ind, \textit{stag}_e, op)$;}\\
\textcolor{blue}{\nl ~~~\textbf{else}} \\
\textcolor{blue}{\nl ~~~~~~$\mathcal{M}=(\textit{sub-filter}, \textit{stag}, id_e, ind, \textit{stag}_e, op)$;}\\
\nl \textbf{else}\\
\nl \textcolor{blue}{~~~$M_h[w] \equiv_{\mathcal{H}} M_h[w] -_{\mathcal{H}} \mathcal{H}(id_{in})$;}\\
\nl ~~~$\textit{stag}' = F_1(K_T,w||\textit{UpdateCnt}[w])$;\\
\textcolor{blue}{\nl ~~~$(id_{in}'||\textit{weight}) = {\footnotesize\textsf{VLoad}}(\textit{stag}',\textit{UpdateCnt}[w], G_e)$;}\\
\nl ~~~$\emph{UpdateCnt}[w] \text{-}\text{-}$;\\ 
\nl ~~~$\textit{ind}' = F_3(K_X,w||id_{in}'), \textit{ind} = F_3(K_X,w||id_{in})$;\\
\textcolor{blue}{\nl ~~~$i = {\footnotesize\textsf{VLoad}}(\textit{ind},id_{in},G_e), \textit{stag} = F_1(K_T,w||i)$;\\
\nl ~~~$\textit{LDCF.Delete}(\textit{sub-filter}, \delta, \mu)$;\\
\nl ~~~$id_e'=(id_{in}'||\textit{weight}'||i) \oplus F_2(K_Z,w)$;\\
\nl ~~~$\textit{stag}_e = (i||id_{in}') \oplus F_1(K_T,w)$;\\
\nl ~~~$\mathcal{M}=(\textit{sub-filter}, \textit{stag}, id_e', \textit{ind}, \textit{ind}', \textit{stag}_e, op)$;}\\
\nl Send $\mathcal{M}$ to server, \textcolor{blue}{update \textit{IndexTree};}\\
\underline{Server:}\\
\nl \textbf{if} $op=\textit{insert}$ \textbf{then}\\
\nl ~~~$\textit{TSet}[\textit{stag}] = id_e$, $\textit{ITSet}[\textit{ind}] = \textit{stag}_e$; \\
\nl \textbf{else}\\
\nl ~~~$\textit{TSet}[\textit{stag}] = id_e'$, $\textit{ITSet}[\textit{ind}'] = \textit{stag}_e$;\\
\nl ~~~Delete $\textit{ITSet}[\textit{ind}]$;\\
\nl \textcolor{blue}{Update \emph{XSet} according to the \emph{sub-filter} sent by the enclave;}\\
\end{multicols}
\end{algorithm*}

\subsubsection{Protocol Executed in Enclave}
Next, we introduce the enhanced protocol ${\small\textsf{VLoad}}(\cdot)$ executed in the enclave, which enables a verifiable version of ${\small\textsf{Load}}(\cdot)$. The ${\small\textsf{VLoad}}(\cdot)$ takes load token $\tau_l$, \textit{ADS} and the encrypted graph $G_e$ as input, and outputs the loaded item $I_l$ if the verification succeeds, otherwise, it outputs $\textit{false}$. Note that we have modified the computation methods for generating $\textit{stag}_e$ and $id_e$ to $\textit{stag}_e=(i||id_{in})\oplus F_2(K_Z,w)$ and $id_e=(id_{in}||\textit{weight}||i) \oplus F_2(K_Z,w)$
respectively, endowing them with verifiable capabilities. 

The pseudocode is presented in Alg.\ref{VLoad} with the blue code highlighting the parts that differ from the ${\small\textsf{Load}}(\cdot)$. Specifically, there are four types of data loading scenarios, and we have designed verification schemes for each of them. (1) For the loaded item $PL$ which is the posting list of a certain $w$, the enclave computes the multiset hash value $h_w' \equiv_{\mathcal{H}} \mathcal{H}(PL)$ of it, and compares it with $M_h[w]$. If $h_w'$ equals to $M_h[w]$, it returns the loaded item $PL$ (lines 1-8). (2) For the loaded item \textit{sub-filter}, enclave computes the hash value $h_s'$ of \textit{sub-filter} and finds the corresponding $h_s$ from \textit{IndexTree}. If $h_s' = h_s$, it returns the \textit{sub-filter} (lines 9-13). (3) When the load token is a $\textit{stag}=F_1(K_T,w||i)$, it indicates the need to load the $i$-th $id_e$ from the encrypted posting list corresponding to $w$. Upon the enclave loading the $id_e$, it decrypts it to get $i'$ and the loaded item $(id_{in}||\textit{weight})$. If $i' = i$, it returns $(id_{in}||\textit{weight})$ (lines 14-18). (4) When the load token is a $\textit{ind}=F_1(K_T,w||id_{in})$, it indicates the need to load the $i$ corresponding to the $id_{in}$. Upon enclave loading the $\textit{stag}_e$, it decrypts it to get $id_{in}'$ and the loaded item $i$. If $id_{in}'=id_{in}$, it returns $(id_{in}||\textit{weight})$ (lines 19-23).

\subsubsection{Social Graph Update}
The ${\small\textsf{Update}(\cdot)}$ protocols of \textit{VSecGraph} is shown in Alg.\ref{update of VSecGraph}, and we highlight the modified parts in blue. The main modifications made to the ${\small\textsf{Update}(\cdot)}$ protocols are as follows: (1) When updating $G_e$, the ADS stored within the enclave is also updated simultaneously. Specifically, \textit{VSecGraph} loads the \textit{sub-filter} into the enclave (lines 4-5) for modification (lines 15-16 and 27) and computes the modified \textit{sub-filter}'s hash value $h_f$ to update the corresponding node in the \textit{IndexTree} (lines 31). Additionally, when inserting a $(w,id_{in})$ pair, the multiset hash of $w$ is updated using $M_h[w] \equiv_{\mathcal{H}} M_h[w] \pm _{\mathcal{H}} \mathcal{H}(id_{in})$ (lines 7 and 21). To ensure that $id_e/\textit{stag}_e$ is verifiable, the key is that the 
$id_e/\textit{stag}_e$ loaded through a specific $i/id_{in}$ corresponds correctly to $i/id_{in}$. So we append the $i/id_{in}$ field information to $id_e/\textit{stag}_e$ (lines 12, 14, and 28-29), which can serve as \textit{proof} for verification after decryption. (2) Any loaded data structure must undergo verification, which can be achieved via ${\small\textsf{VLoad}(\cdot)}$ protocol. 

\subsubsection{Conjunctive Search} The search protocol in \textit{VSecGraph} only requires replacing the ${\small\textsf{Load}(\cdot)}$ operation in the \textit{VSecGraph} protocol with ${\small\textsf{VLoad}(\cdot)}$. Specifically, $PL={\small\textsf{Load}}(SL,G_e)$ in line 6 is rewritten by $PL={\small\textsf{VLoad}}(SL,w_1,G_e)$, and $\mathcal{F}[id_f]={\small\textsf{Load}}(id_f,G_e)$ in line 13 is replaced by $\mathcal{F}[id_f]={\small\textsf{VLoad}}(id_f,\textit{IndexTree}[id_f],G_e)$. The same applies to \textit{Fuzzy Search}. 

\subsection{Reducing Enclave-Storage Cost of VSecGraph}
We analyze the storage overhead of the \textit{ADS} stored within the enclave. The hashes of \textit{sub-filters} stored in the \textit{IndexTree} only occupy memory in the KB range. However, $M_h$ needs to store the multiset hash value $h_w$ for each $w$, causing its memory usage to quickly reach the 100 MB level for large datasets. For example, assuming there are $3 \times 10^6$ numbers of $w$, $M_h$ would consume over 100 MB of trusted memory. To this end, we propose an improved scheme called \textit{VSecGraph-A}, which utilizes a more compact \textit{ADS} to verify $h_w$. Specificially, we introduce the RSA accumulator\cite{DBLP:conf/eurocrypt/BariP97} to generate an aggregated \textit{ADS}. In this case, we can maintain the $M_h$ outside of the enclave. In an RSA accumulator, it is possible to generate a proof $\pi$ for any element $x$ in a set $X$ to prove the existence of that element. We can leverage this property to verify whether the multiset hash $h_w$ corresponding to a given $w$ truly exists. To achieve this, we need to convert $h_w$ into a prime number and embed it into the accumulator for verification. We then introduce the construction of \textit{VSecGraph-A} as follow:

(1) In ${\small\textsf{Setup}}(\cdot)$ protocol, enclave invokes ${\small\textsf{Setup}}^{\small\textsf{Acc}}(1^\lambda)$ to generates the parameters $p,q,n$ and $g$ where $n=p\cdot q$, both $p$ and $q$ is prime. It generates an empty $Ac$ based on these parameters. Like\cite{DBLP:conf/acns/LiLX07}, we generate the $p$ and $q$ as safe prime, which means $(p-1)/2$ and $(q-1)/2$ are also prime; (2) In ${\small\textsf{Update}}(\cdot)$ protocol, enclave computes the initial multiset hash $h_w$ for $w$ when it is inserted into the system at the first time, and generates prime $p = H_p(H_4(w)||h_w)$ for it through the method $H_p(\cdot)$ proposed in \cite{DBLP:conf/eurocrypt/BariP97}. Then, enclave invokes the ${\small\textsf{Update}}^{\small\textsf{Acc}}(1, p, Ac)$ to update the $Ac$ inside enclave. Note that, since the enclave is aware of the parameters $p$ and $q$, it can calculate $Ac$ using fast exponentiation. Besides, it sends $(H_4(w), h_w)$ and $p$ to the cloud server, and the cloud server stores $ M_h[H_4(w)]=h_w$ and calculates the product $x_p = x_p \cdot H_p(H_4(w)||h_w)$. For existent $h_w$, enclave loads it using $\small\textsf{VLoad}(\cdot)$. Specifically, enclave loads $h_w$ by load token $\tau_l=H_4(w)$ and the proof $\pi=g^{x_p/H_p(\tau_l || h_w)} \text{ mod } n$ calculated by server, and verifies the $h_w$ by invoking ${\small\textsf{Verify}}^{\small\textsf{Acc}}(H_p(H_4(w)||h_w), \pi, Ac)$. After that, enclave updates the $h_w$ to $h_w'$, and updates the $Ac$ by invoking the ${\small\textsf{Update}}^{\small\textsf{Acc}}(0, H_p(H_4(w)||h_w), Ac)$ and ${\small\textsf{Update}}^{\small\textsf{Acc}}(1, H_p(H_4(w)||h_w'), Ac)$ sequentially which deletes the old $h_w$ and insert the new one i.e., $h_w'$. At last, the enclave sends $h_w$ and $h_w'$ to the server for updating the product $x_p$. (3) During the $\small\textsf{Conjunctive Search}(\cdot)$ protocol, when retrieving the posting list $PL$ for a given $w$, the server simultaneously returns the proof $\pi$ of $h_w$ where $h_w=M_h[H_4(w)]$. Then, enclave invokes ${\small\textsf{Verify}}^{\small\textsf{Acc}}(H_p(H_4(w)||h_w), \pi, Ac)$ to verify the $PL$. If the verification successes and $\mathcal{H}(PL) = h_w$, the $PL_e$ is correct.

Embedding too many elements into the accumulator increases the computation overhead for the server's product $x_p$ calculation and the proof $\pi$ generation during search. Thus, we group the $w$, with each group embedding at most $N$ number of multiset hash $h_w$. Consequently, the server and the enclave need to maintain multiple products $x_p$ and an accumulator $Ac$. To verify a certain $h_w$, the proof is generated and verified within the corresponding group.

\subsection{Security Analysis}

\subsubsection{Verifiability}
This section formally defines the correctness of our verifiable schemes $\Pi =\textit{VSecGraph}$ and $\Pi =\textit{VSecGraph-A}$. 
\begin{define}
A verifiable encrypted social graph search scheme is sound and complete if it has negligible possibility to succeed in the following experiment for any PPT adversary $\mathcal{A}$. $\mathcal{A}$ randomly chooses a graph $G$ and asks client to execute the protocols to construct the \textit{ADS} and encrypted graph $G_e$. After that, the client sends the \textit{ADS} to $\mathcal{A}$. To respond to a search request, $\mathcal{A}$ returns a result $R$ (a result $R$ and a proof $\pi_R$) to the client. We define that, $\mathcal{A}$ performs a successful attack if the $\pi_R$ passes the verification and the $R$ satisfies following condition: $\{r|r\notin \hat{R} \wedge r\in R\} \ne \phi \vee \{r|r\in \hat{R} \wedge r\notin R\}$, where $\hat{R}$ is the correct result.
\end{define}

\begin{theorem}\label{theorem2}
(Correctness of $\Pi$). $\Pi$ is correct if the multiset hash function $\mathcal{H}$ is collision-resistant, the hash function $H_4$ is collision-resistant and the RSA accumulator (for \textit{VSecGraph-A}) is secure. 
\end{theorem}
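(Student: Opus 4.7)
The plan is to argue by contradiction: suppose a PPT adversary $\mathcal{A}$ produces a result $R \neq \hat{R}$ that nevertheless passes every check inside the enclave. I would first observe that the enclave holds a trusted copy of $\textit{UpdateCnt}$, $M_h$ (resp.\ $Ac$ for \textit{VSecGraph-A}), and \textit{IndexTree}, and that every item the server returns is routed through ${\small\textsf{VLoad}}(\cdot)$ before it is consumed. Hence a successful attack must cause at least one of the following four loading pathways to accept a tampered value: (i) a posting list $PL$ for $w_1$, (ii) a \emph{sub-filter} returned during a membership check, (iii) a single $id_e$ loaded via $\textit{stag}$, or (iv) a single $\textit{stag}_e$ loaded via $\textit{ind}$. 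A case analysis on these four pathways will cover both the search protocol and the chained loads used inside ${\small\textsf{Update}}(\cdot)$.

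For case (i), acceptance requires $\mathcal{H}(PL') \equiv_{\mathcal{H}} M_h[w_1]$ with $PL' \neq PL$; by the incremental property of the multiset hash, this yields two distinct multisets with the same digest, contradicting collision resistance of $\mathcal{H}$. For case (ii), acceptance requires $H_3(\textit{sub-filter}') = h_s$ with $\textit{sub-filter}' \neq \textit{sub-filter}$, contradicting collision resistance of $H_3$; note that since \textit{IndexTree} is stored inside the enclave, $h_s$ itself cannot be tampered with. For cases (iii) and (iv), I would use the fact that $id_e$ now binds the plaintext triple $(id_{in}\|\textit{weight}\|i)$ under the enclave-held key $K_Z$, and $\textit{stag}_e$ binds $(i\|id_{in})$. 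Recovering a forged $id_e$ (resp.\ $\textit{stag}_e$) that decrypts to a different index than the one requested by the load token would require the adversary to produce a valid ciphertext without knowing $K_Z$, which is ruled out by the pseudorandomness of $F_2$ already assumed in Theorem~\ref{thm:thm1}; I would therefore lift that argument rather than re-prove it. Completeness follows symmetrically: since inserts and deletes in ${\small\textsf{Update}}(\cdot)$ refresh $M_h[w]$, the \textit{IndexTree} node for the affected \emph{sub-filter}, and the counter $\textit{UpdateCnt}[w]$ in lockstep, omitting a genuinely present $id_{in}$ would again force a multiset-hash collision on the truncated $PL$.

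For \textit{VSecGraph-A}, the extra layer is that $M_h$ is evicted to the untrusted server and re-loaded via an RSA-accumulator witness $\pi$. I would extend case (i) by an additional sub-step: acceptance of a forged $h_w'$ requires $\pi^{H_p(H_4(w)\|h_w')} \bmod n = Ac$, where $Ac$ is the trusted digest held in the enclave. If $H_4(w)\|h_w' \neq H_4(w)\|h_w$ but both encode into the accumulated set, then either $H_4$ collides on two distinct keywords, or $H_p$ collides, or $\mathcal{A}$ has produced a valid membership witness for an element outside the accumulated set, which breaks the strong-RSA assumption underlying the accumulator. The grouping optimization (at most $N$ elements per $Ac$) does not affect this reduction because each group is verified independently against its own trusted digest.

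The main obstacle I anticipate is the chained-load logic in ${\small\textsf{Update}}(\cdot)$ when $op = \textit{delete}$: the enclave loads $\textit{stag}'$, then $\textit{ind}$, then $i$, and finally rewrites both maps, so a malicious server could try to inject inconsistencies across these successive calls (for instance, returning an $i$ that points to a different $\textit{stag}$ than the one it previously served). Handling this cleanly requires showing that the index field embedded in each ciphertext (the trailing $i$ in $id_e$ and the leading $i$ in $\textit{stag}_e$) cross-links the two maps, so any such inconsistency surfaces as a failed equality check $i' = i$ or $id_{in}' = id_{in}$ in ${\small\textsf{VLoad}}(\cdot)$, which again reduces to pseudorandomness of $F_2$. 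Once this invariant is stated precisely, the remaining reductions are routine.
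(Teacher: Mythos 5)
Your proposal follows essentially the same route as the paper's own proof: argue by contradiction and reduce any accepted soundness or completeness violation to a collision of the multiset hash $\mathcal{H}$ (forged or truncated posting list), a collision of the hash protecting the loaded \emph{sub-filter}, or, for \textit{VSecGraph-A}, a forged membership witness contradicting the RSA accumulator's security. Your decomposition into the four ${\small\textsf{VLoad}}(\cdot)$ pathways and the discussion of the chained loads during deletion is simply a finer-grained version of the paper's two-aspect case analysis (forged $PL$ versus forged \textit{XSet}), with the extra $id_e$/$\textit{stag}_e$ cases handled by lifting the PRF argument from Theorem~\ref{thm:thm1}, which the paper's sketch leaves implicit.
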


\begin{proof}
We demonstrate this from two cases: (1) $\{r|r\notin \hat{R} \wedge r\in R\} \ne \phi$ indicates that there exists at least one returned element $r$ in the $R$ that should not be returned. This situation could arise from two aspects. The first aspect is that the $PL$ returned for $w_1$ contains extra elements, which we denote as $r'$. For \textit{VSecGraph}, due to the $\mathcal{H}$ being collision-resistant, each $h_w$ can not be forged with other results. If $r'$ exists, it violates the collision-resistant assumption of the multiset hash function. For \textit{VSecGraph-A}, if there exists $r'$ in $R$, and the corresponding \textit{proof} is $\pi'_R$, we assume that $R$ passes the RSA accumulator verification, it implies that a forged $H_p(h_w)$ can pass the RSA accumulator validation, which contradicts the security of the RSA accumulator. The second aspect is that the \textit{XSet} loaded for checking the existence of \textit{xtag} has been forged to prevent the client from correctly filtering out the correct results. However, it violates the collision-resistant assumption of the hash function. (2)  $\{r|r\in \hat{R} \wedge r\notin R\}$ indicates that there exists at least one element $r$ in the $\hat{R}$ that should have been returned but was not. The proof of this case is similar to the first case, both violating the assumption in Theorem.\ref{theorem2}.
\end{proof}

\section{Experimental Evaluation}

\begin{table}[t]
% \setlength{\tabcolsep}{0.1mm}
% \scriptsize
\centering
\caption{Summary of the graph database used in our experiments.}
\label{Summary of the graph database used in our experiments}
\resizebox{\linewidth}{!}{
\begin{tabular}{l|l|l|l|l}
    \hline 
    \textbf{Dataset}& \textbf{Nodes}  & \textbf{Edges} & \textbf{Edge type} & \textbf{Graph type}  \\
    \hline
    Email&36,692&183,831& Friendship & Undirect\\
    Youtube&1,134,890&2,987,624& Exchange & Undirect\\
    Gplus&107,614&13,673,453& Share & Directed\\
    \hline
\end{tabular}
}
% \vspace{-0.2em}
\end{table}

\subsection{Experiment Settings}
In the experiments, we implement the state-of-the-art encrypted graph search scheme PeGraph \cite{DBLP:journals/tifs/WangZJY22}, verifiable conjunctive search scheme Guo\cite{DBLP:journals/tdsc/GuoLTCL24} and our schemes \emph{SecGraph}, \emph{VSecGraph}, and \emph{VSecGraph-A} (adopting the accumulator) in about 5k LOCs of C++\footnote{Our code: https://github.com/XJTUOSV-SSEer/SecGraph.}. The client and server are deployed on a workstation equipped with an SGX-enabled Intel(R) Core(TM) i7-10700 CPU@2.60GHz with Ubuntu 18.04 server and 64GB RAM. For cryptographic primitives, we use the cryptography library Intel SGX SSL and OpenSSL (v1.1.1n) to implement the pseudorandom function via HMAC-256 and use SHA-256 to generate hash values for fingerprint. For implementing the \emph{LDCF-encoded XSet}, we adopt the open-source code of the LDCF\footnote{The code of LDCF: https://github.com/CGCL-codes/LDCF.} provided in \cite{DBLP:conf/icde/ZhangC0R21}. The puncturable PRF in Guo is implemented by GGM-tree, and we adopt the open-source code\footnote {https://github.com/bo-hub/Puncturable\_PRF}. We use three real-world datasets: Email\footnote{https://snap.stanford.edu/data/email-Enron.html.}, Youtube\footnote{https://snap.stanford.edu/data/com-Youtube.html.}, and Gplus\footnote{https://snap.stanford.edu/data/ego-Gplus.html.} in our experiments, as shown in Table.\ref{Summary of the graph database used in our experiments}. All experiments were executed by a single thread which is repeated 20 times, and the average is reported. 

\begin{figure}[t]
\centering
\setlength{\abovecaptionskip}{0.cm}
\setcounter{subfigure}{0}
\subfigure[Email dataset]
{\includegraphics[width=0.32\linewidth]{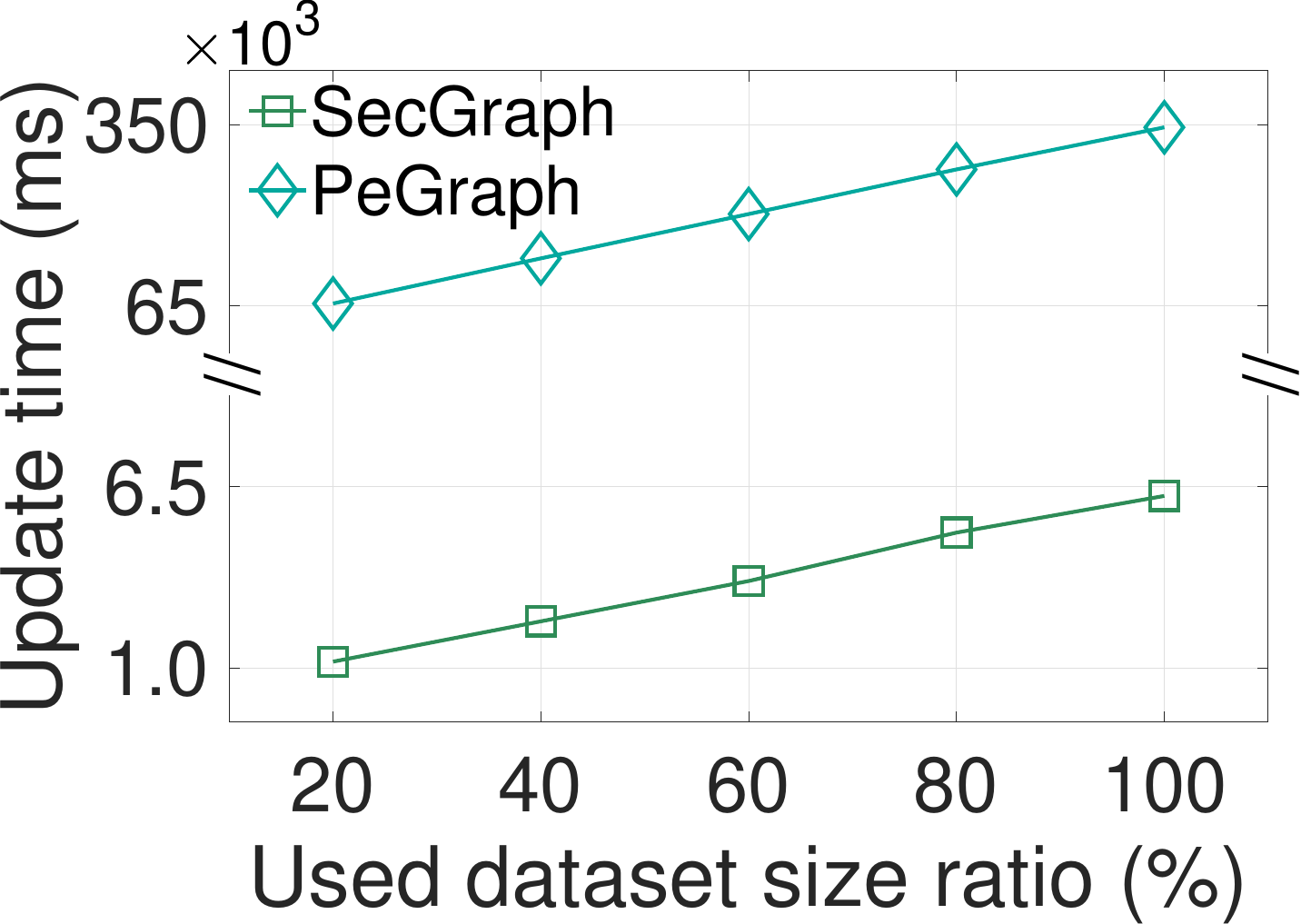}}
\subfigure[Youtube dataset]
{\includegraphics[width=0.32\linewidth]{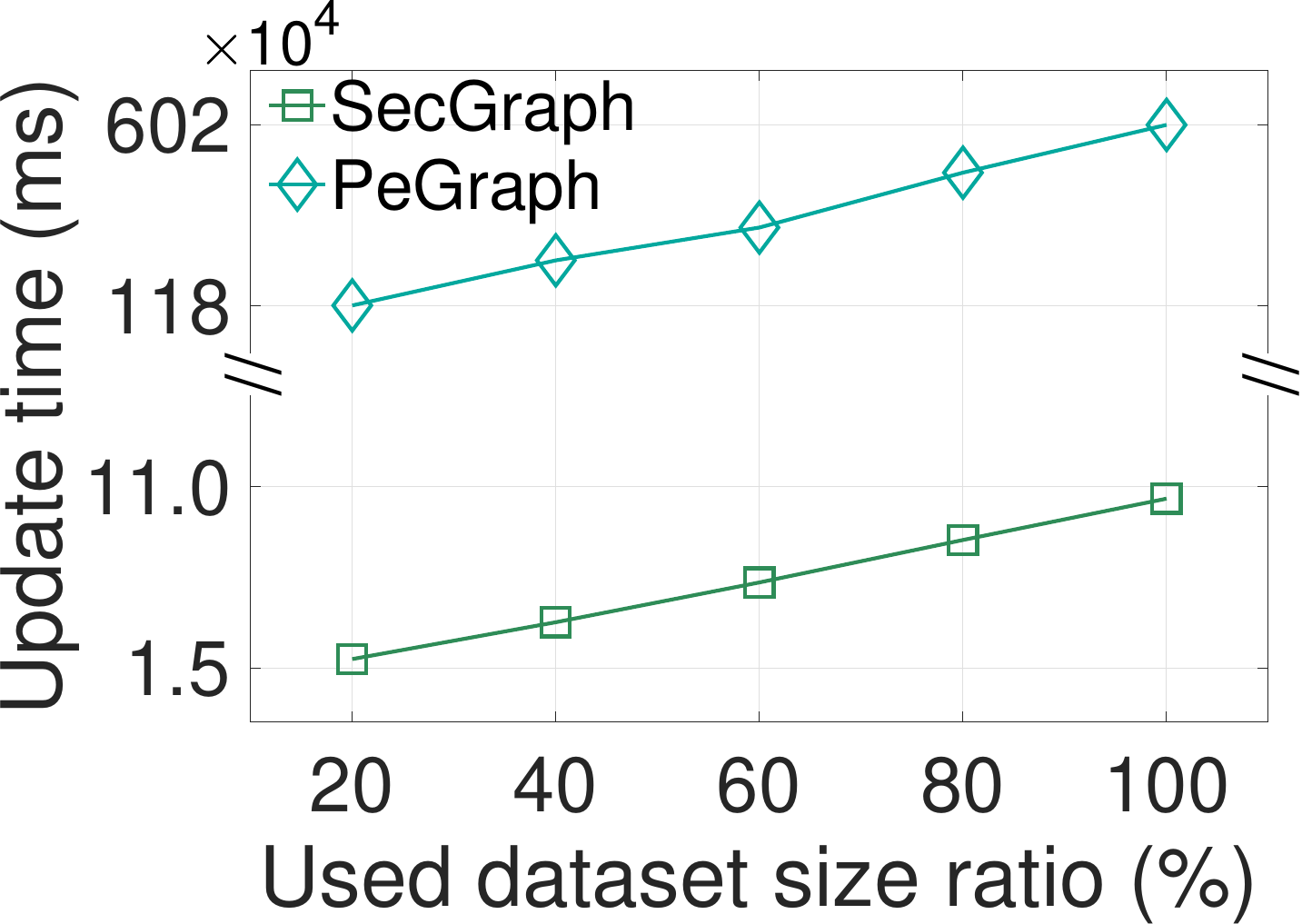}}
\subfigure[Gplus dataset]
{\includegraphics[width=0.32\linewidth]{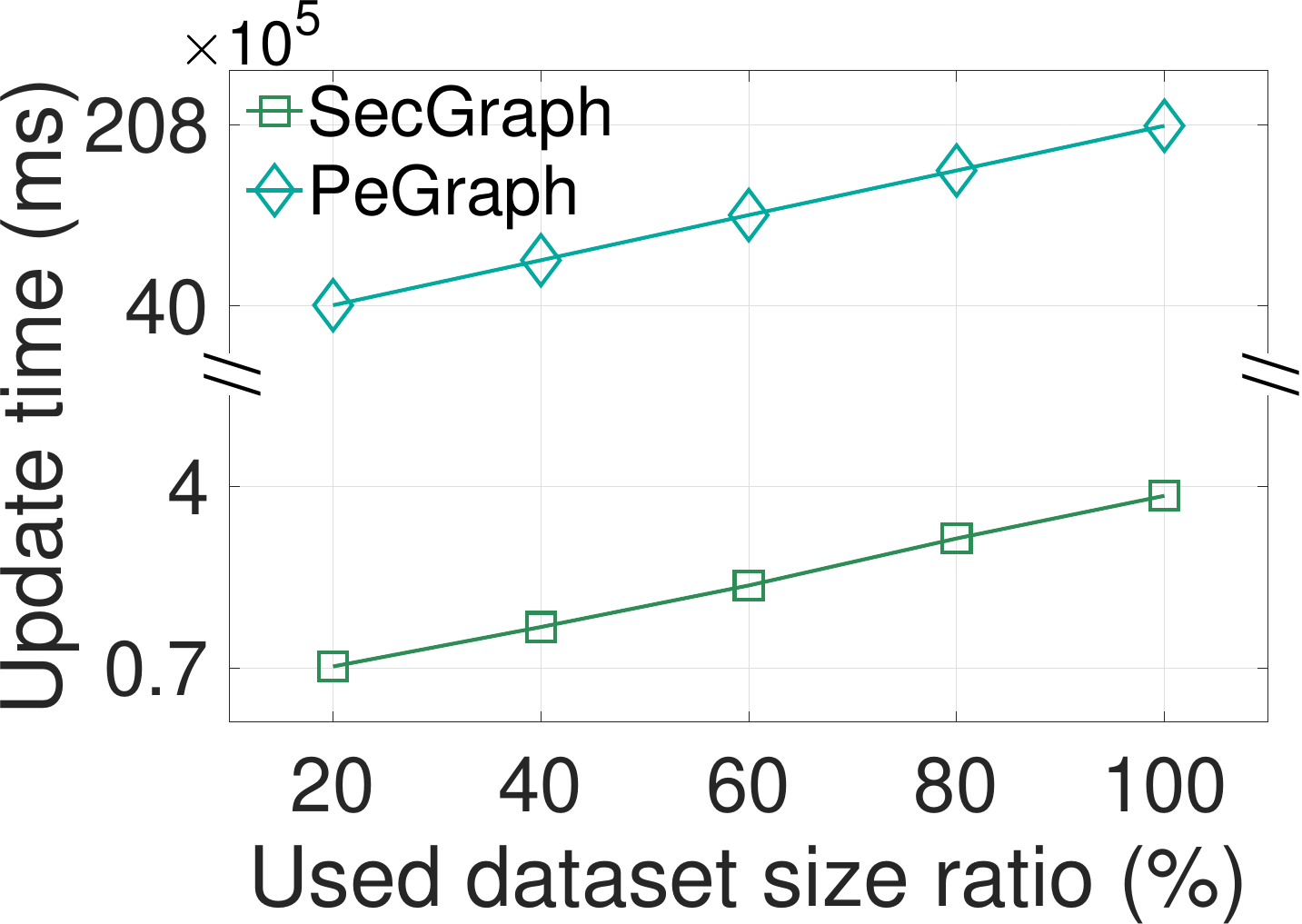}}
\renewcommand{\figurename}{Fig.}
\caption{Insertion performance of \textit{SecGraph} and PeGraph in distinct datasets.}
\label{Insertion performance of SecGraph and PeGraph in distinct datasets}
\vspace{-0.5em}
\end{figure}

\begin{figure}[t]
\centering
\setlength{\abovecaptionskip}{0.cm}
\setcounter{subfigure}{0}
\subfigure[Email dataset]
{\includegraphics[width=0.32\linewidth]{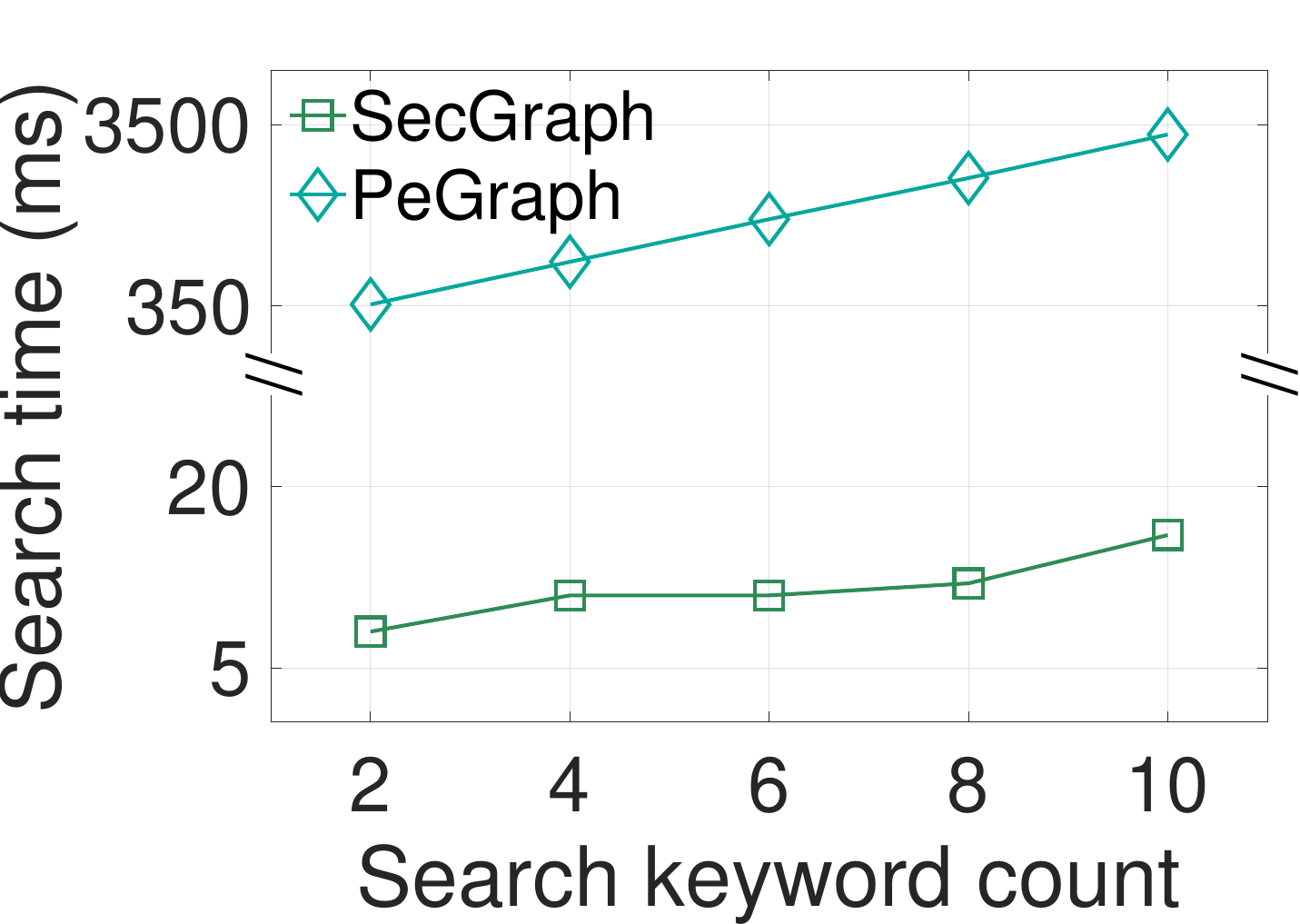}}
\subfigure[Youtube dataset]
{\includegraphics[width=0.32\linewidth]{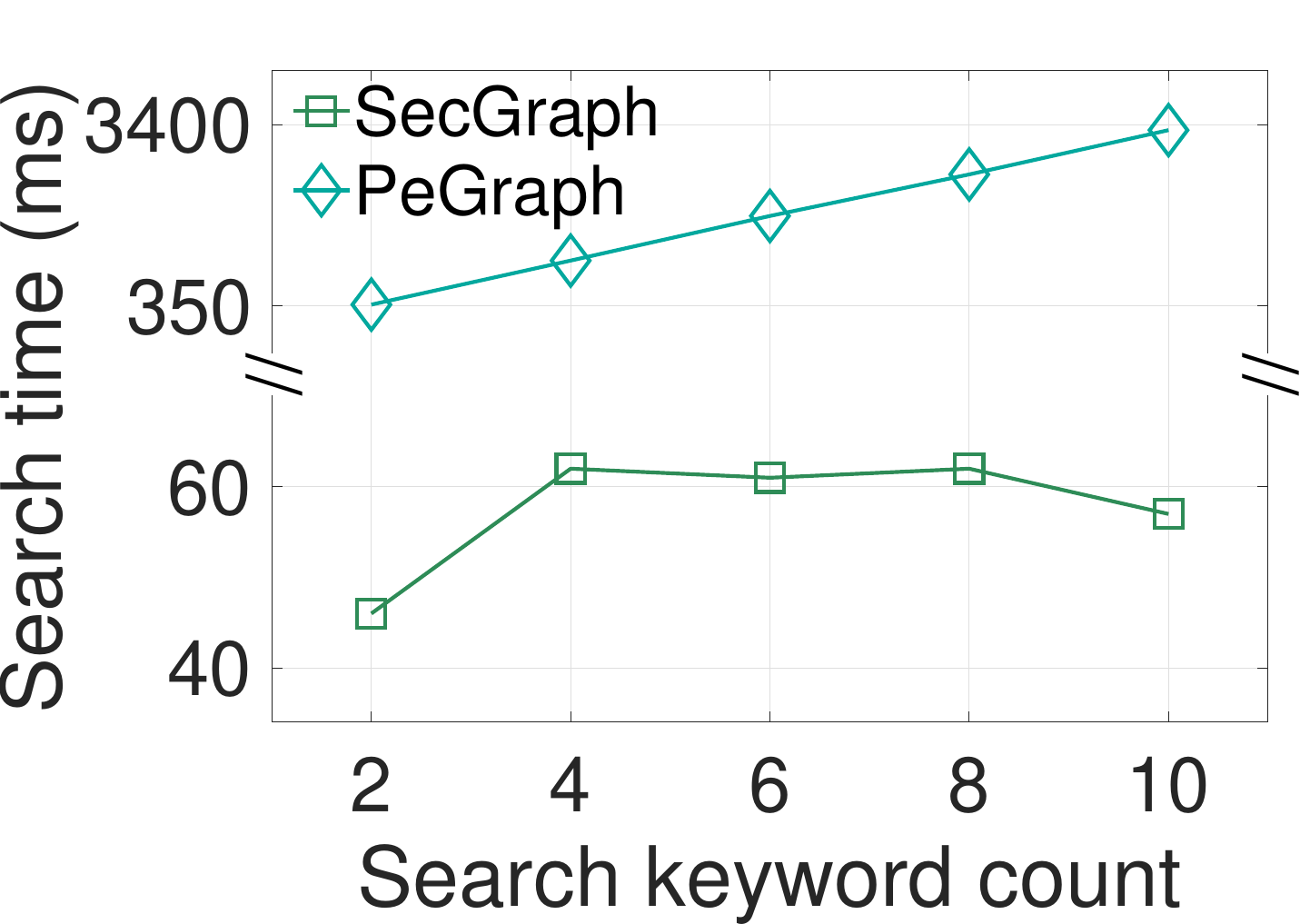}}
\subfigure[Gplus dataset]
{\includegraphics[width=0.32\linewidth]{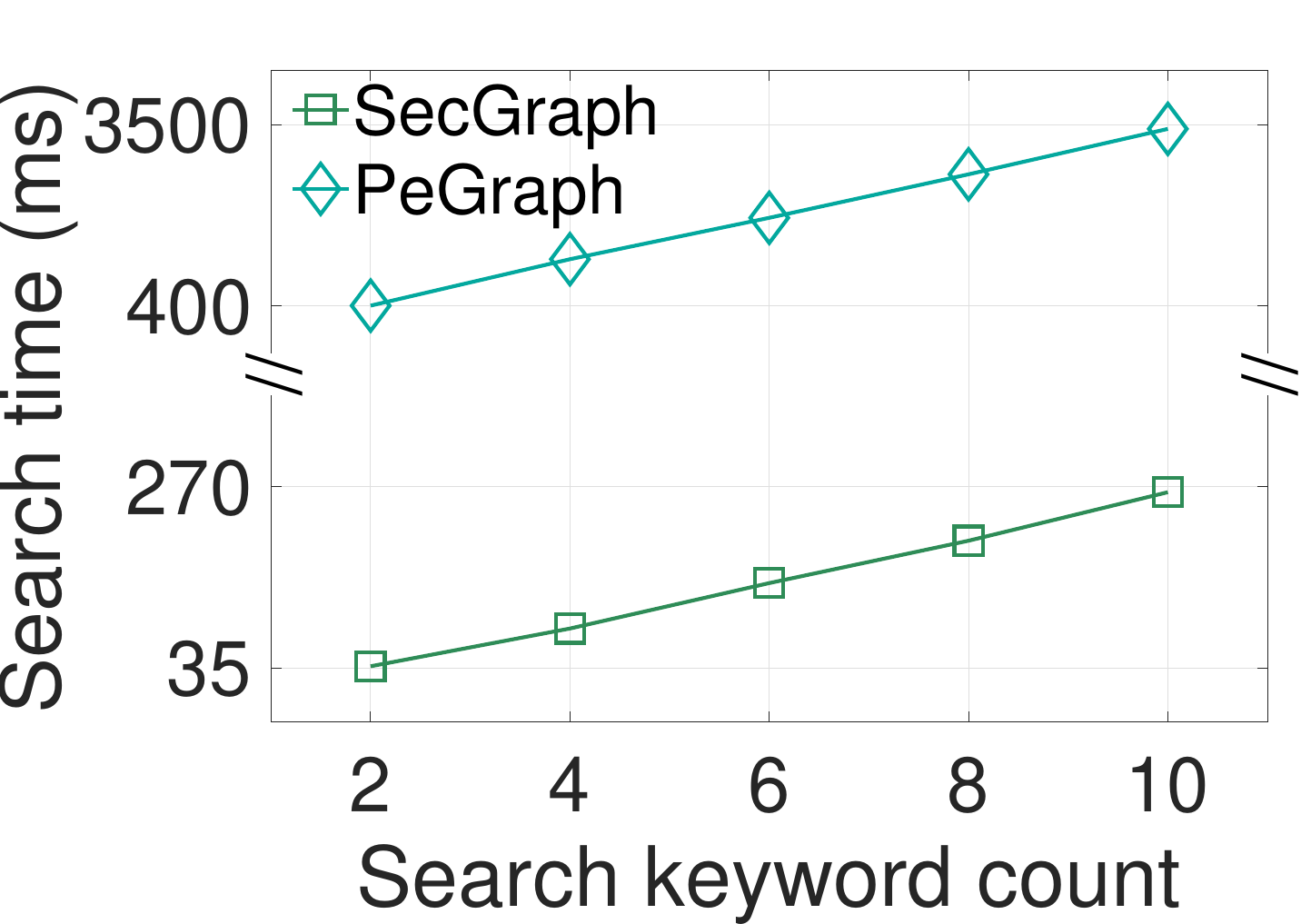}}
\renewcommand{\figurename}{Fig.}
\caption{Search performance of \textit{SecGraph} and PeGraph in distinct datasets.}
\label{Search performance of SecGraph and PeGraph in distinct datasets}
% \vspace{0.5em}
\end{figure}

\begin{figure}[t]
\centering
\setlength{\abovecaptionskip}{0.cm}
\setcounter{subfigure}{0}
\subfigure[Email dataset]
{\includegraphics[width=0.32\linewidth]{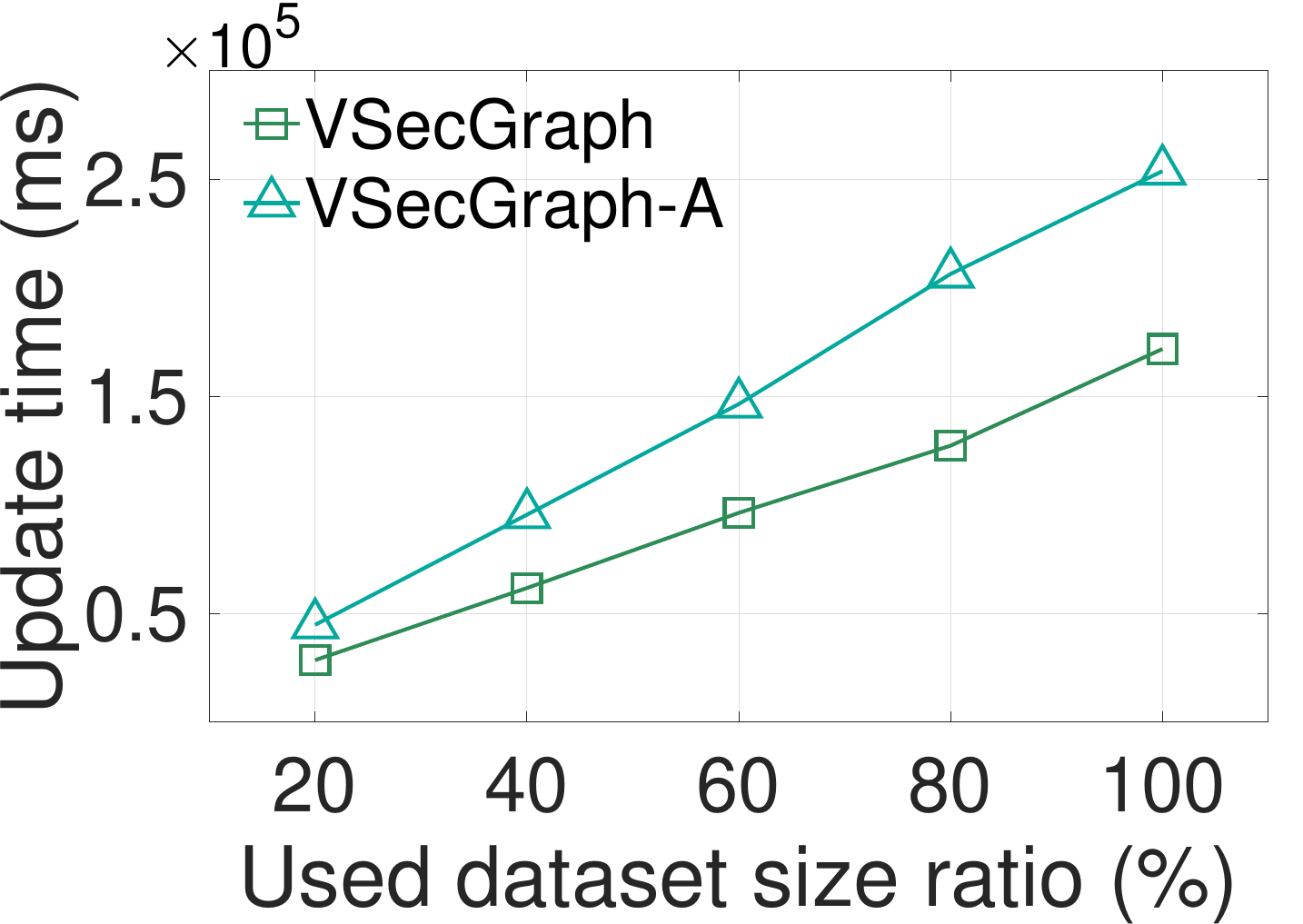}}
\subfigure[Youtube dataset]
{\includegraphics[width=0.32\linewidth]{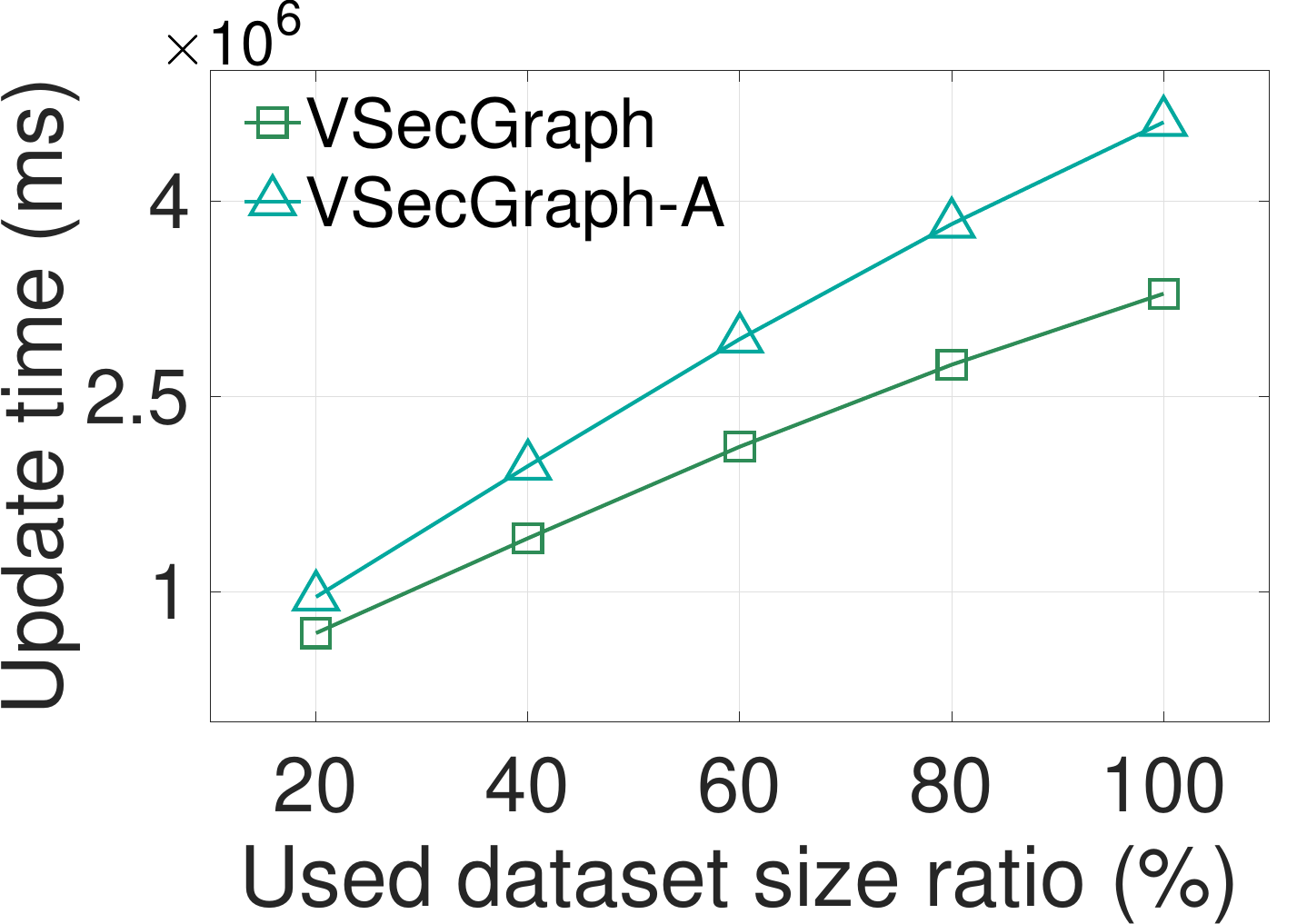}}
\subfigure[Gplus dataset]
{\includegraphics[width=0.32\linewidth]{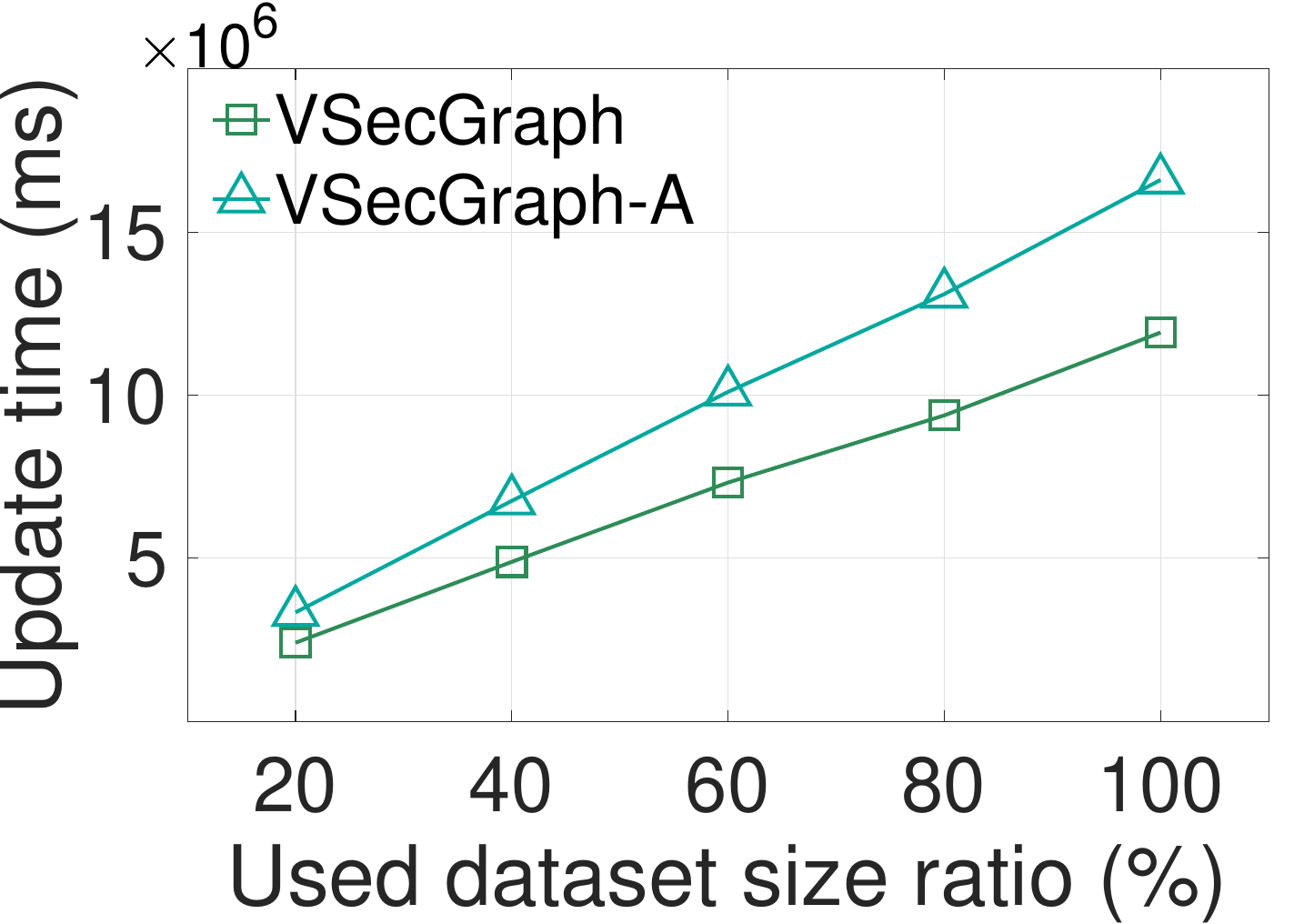}}
\renewcommand{\figurename}{Fig.}
\caption{Insertion performance of \textit{VSecGraph} and \mbox{\textit{VSecGraph-A}} in distinct datasets.}
\label{Insertion performance of VSecGraph and VSecGraph-A in distinct datasets}
% \vspace{0.5em}
\end{figure} 

\begin{figure*}[t]
\centering
\setlength{\abovecaptionskip}{0.cm}
\setcounter{subfigure}{0}
\subfigure[Search keywords count:2]
{\includegraphics[width=0.19\linewidth]{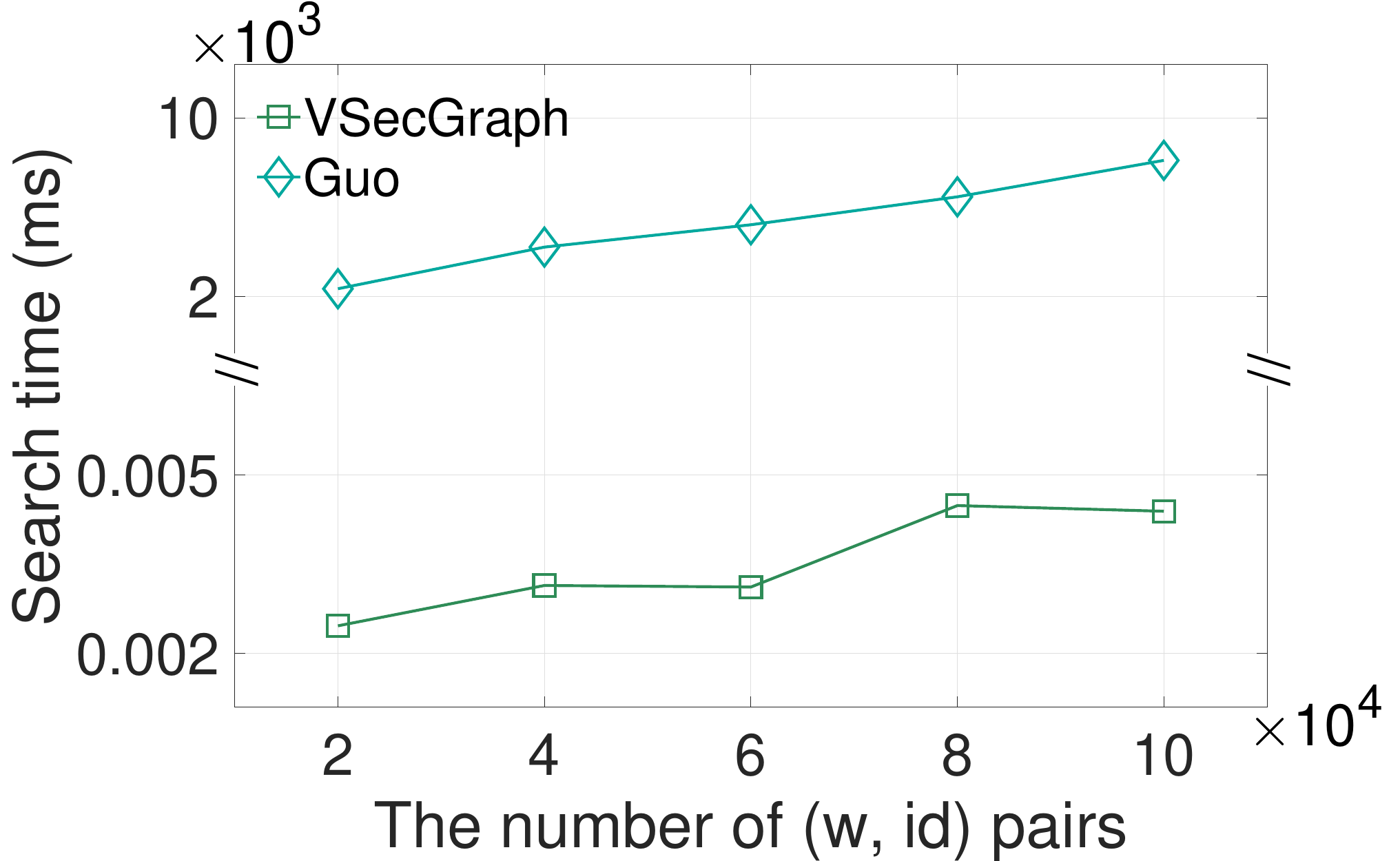}}
\subfigure[Search keywords count:4]
{\includegraphics[width=0.19\linewidth]{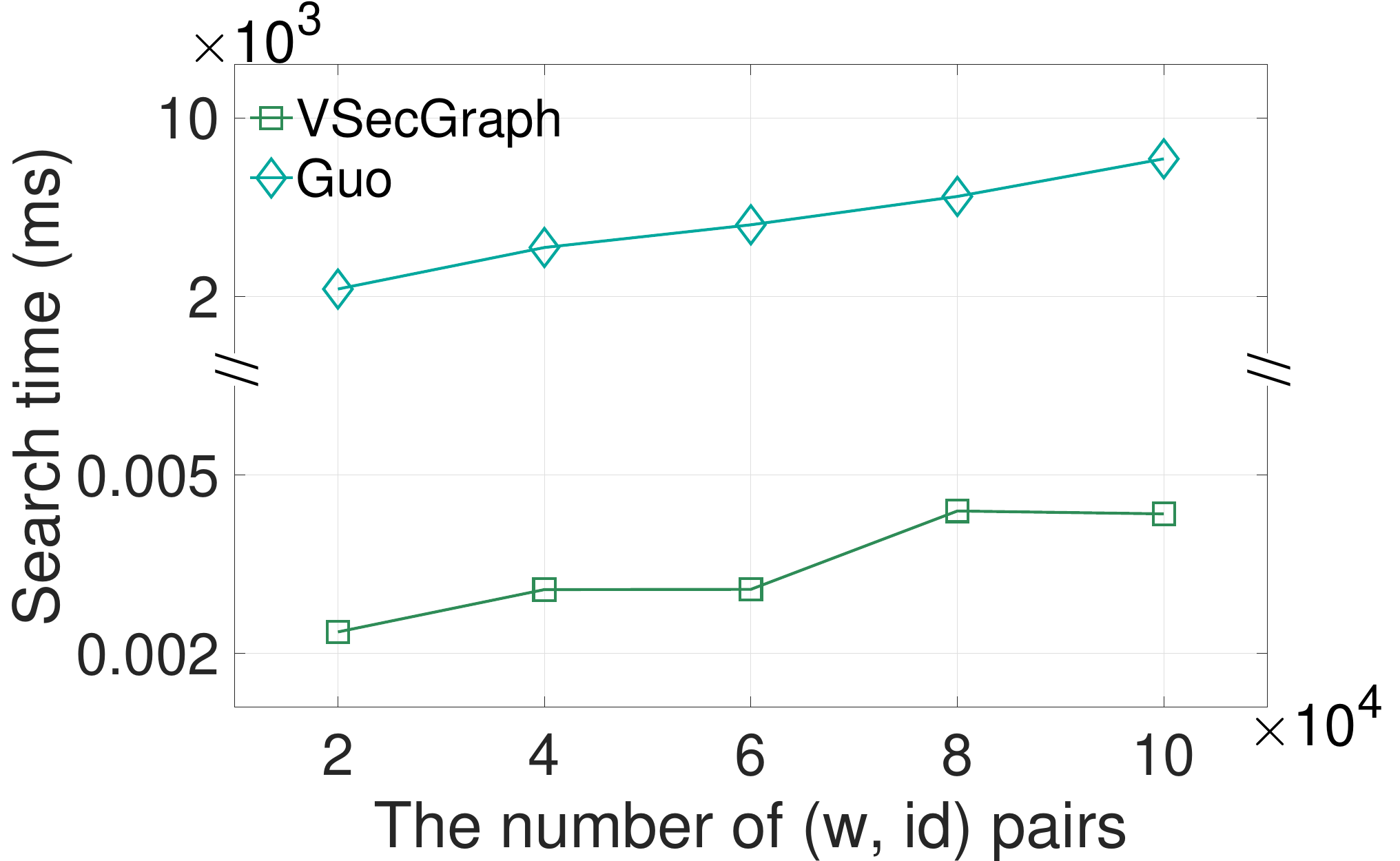}}
\subfigure[Search keywords count:6]
{\includegraphics[width=0.19\linewidth]{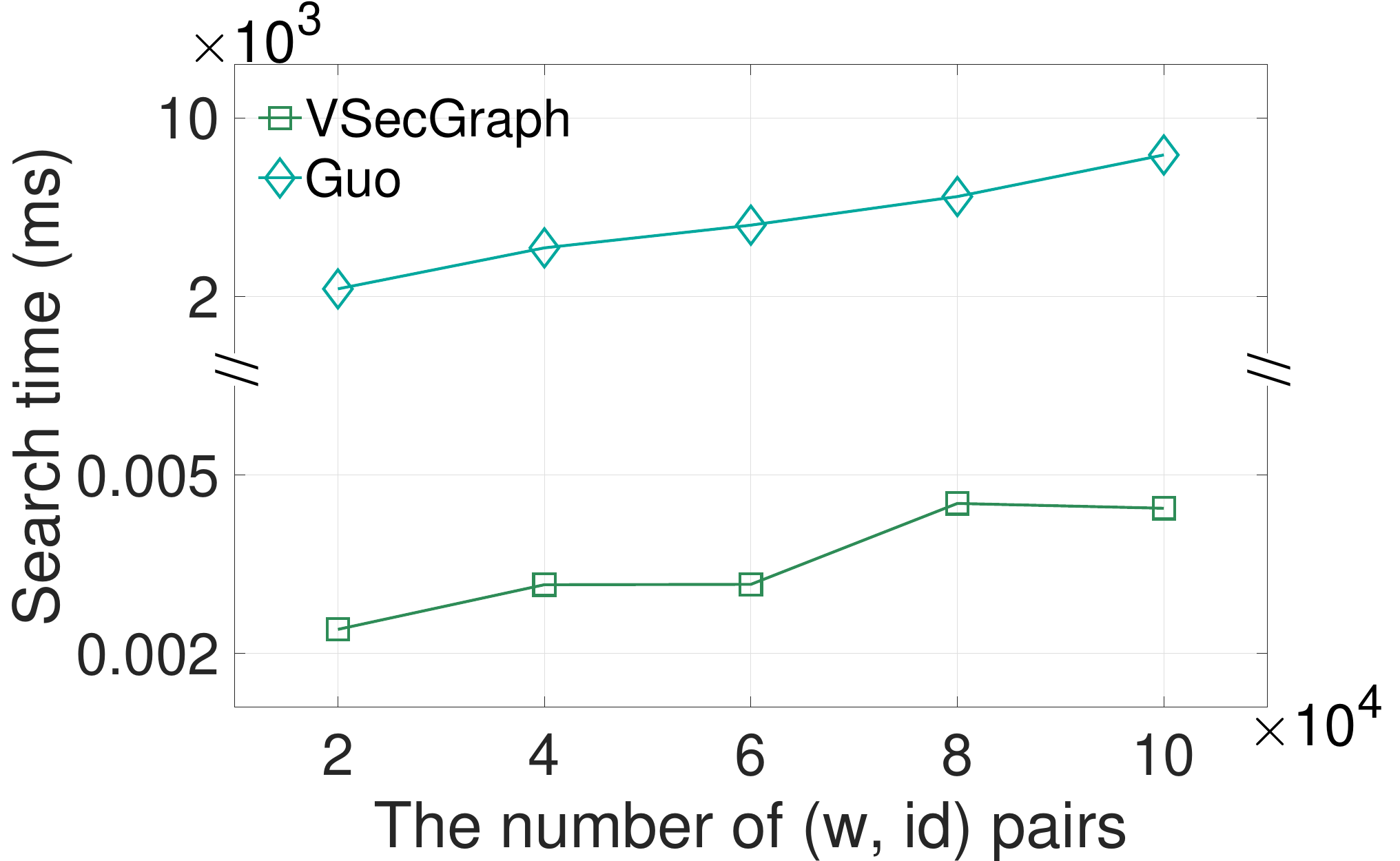}}
\subfigure[Search keywords count:8]
{\includegraphics[width=0.19\linewidth]{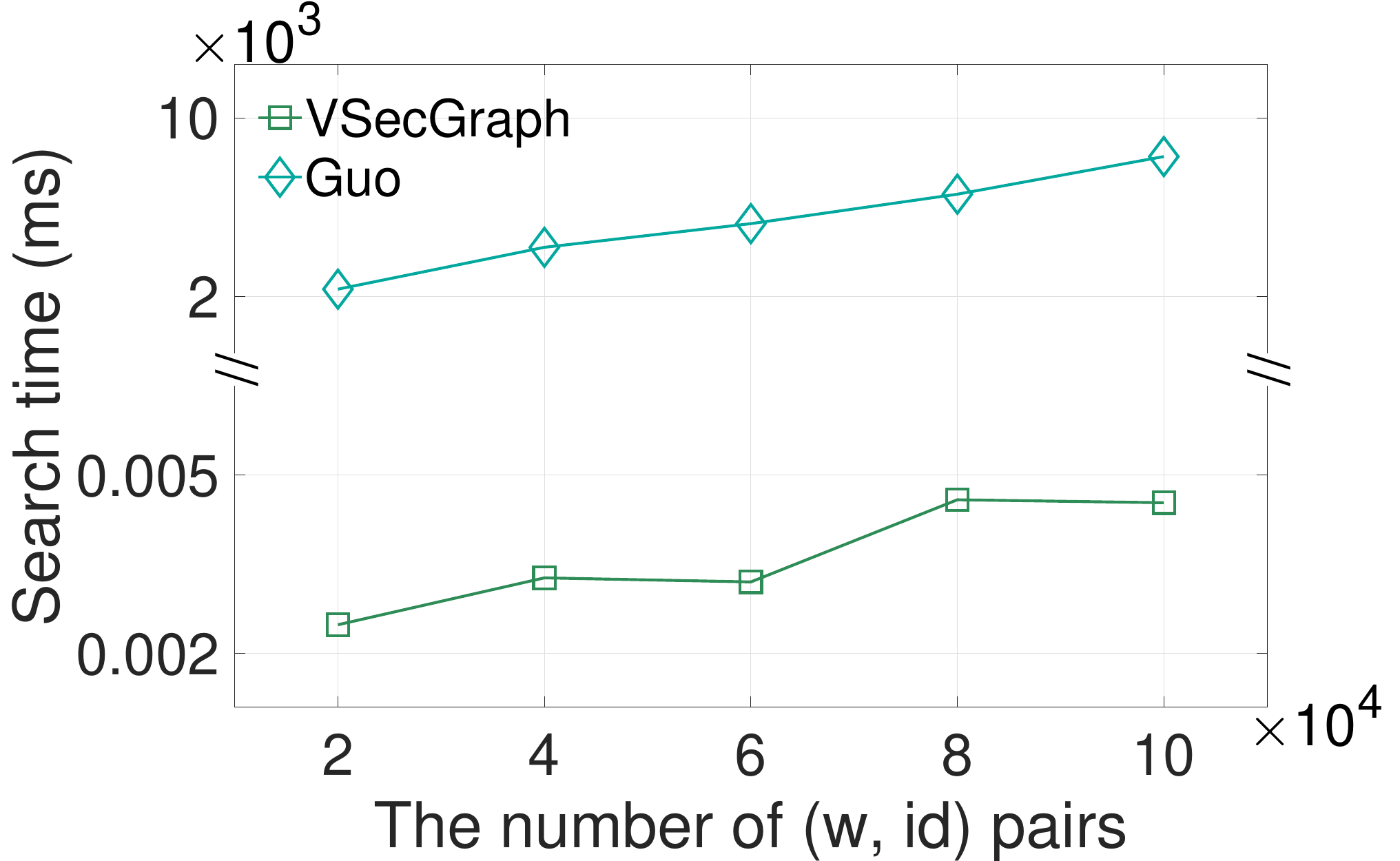}}
\subfigure[Search keywords count:10]
{\includegraphics[width=0.19\linewidth]{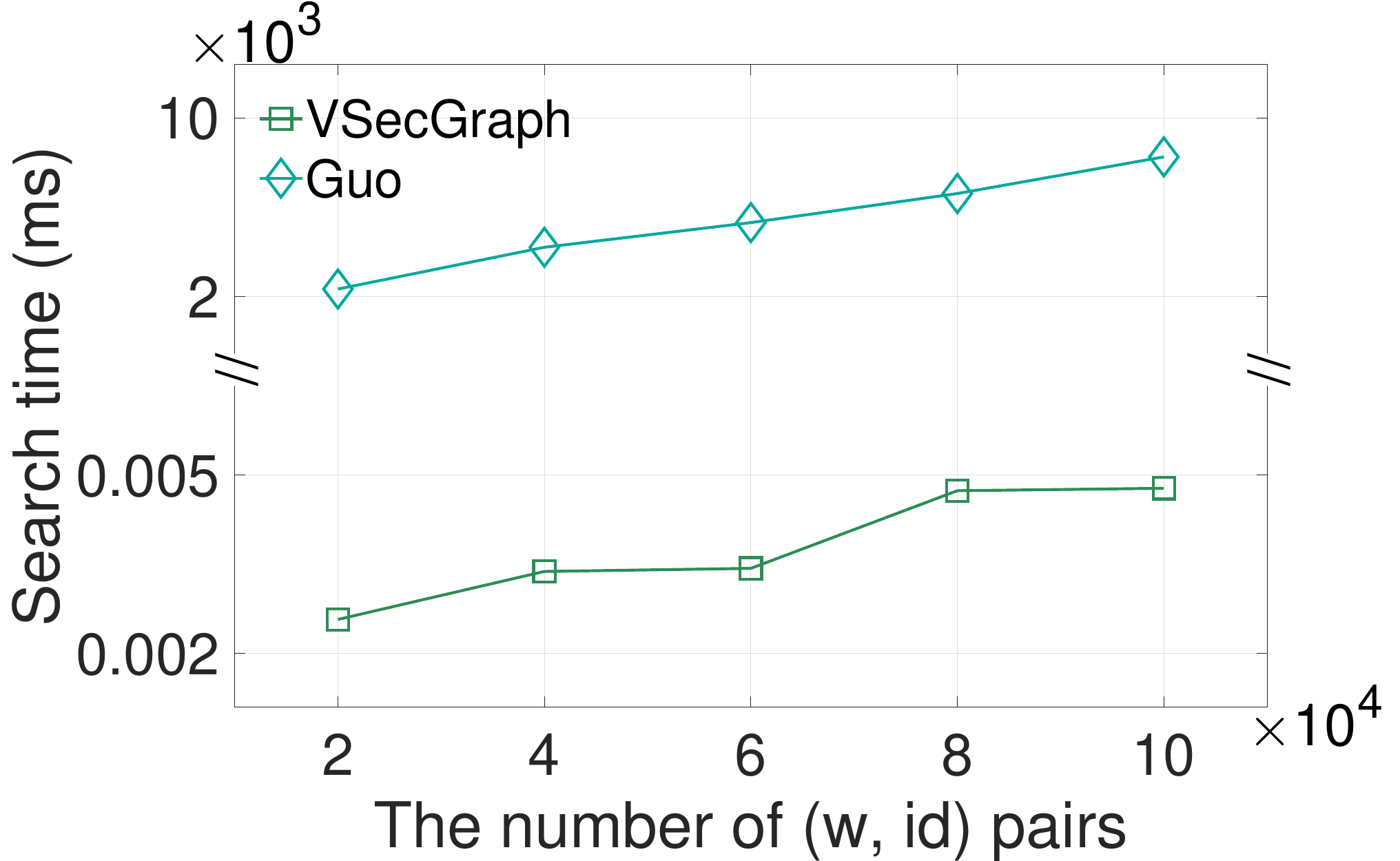}}
\renewcommand{\figurename}{Fig.}
\caption{Search performance of \textit{VSecGraph} and Guo in the different number of search keywords and data scales.}
\label{Search performance of VSecGraph and Guo in the different number of search keywords}
\vspace{-0.5em}
\end{figure*}

\begin{figure*}[t]
\centering
\setlength{\abovecaptionskip}{0.cm}
\setcounter{subfigure}{0}
\subfigure[Search keywords count:2]
{\includegraphics[width=0.19\linewidth]{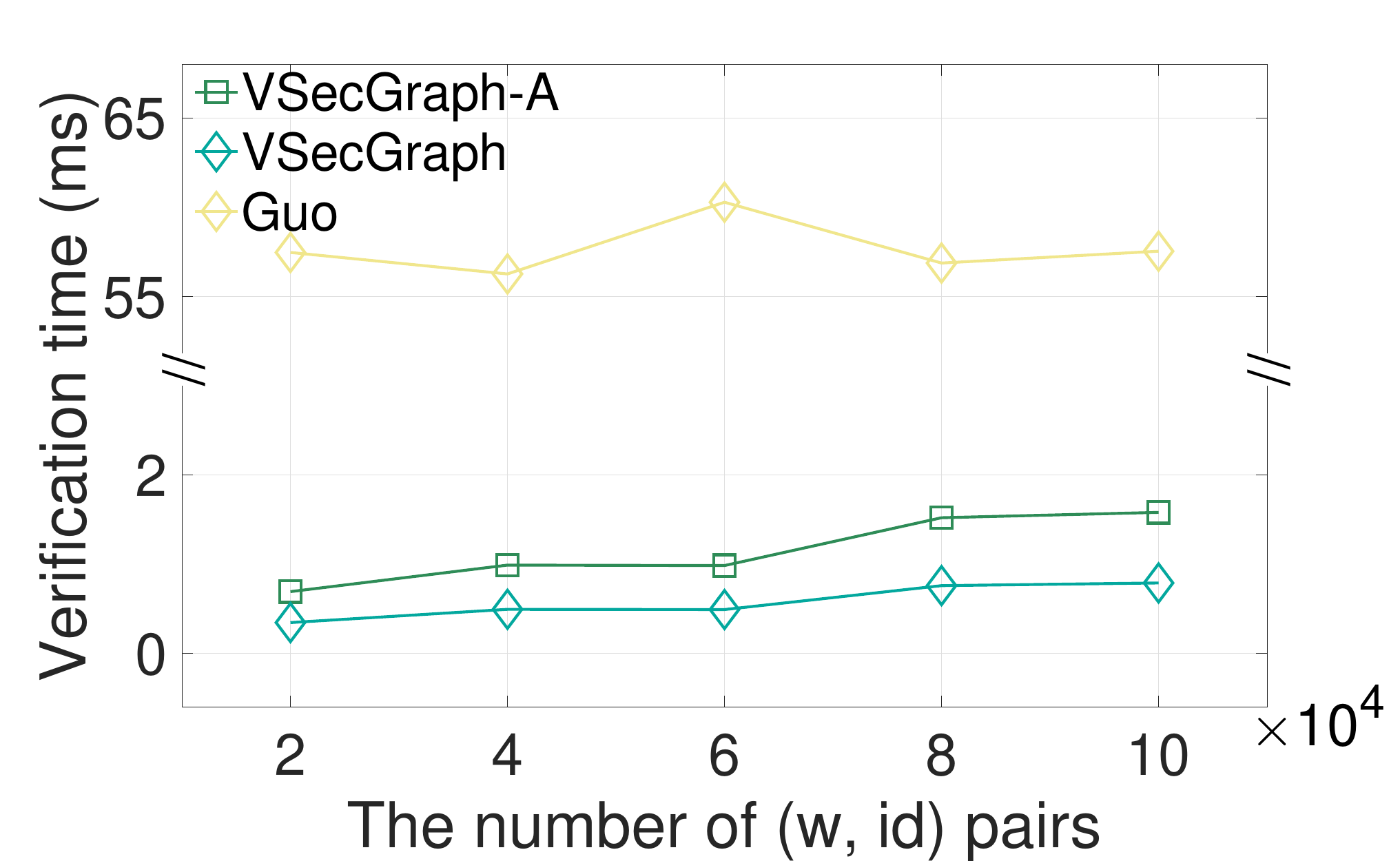}}
\subfigure[Search keywords count:4]
{\includegraphics[width=0.19\linewidth]{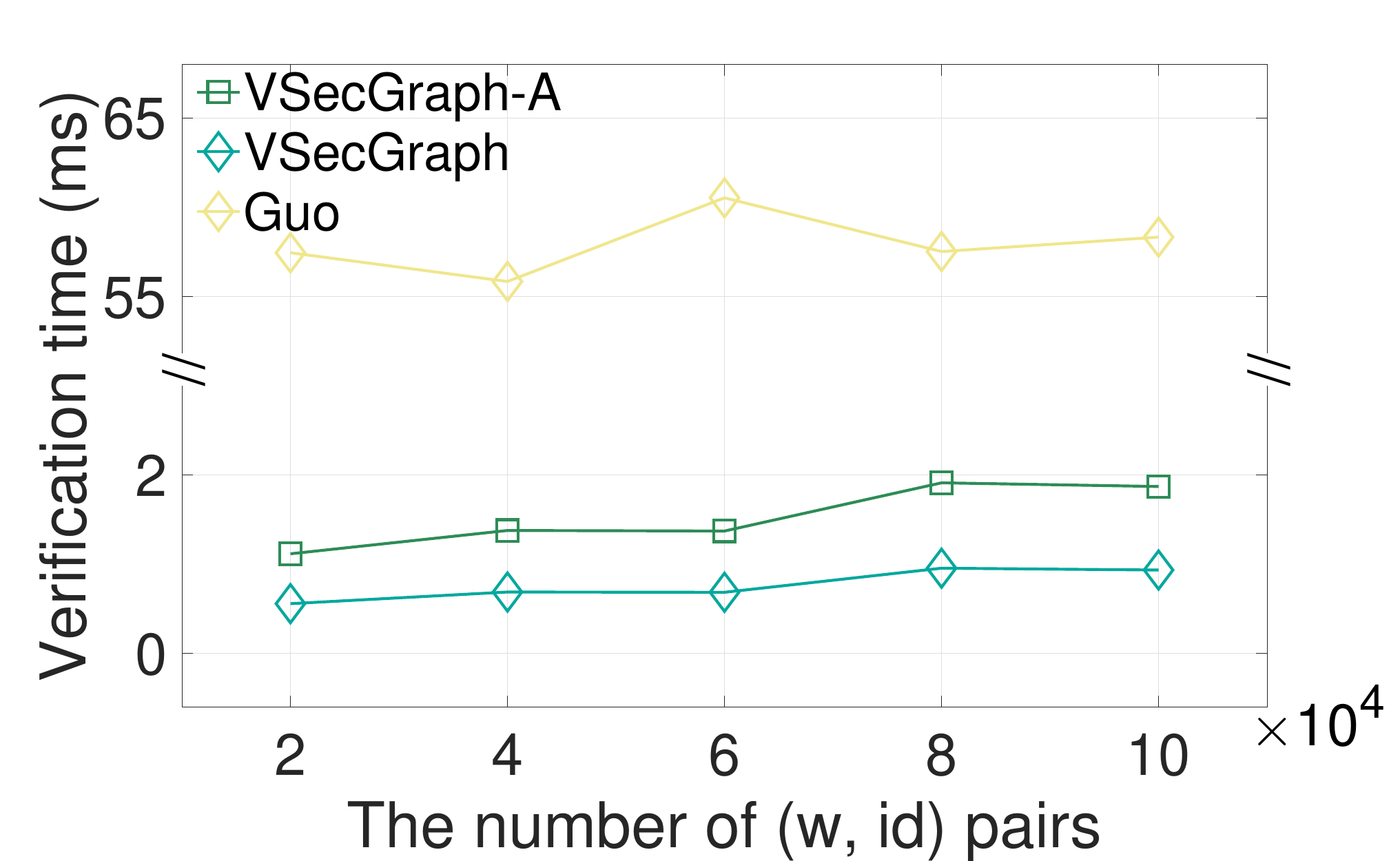}}
\subfigure[Search keywords count:6]
{\includegraphics[width=0.19\linewidth]{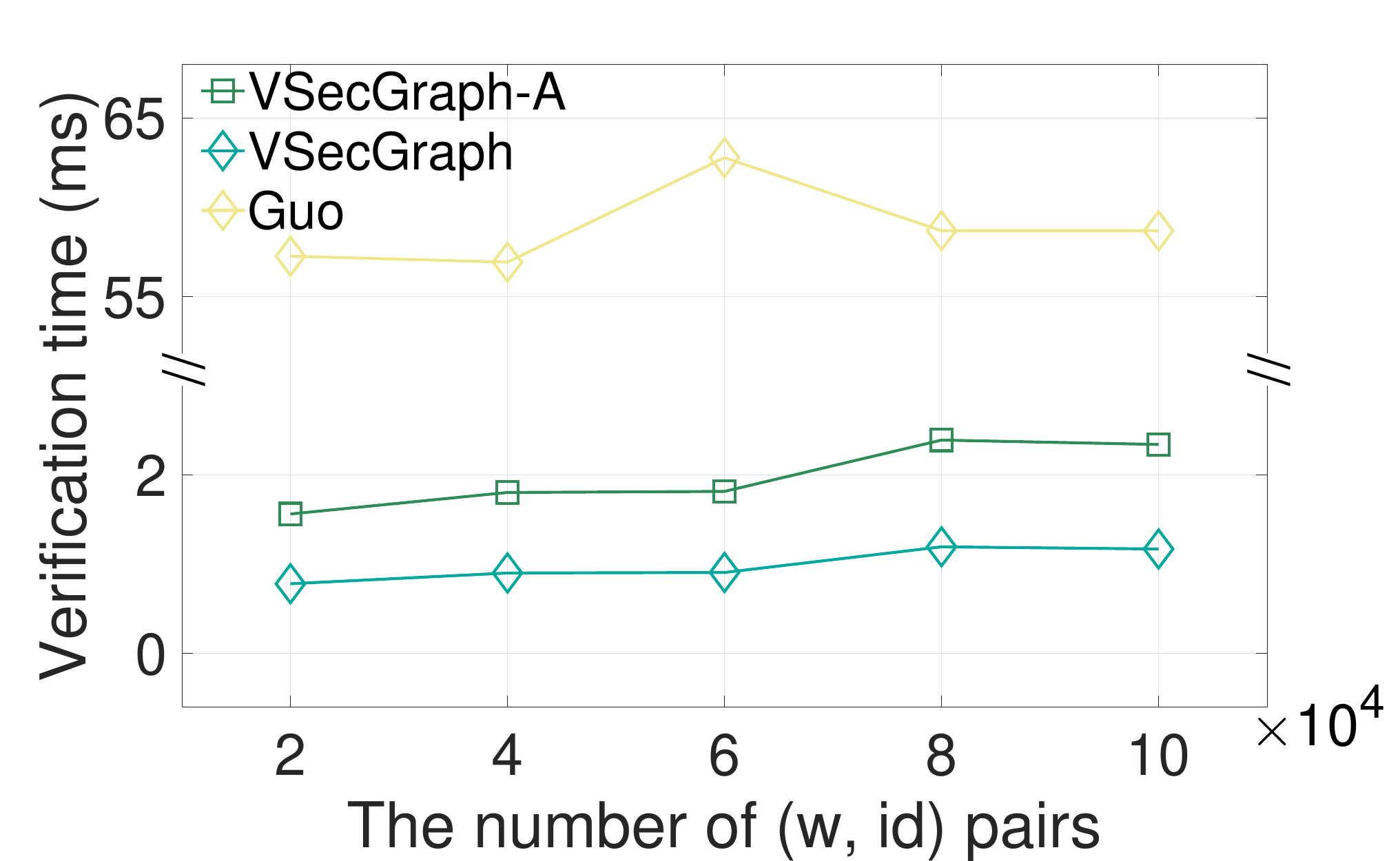}}
\subfigure[Search keywords count:8]
{\includegraphics[width=0.19\linewidth]{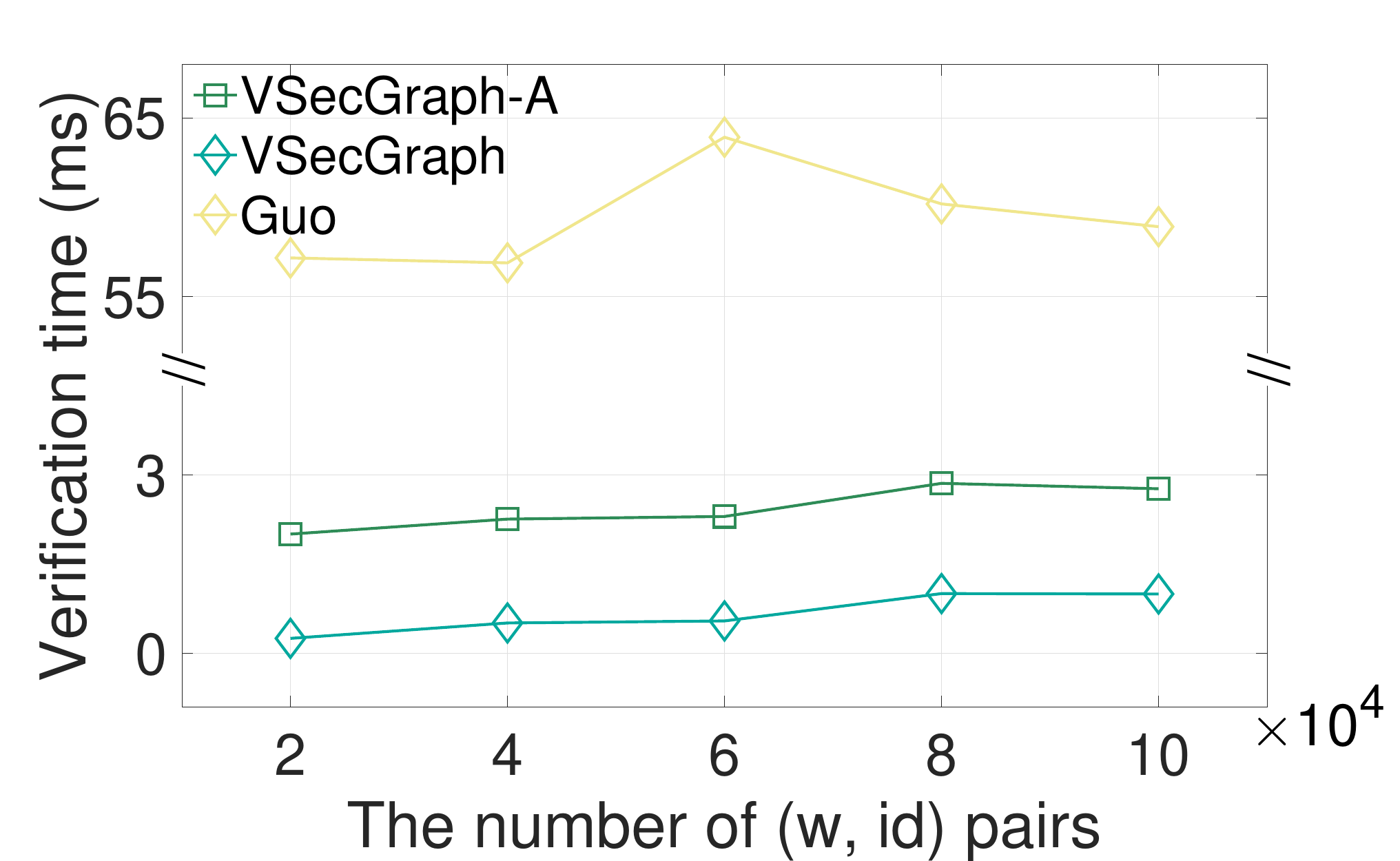}}
\subfigure[Search keywords count:10]
{\includegraphics[width=0.19\linewidth]{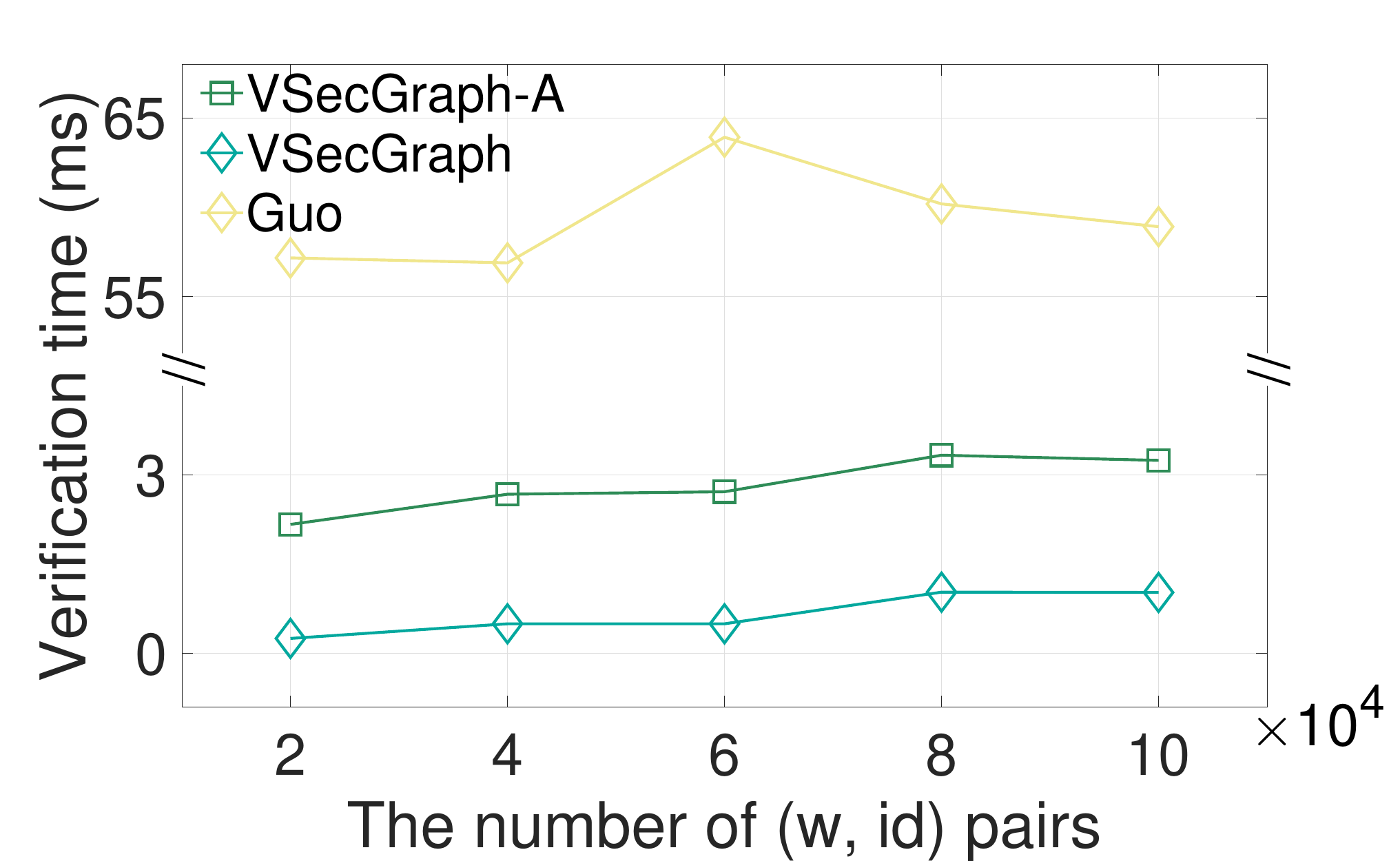}}
\renewcommand{\figurename}{Fig.}
\caption{Verification performance of \textit{VSecGraph}, \textit{VSecGraph-A} and Guo in the different number of search keywords and data scales.}
\label{Verification performance of VSecGraph, VSecGraph-A and Guo in the different number of search keywords}
\vspace{-0.5em}
\end{figure*}

\begin{figure}[t]
\centering
\setlength{\abovecaptionskip}{0.cm}
\setcounter{subfigure}{0}
\subfigure[Email dataset]
{\includegraphics[width=0.32\linewidth]{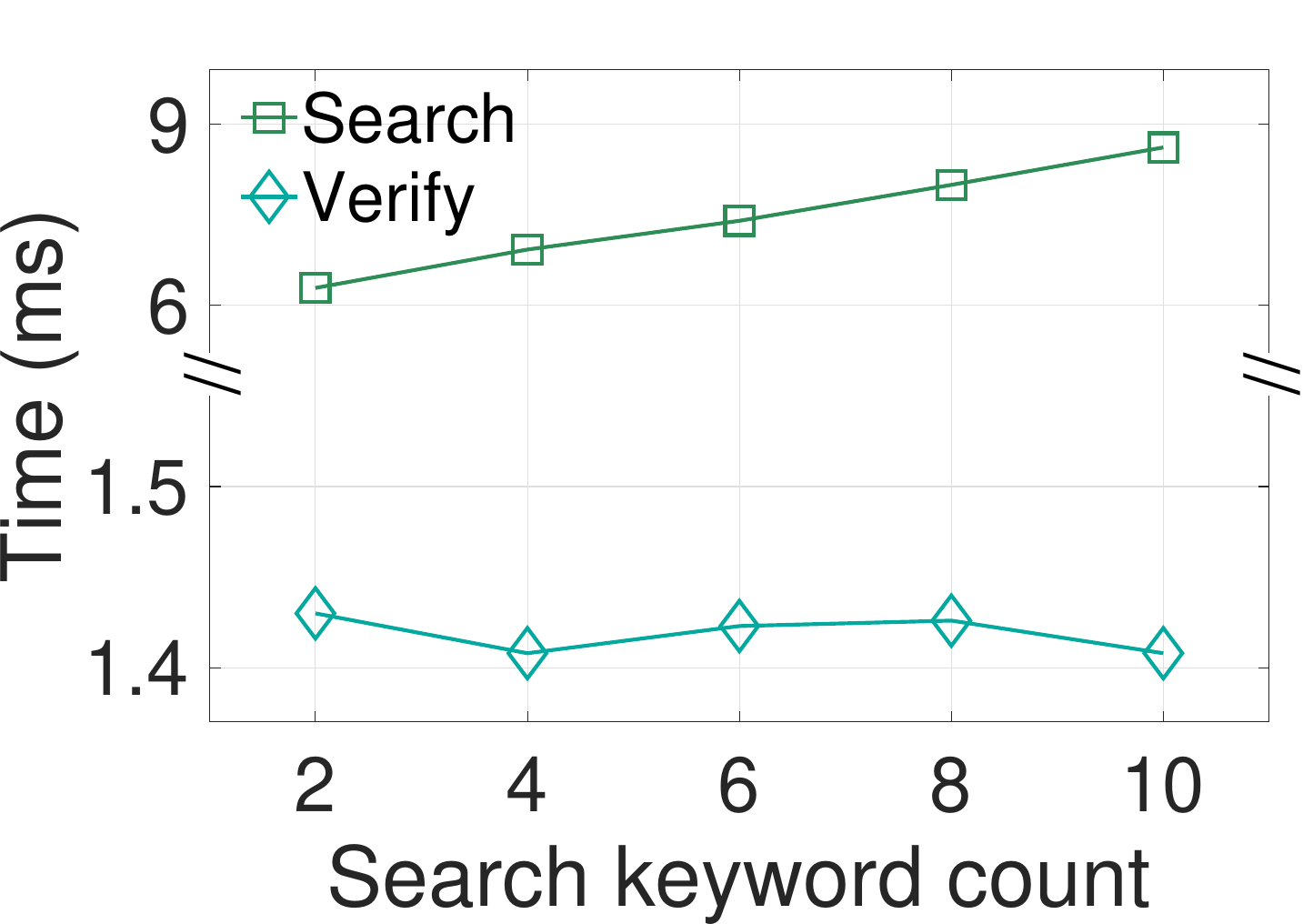}}
\subfigure[Youtube dataset]
{\includegraphics[width=0.32\linewidth]{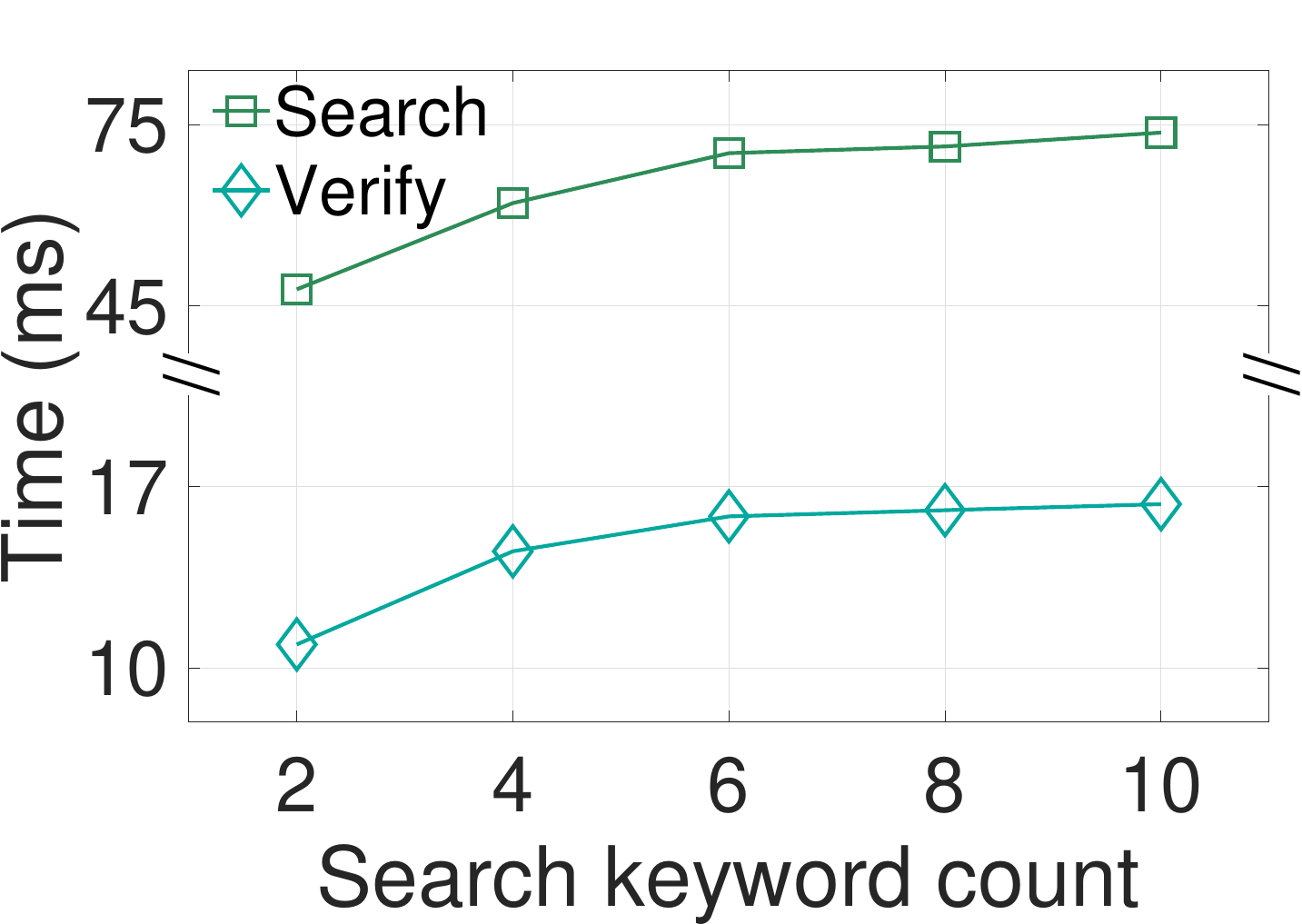}}
\subfigure[Gplus dataset]
{\includegraphics[width=0.32\linewidth]{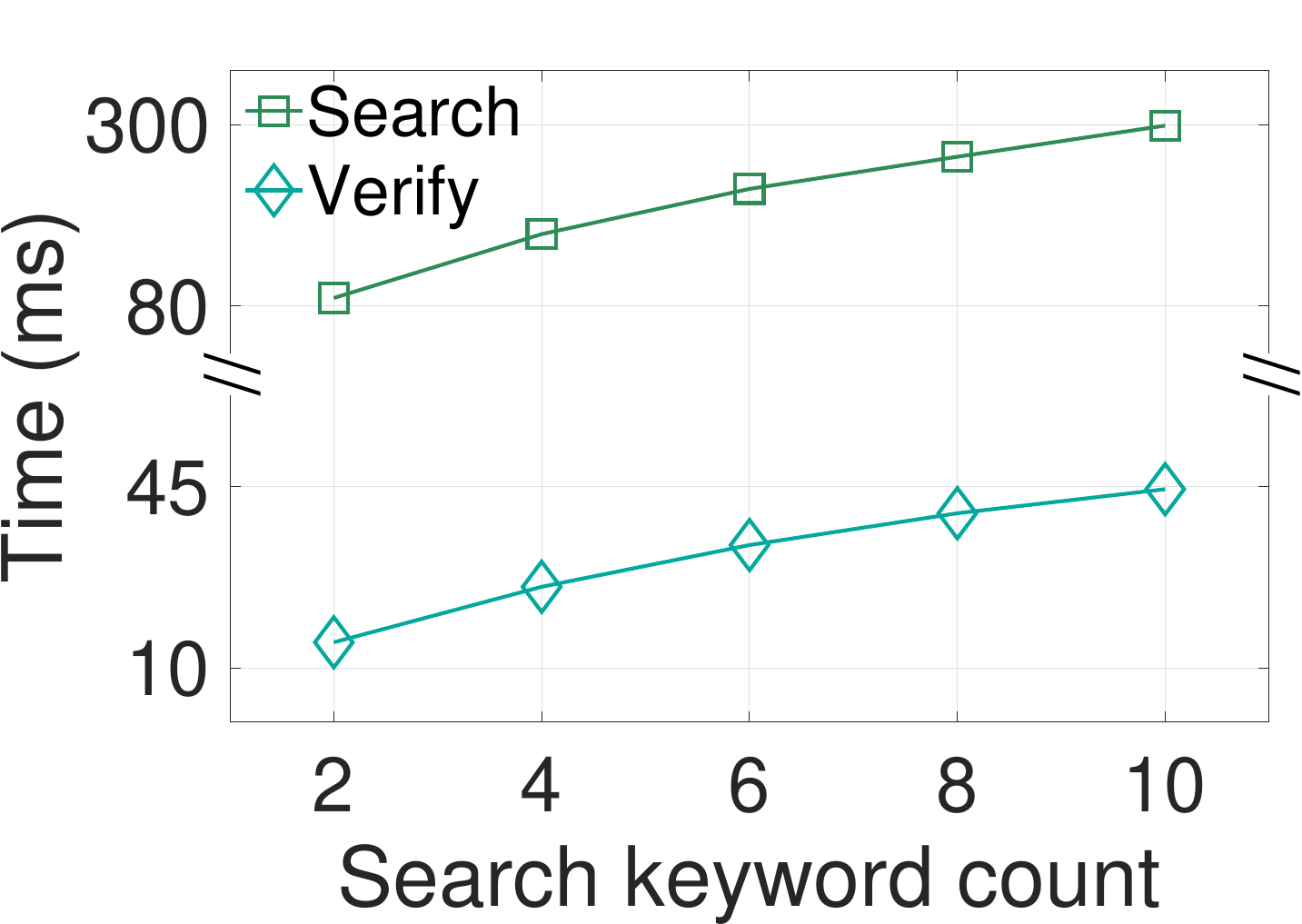}}
\renewcommand{\figurename}{Fig.}
\caption{Search and Verification performance of \textit{VSecGraph-A} in distinct datasets.}
\label{Search and Verification performance of VSecGraph-A in distinct datasets}
\vspace{-0.5em}
\end{figure}

\begin{figure*}[t]
\centering
\setlength{\abovecaptionskip}{0.cm}
\setcounter{subfigure}{0}
\subfigure[Email dataset]
{\includegraphics[width=0.318\linewidth]{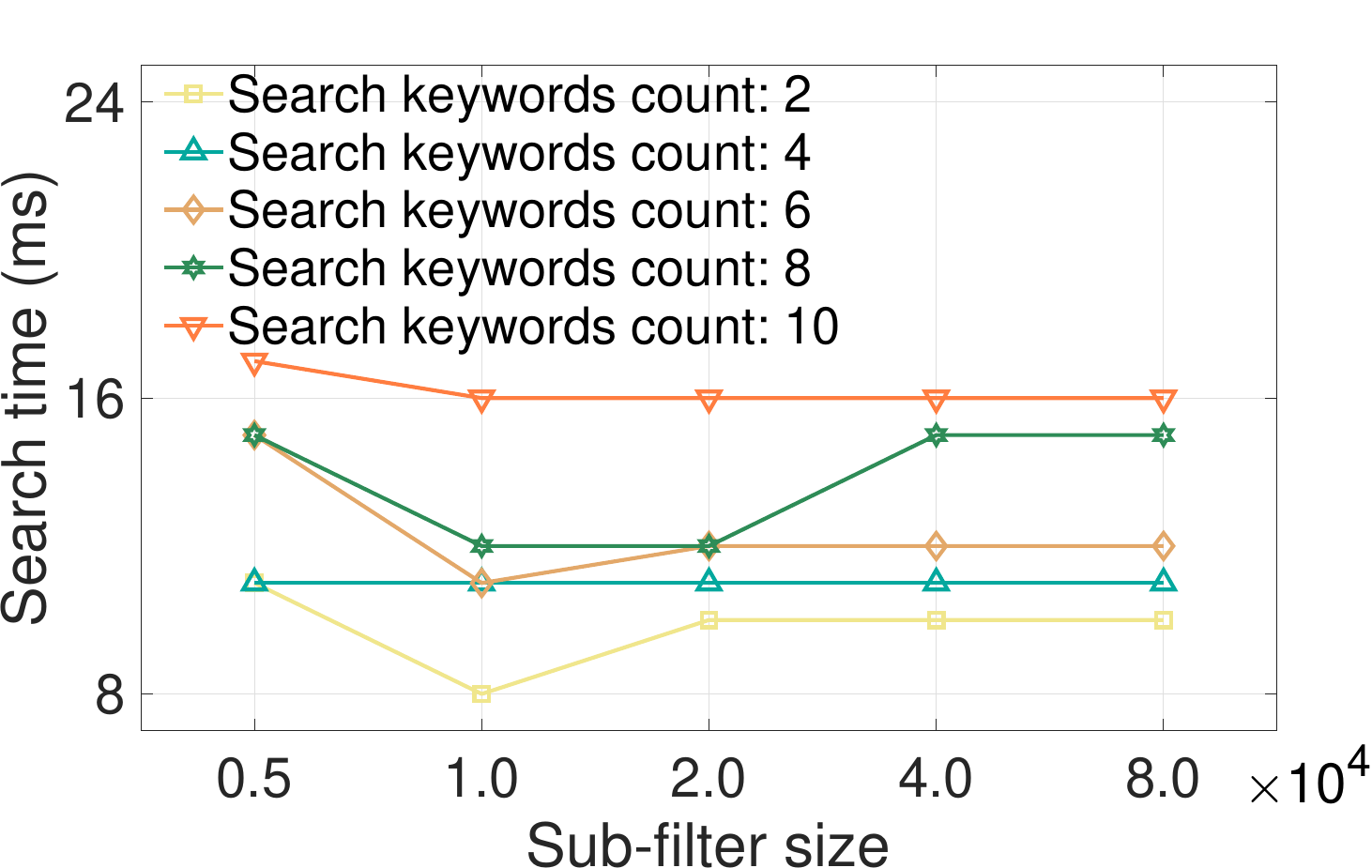}}
\subfigure[Youtube dataset]
{\includegraphics[width=0.32\linewidth]{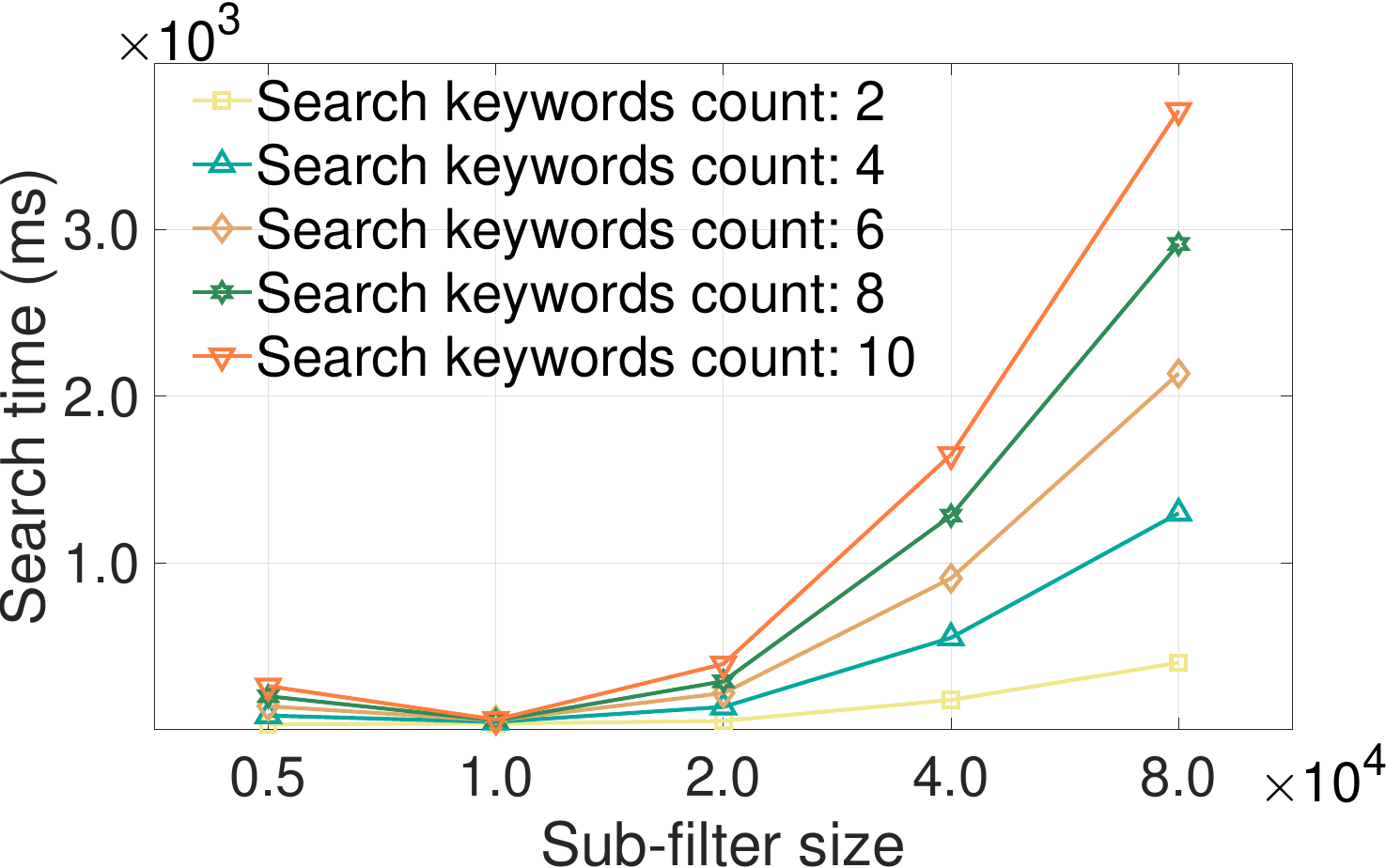}}
\subfigure[Gplus dataset]
{\includegraphics[width=0.32\linewidth]{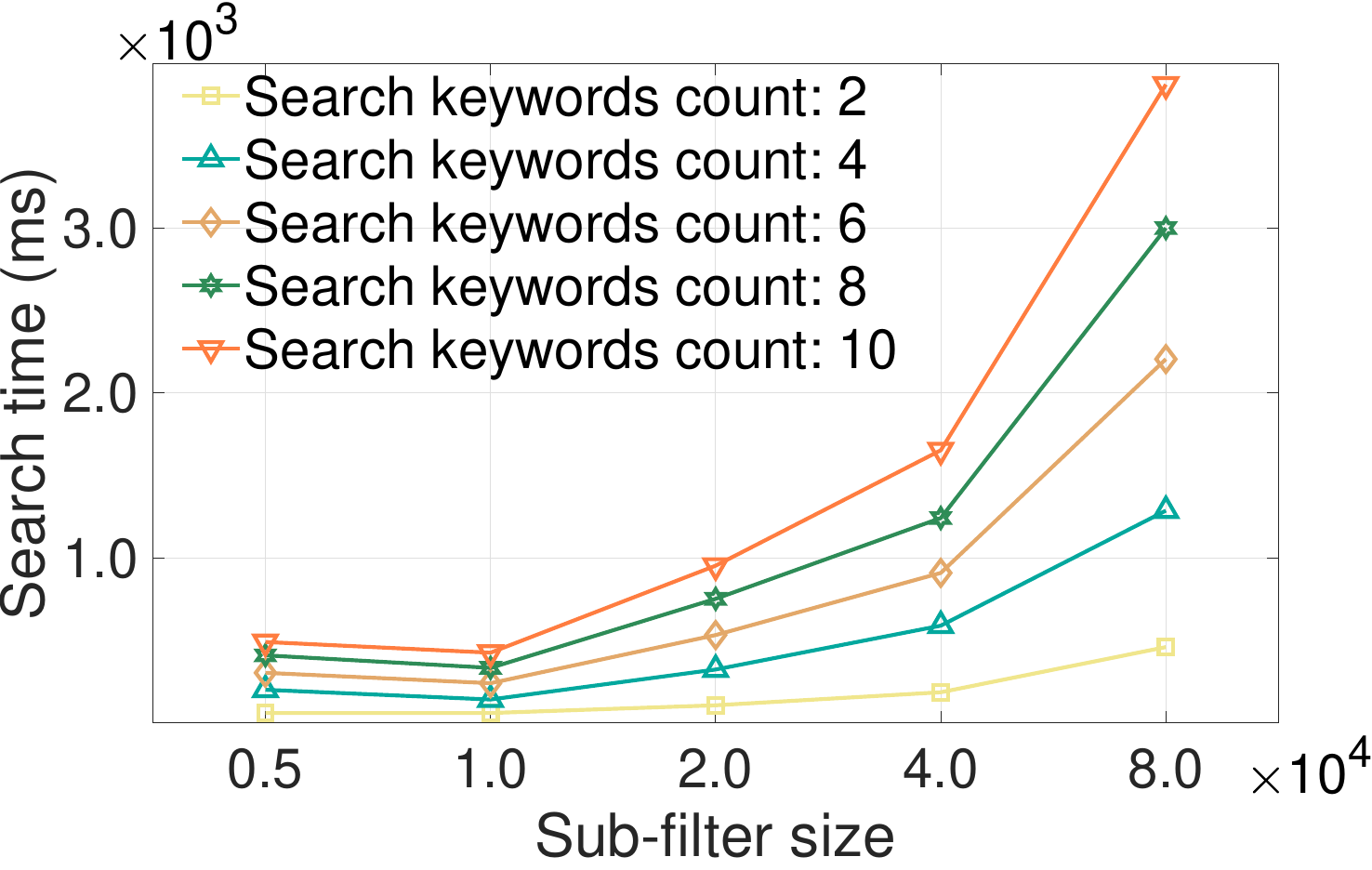}}
\renewcommand{\figurename}{Fig.}
\caption{Effect of the \emph{sub-filter} size.}
\label{Effect of the sub-filter size}
% \vspace{-0.5em}
\end{figure*}

\begin{figure}[t]
\centering
\setlength{\abovecaptionskip}{0.cm}
\includegraphics[width=0.6\linewidth]{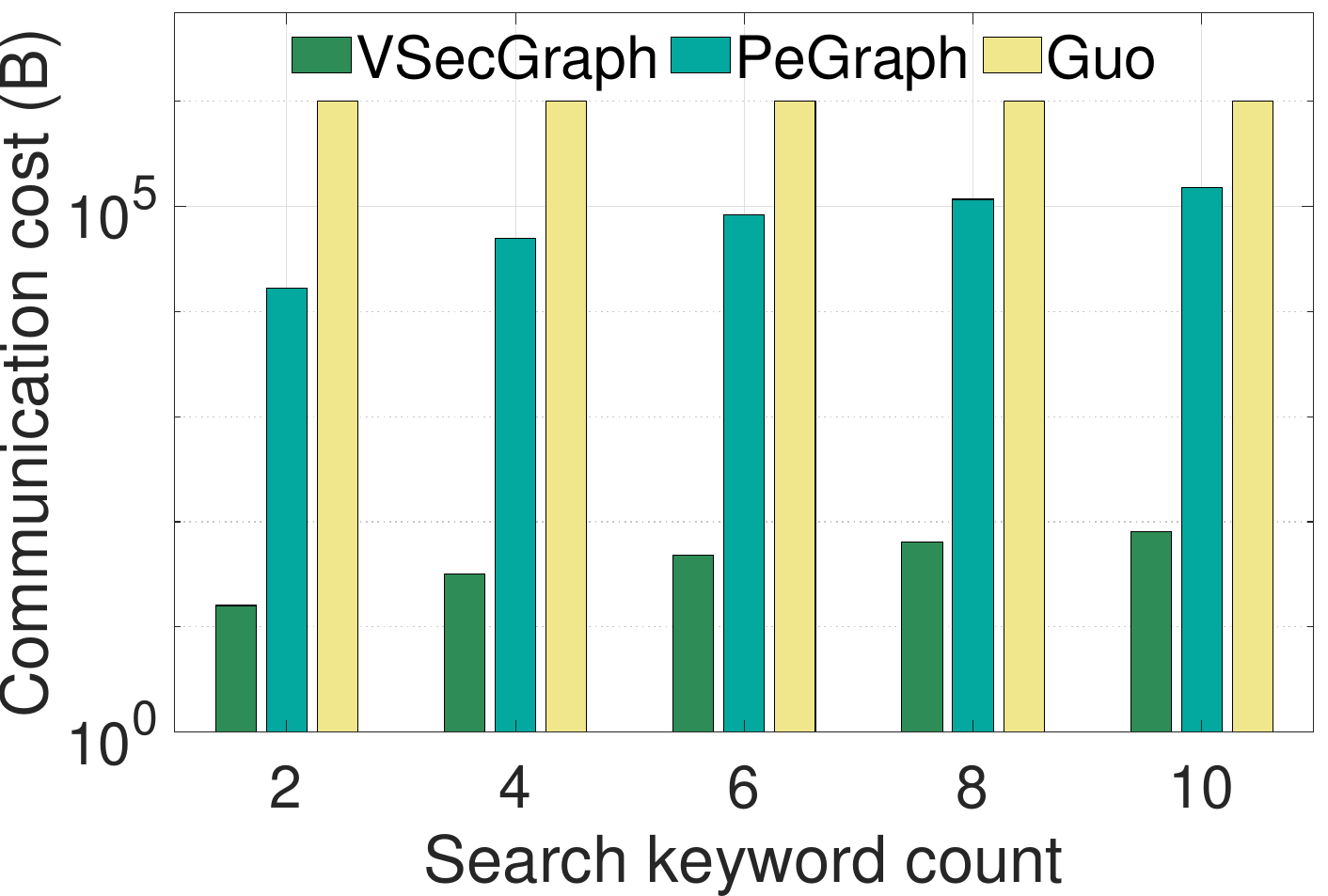}
\renewcommand{\figurename}{Fig.}
\caption{Search communication cost of \textit{VSecGraph}, PeGraph and Guo.}
\label{Search communication cost of VSecGraph, PeGraph and Guo}
\vspace{-0.5em}
\end{figure}

\subsection{Performance Evaluation}
In our experiments, we default to setting the fingerprint length, the \emph{sub-filter} size, and bucket size in \emph{SecGraph} as 16 bits, 10,000, and 4, respectively, according to the recommendation of LDCF \cite{DBLP:conf/icde/ZhangC0R21} since there is an acceptable balance between the accuracy and speed of membership check under these parameters. 

\subsubsection{Update Performance of SecGraph} We first evaluate the insertion performance of PeGraph and \emph{SecGraph} as the used dataset size ratio increases from 20\% to 100\%. As we can see from Fig. \ref{Insertion performance of SecGraph and PeGraph in distinct datasets}, the insertion time costs of the two schemes increase with the used dataset size ratio. It takes 6,128 ms, 103,790 ms, and 383,615 ms for \emph{SecGraph} to construct the encrypted database using the entire Email, Youtube, and Gplus datasets, respectively. It is worth noting that \emph{SecGraph} is considerably up to 58$\times$ faster than PeGraph. The reason is that PeGraph needs to perform a large number of expensive \textit{Exp} operations to calculate $xtag$ for each keyword-identifier pair to build \emph{XSet} in two cloud servers, while \emph{SecGraph} only computes the same number of hash values due to the design of the LDCF-encoded \emph{XSet}. We further experiment that \emph{SecGraph} takes on average 0.36 ms and 0.37 ms to insert and delete a pair, respectively. The experiment results demonstrate that it has better update performance than PeGraph. 

\subsubsection{Search Performance of SecGraph} Here, we evaluate the search performance of all considered schemes as the search keywords count increases from 2 to 10. The number of $id$ corresponding to each keyword is 130. Fig.\ref{Search performance of SecGraph and PeGraph in distinct datasets} depicts that, for every dataset, \emph{SecGraph} takes less time to search compared with PeGraph. Specifically, \textit{SecGraph} yields up to 208$\times$ improvement in search time compared with \mbox{PeGraph}. First, there are two reasons why the search procedure in \emph{SecGraph} is faster than that in PeGraph: (1) \emph{SecGraph} only requires one roundtrip to send search tokens to the server due to the design of proxy-token generation method, while PeGraph requires two roundtrips; (2) \textit{SecGraph} generates one $xtag$ just by computing a hash value to check whether or not it exists due to the design of the \emph{LDCF-encoded XSet}, while PeGraph needs to execute an expensive \textit{Exp} operation to calculate a $xtag$.

\subsubsection{Update Performance of VSecGraph}
We evaluate the insertion performance of \emph{VSecGraph} and \emph{VSecGraph-A}, where the number of elements in each accumulator group is set to 200. The experiment results are shown in Fig.\ref{Insertion performance of VSecGraph and VSecGraph-A in distinct datasets}. Due to the computational resources required for constructing the accumulator, the inserting time of \emph{VSecGraph-A} is longer than that in \emph{VSecGraph}. 

\subsubsection{Search and Verification Performance of VSecGraph}
Furthermore, we evaluate the search performance of \textit{VSecGraph} and Guo's scheme as the $(w,id)$ pairs count increases from $2\times10^4$ to $10^5$. Under these data scales, the GGM\_SIZE parameter in the Puncturable PRF varies from 4,000 to 14,000 to support a greater number of key derivations. The number of $id$ corresponding to each keyword is 130. According to Fig.\ref{Search performance of VSecGraph and Guo in the different number of search keywords} the search performance of \textit{VSecGraph} is much better than Guo, and as the search keyword count increases from 2 to 10, the search time for both methods rises, due to the need for additional existence checks of $(w, id)$ pairs. Specifically, \textit{VSecGraph} yields up to 1729$\times$ faster than Guo in search time. What's more, the search speed improvement becomes more pronounced with larger datasets. This occurs because larger datasets significantly degrade the performance of puncturable PRF. 

We further evaluate the verification performance of \textit{VSecGraph}, \textit{VSecGraph-A} and Guo which is shown in Fig.\ref{Verification performance of VSecGraph, VSecGraph-A and Guo in the different number of search keywords}. The verification performance of \textit{VSecGraph} yields up to 225$\times$ faster than Guo and the improvement is more pronounced on smaller datasets, as \textit{VSecGraph} requires fewer \textit{sub-filter} verifications. Both \textit{VSecGraph} and \textit{VSecGraph-A} verification time grows slightly as the $(w, id)$ pairs count increases from $2\times10^4$ to $10^5$ or the search keywords count grows from 2 to 10 since it is required to verify more \textit{sub-filters} (However, the number of \textit{sub-filters} requiring verification does not exceed the number of $(w, id)$ pairs that need to be checked for existence.). Guo's verification time remains constant regardless of dataset size, as it solely depends on the number of $id$ associated with $w_1$. The verification time of \textit{VSecGraph} is slightly shorter than that of \textit{VSecGraph-A}, as \textit{VSecGraph-A} relies on accumulator-based verification, which is slower than the multiset hash used in \textit{VSecGraph}. 

To further evaluate the performance of \textit{VSecGraph} on large datasets, we conducted experiments on \textit{VSecGraph-A} using the graph datasets in Table.\ref{Summary of the graph database used in our experiments}, and the results are shown in Fig.\ref{Search and Verification performance of VSecGraph-A in distinct datasets}. It can be observed that as the dataset size increases, the time required for search and verification also increases, yet the system maintains a high level of performance. Additionally, even on the largest dataset, \textit{VSecGraph-A} is capable of completing verification within 45 ms.

\subsubsection{Effect of Parameters} Next, we evaluate the search time cost in \emph{SecGraph} as the \emph{sub-filter} size (i.e., the number of fingerprints contained in a \emph{sub-filter}) increases from 5,000 to 80,000 under different search keywords count settings. Fig.\ref{Effect of the sub-filter size} shows that, for every dataset, the search time cost is the lowest when the \emph{sub-filter} size is 10,000, which demonstrates the rationality of our default \emph{sub-filter} size setting. We analyze that setting the \emph{sub-filter} size too small will lead to a large number of \emph{sub-filters} being loaded during a search procedure, resulting in a high number of \emph{ocall}s and a long \emph{ocall} time. However, setting it too large can reduce the number of \emph{sub-filters} to be loaded appropriately, but it increases the amount of transferred data and also increases the \emph{OCall} time.

% \begin{figure*}[t]
% \vspace{-1.5em}
% \centering
% \setlength{\abovecaptionskip}{0.cm}
% \setcounter{subfigure}{0}
% \subfigure[Email dataset]
% {\includegraphics[width=0.32\linewidth]{exp/exp-sgx_storage-email-enron.pdf}}
% \subfigure[Youtube dataset]
% {\includegraphics[width=0.315\linewidth]{exp/exp-sgx_storage-youtube.pdf}}
% \subfigure[Gplus dataset]
% {\includegraphics[width=0.32\linewidth]{exp/exp-sgx_storage-ego.pdf}}
% \renewcommand{\figurename}{Fig.}
% \caption{Enclave storage cost in distinct datasets.}
% \label{exp:exp3}
% \vspace{-1em}
% \end{figure*}

\subsubsection{Communication Cost of Search and Verification}
We evaluate the communication cost of \textit{VSecGraph}, PeGraph, and Guo on the search and verification process. For the ciphertext messages of PeGraph, we adopt the approach described in study\cite{DBLP:conf/ccs/LaiYSLLL19} since they use the same technique, where the size of \textit{stag}, \textit{xtoken} and the $w$ is 16 B, is 128 B and 8 B respectively. As shown in Fig.\ref{Search communication cost of VSecGraph, PeGraph and Guo}, \textit{VSecGraph} exhibits significantly lower communication overhead. With only two search keywords, its communication overhead is merely 1/62,718 of Guo's and 1/3,121 of PeGraph's. 

\textit{VSecGraph} has no communication overhead for verification since it processes verification in the enclave. However, Guo's communication overhead is 32 B.

\section{Related Work}

\subsection{Searchable Symmetric Encryption}
\subsubsection{Single-keyword Searchable Symmetric Encryption}
Song et al.\cite{DBLP:conf/sp/SongWP00} propose the first searchable symmetric encryption (SSE) scheme on outsourced encrypted data with a linear search time. Curtmola et al.\cite{DBLP:conf/ccs/CurtmolaGKO06} further optimize the performance of Song's work to a sub-linear search time by using the inverted index. Based on these works, many SSE schemes
further focus on the dynamic capabilities\cite{DBLP:conf/ccs/Bost16, DBLP:conf/ccs/ChamaniPPJ18, DBLP:conf/ndss/DemertzisCPP20, DBLP:conf/uss/MondalCD024, Zhu/TDSC24}, verifiability\cite{DBLP:journals/tdsc/GuoZWJ23, Zhu/TDSC24} and performance optimization\cite{DBLP:conf/ndss/DemertzisCPP20, DBLP:conf/uss/MondalCD024}.

\subsubsection{Conjunctive Searchable Symmetric Encryption} 
Cash et al.\cite{DBLP:conf/crypto/CashJJKRS13} propose the first conjunctive SSE based on the Oblivious Cross-Tags (OXT) technique to support sub-linar multi-keyword search. After that, Lai et al.\cite{DBLP:conf/ccs/LaiPSLMSSLZ18} proposed HXT (Hidden Cross-Tags) to eliminate the intersection result pattern (IP) leakage inherent in OXT, however, HXT requires an additional communication round and fails to address Keyword-Pair Result Pattern (KPRP) leakage. Wang et al.\cite{DBLP:conf/ccs/Wang0W0LG24} proposed Doris to address both limitations. While HXT and Doris employ pure cryptographic techniques to aggregate \textit{xtoken} and utilize Bloom filters and XOR filters respectively for \textit{XSet} encoding, these approaches inherently prevent support for update operations. Other works focus on strengthening privacy preservation\cite{DBLP:conf/ccs/LaiPSLMSSLZ18,DBLP:journals/tifs/LiWZXX24}, functionality\cite{DBLP:journals/tifs/LiWZXX24, DBLP:conf/ccs/Wang0W0LG24,DBLP:conf/ndss/PatranabisM21}, and accelerate search processing\cite{DBLP:journals/tc/Jiang}.

\subsubsection{Verifiable Searchable Symmetric Encryption} 
All of the above works assume a semi-honest threat model, where the server storing the data does not maliciously tamper with the search results. In order to deal with the incorrect/incomplete results, some researches have been proposed to verify the single-keyword search results\cite{DBLP:conf/icc/ChaiG12, DBLP:journals/tnse/YangWQLWZQ24, DBLP:journals/tc/WangCHYX15, DBLP:conf/esorics/ZhangWWS019} and the conjunctive search results\cite{DBLP:journals/tdsc/GuoLTCL24, TC/Li25, DBLP:conf/infocom/SunLLH015} based on the \textit{proof} returned from the server, which brings extra communication cost and client-side computational overhead. These SSE schemes are all designed for textual data and fail to specialize in structured encryption, such as graphs.

\subsection{Encrypted Graph Search}
Structure encryption\cite{DBLP:conf/ndss/CashJJJKRS14} is a promising cryptographic technique to provide private adjacent vertices search over encrypted graphs. For example, Chase et al. \cite{DBLP:conf/asiacrypt/ChaseK10} firstly extended the notion of structured encryption to the setting of arbitrarily structured data including complex graph database. Shen et al. \cite{DBLP:conf/trustcom/TengCSB16} further proposed a confidentiality-preserving scheme PP$k$NK to achieve top-$k$ nearest keyword search on graphs. Du et al.\cite{DBLP:journals/tkde/DuWWCJM22} proposed a secure shortest path search for large graphs using additively homomorphic encryption. Then Du et al.\cite{DBLP:journals/tdsc/DuJWCZ23} further proposed an SGX-based shortest path search scheme named $\text{SGX}^2$ to improve the search performance. Lai et al. \cite{DBLP:conf/ccs/LaiYSLLL19} proposed a confidentiality-preserving graph scheme GraphSE$^2$ to facilitate parallel and encrypted graph database access on large social graphs. Recently, Wang et al.\cite{DBLP:journals/tifs/WangZJY22} proposed PeGraph using the OXT technique \cite{DBLP:conf/crypto/CashJJKRS13} and combined with a lightweight secure multi-party computation technique to expand the search types and improve performance. To the best of our knowledge, only PeGraph supports conjunctive search over encrypted graphs. However, it incurs high communication costs and high computation costs, and it does not support updating and verification. Some works\cite{DBLP:journals/algorithmica/GoodrichTT11, DBLP:journals/isci/ZhuWZC19, DBLP:journals/tkde/WuLSX23} have focused on verifying search results for graph databases via authenticated data structure (\textit{ADS}) like Merkle tree and accumulator. However, these works do not support conjunctive search and may lead to additional privacy leakage and communication overhead.

\section{Conclusion}
In this paper, we propose an SGX-based efficient and confidentiality-preserving graph search scheme \emph{SecGraph} to provide secure and efficient search services over encrypted graphs. Firstly, we designed a new proxy-token generation method to reduce the communication cost. Then, we design an LDCF-encoded \emph{XSet} to reduce the computation cost. Moreover, we design a \emph{Twin-TSet} to enable encrypted search over dynamic graphs. We further extend \emph{SecGraph} into \emph{VSecGraph} to enable verifying the search results. Finally, experiments and security analysis show that our schemes can achieve verifiable, secure, and efficient search over dynamic graphs. In the future, we will explore efficient and confidentiality-preserving graph search schemes that support more plentiful search services, such as range search, boolean search, etc.

% if have a single appendix:
%\appendix[Proof of the Zonklar Equations]
% or
%\appendix  % for no appendix heading
% do not use \section anymore after \appendix, only \section*
% is possibly needed

% use appendices with more than one appendix
% then use \section to start each appendix
% you must declare a \section before using any
% \subsection or using \label (\appendices by itself
% starts a section numbered zero.)
%

% \appendices
% \section{Proof of the First Zonklar Equation}
% Appendix one text goes here.

% % you can choose not to have a title for an appendix
% % if you want by leaving the argument blank
% \section{}
% Appendix two text goes here.

% % use section* for acknowledgment
% \section*{Acknowledgment}

% The authors would like to thank...

% % Can use something like this to put references on a page
% % by themselves when using endfloat and the captionsoff option.
% \ifCLASSOPTIONcaptionsoff
%   \newpage
% \fi

% trigger a \newpage just before the given reference
% number - used to balance the columns on the last page
% adjust value as needed - may need to be readjusted if
% the document is modified later
%\IEEEtriggeratref{8}
% The "triggered" command can be changed if desired:
%\IEEEtriggercmd{\enlargethispage{-5in}}

% references section

% can use a bibliography generated by BibTeX as a .bbl file
% BibTeX documentation can be easily obtained at:
% http://mirror.ctan.org/biblio/bibtex/contrib/doc/
% The IEEEtran BibTeX style support page is at:
% http://www.michaelshell.org/tex/ieeetran/bibtex/

\bibliographystyle{IEEEtran}

\bibliography{SecGraph.bib}
\end{document}